\documentclass[11pt]{article}
\usepackage{fullpage}

\usepackage{times}
\usepackage{comment,amsfonts,amssymb,amsmath,amsthm,graphicx,algorithm,algorithmic}
\newcommand{\commentout}[1]{}
\usepackage{multirow}
\usepackage{boldline}

\ifx\pdftexversion\undefined
\usepackage[colorlinks,linkcolor=black,filecolor=black,citecolor=black,urlco
lor=black,pdfstartview=FitH]{hyperref}
\else
\usepackage[colorlinks,linkcolor=blue,filecolor=blue,citecolor=blue,urlcolor
=blue,pdfstartview=FitH]{hyperref}
\fi

\newcommand{\alert}[1]{\textbf{\color{red}
[[[#1]]]}\marginpar{\textbf{\color{red}**}}\typeout{ALERT:
\the\inputlineno: #1}}

\def\eps{\epsilon}
\def\tO{\tilde{O}}
\def\cP{{\cal P}}

\newcommand{\polylog}{{\rm polylog}}

\newcommand{\E}{{\mathbb{E}}}

\newcommand{\mommit}[1]{}
\newcommand{\namedref}[2]{\hyperref[#2]{#1~\ref*{#2}}}
\newcommand{\sectionref}[1]{\namedref{Section}{#1}}

\newcommand{\theoremref}[1]{\namedref{Theorem}{#1}}

\newcommand{\claimref}[1]{\namedref{Claim}{#1}}
\newcommand{\lemmaref}[1]{\namedref{Lemma}{#1}}

\newtheorem{theorem}{Theorem}
\newtheorem{lemma}{Lemma}
\newtheorem{corollary}[lemma]{Corollary}
\newtheorem{remark}{Remark}
\newtheorem{claim}[lemma]{Claim}

\usepackage{pdfsync}
\usepackage{authblk}
\begin{document}

\title{Linear-Size Hopsets with Small Hopbound, \\ and  Distributed Routing with Low Memory}

\author[1]{Michael Elkin\thanks{This research was supported by the ISF grant No. (724/15).}}
\author[1]{Ofer Neiman\thanks{Supported in part by ISF grant No. (523/12) and by BSF grant No. 2015813.}}

\affil[1]{Department of Computer Science, Ben-Gurion University of the Negev,
Beer-Sheva, Israel. Email: \texttt{\{elkinm,neimano\}@cs.bgu.ac.il}}

\date{}
\maketitle

\begin{abstract}
For a positive parameter $\beta$,  the {\em $\beta$-bounded distance} between a pair of vertices $u,v$ in a weighted undirected graph $G = (V,E,\omega)$ is the length of the shortest $u-v$ path in $G$ with at most $\beta$ edges, aka {\em  hops}.
For $\beta$ as above and $\eps > 0$, a {\em $(\beta,\eps)$-hopset} of $G = (V,E,\omega)$ is a graph $G' =(V,H,\omega_H)$ on the same vertex set, such that all distances in $G$ are $(1+\eps)$-approximated by $\beta$-bounded distances in $G \cup G'$.

Hopsets are a fundamental graph-theoretic and graph-algorithmic construct, and they are widely used for distance-related problems in a variety of computational settings. Currently existing constructions of hopsets  produce hopsets either with $\Omega(n \log n)$ edges, or with a hopbound $n^{\Omega(1)}$. In this paper we devise a construction of {\em linear-size} hopsets with hopbound  (ignoring the dependence on $\eps$) $(\log n)^{\log^{(3)} n + O(1)}$.  This improves the previous bound almost exponentially.

We also devise efficient implementations of our construction in PRAM and distributed settings. The only existing PRAM algorithm \cite{EN16} for computing hopsets with a constant (i.e., independent of $n$) hopbound requires $n^{\Omega(1)}$ time. We devise a PRAM algorithm with polylogarithmic running time for computing hopsets with a constant hopbound, i.e., our running time is {\em exponentially} better than the previous one.  Moreover, these hopsets are also significantly sparser than their counterparts from
 \cite{EN16}.

We use our hopsets to devise a distributed routing scheme that exhibits near-optimal tradeoff between individual memory requirement $\tO(n^{1/k})$ of vertices throughout preprocessing and routing phases of the  algorithm, and stretch $O(k)$, along with a near-optimal construction time $\approx D + n^{1/2 + 1/k}$, where $D$ is the hop-diameter of the input graph. Previous distributed routing algorithms either suffered  from a prohibitively large memory requirement $\Omega(\sqrt{n})$, or had a near-linear construction time, even on graphs with small hop-diameter $D$.
\end{abstract}

\thispagestyle{empty}
\newpage
\setcounter{page}{1}
\section{Introduction}
\label{sec:introd}

\subsection{Hopsets}

Consider a weighted undirected graph $G = (V,E,\omega)$. Consider another graph $G_H = (V,H,\omega_H)$ on the same vertex set, that satisfies that for every $(u,v) \in H$, $\omega_H(u,v) \ge d_G(u,v)$, where $d_G(u,v)$ stands for the {\em distance} between $u$ and $v$ in $G$. For a positive integer parameter $\beta$, and a positive parameter $\eps > 0$, the graph $G_H$ is called a $(\beta,\eps)$-{\em hopset} of $G$, if for every pair of vertices $u,v \in V$, the {\em $\beta$-bounded distance} $d^{(\beta)}_{G \cup G_H}(u,v)$ between $u$ and $v$ in $G \cup G_H$ is within a factor $1 + \eps$ from the distance $d_G(u,v)$ between these vertices in $G$, i.e.,
$d_G(u,v)  \le  d_{G \cup G_H}^{(\beta)}(u,w) \le (1+ \eps) d_G(u,v)$.
The {\em $\beta$-bounded distance} between $u$ and $v$ in $G$ is the length of the shortest {\em $\beta$-bounded $u$-$v$-path}  $\Pi_{u,v}$ in $G$, i.e., of a shortest path $\Pi_{u,v}$ with at most $\beta$ hops/edges.

Hopsets constitute an important algorithmic and combinatorial object, and they are extremely useful for approximate distance computations in a large variety of computational settings, including the distributed model \cite{N14,HKN16,EN16,EN16a,E17}, parallel (PRAM) model \cite{KS97,SS99,C00,MPVX15,EN16}, streaming model \cite{N14,HKN16,EN16,E17},  dynamic setting \cite{B09,HKN14}, and for routing \cite{LP15,EN16}.
The notion of hopsets was coined in a seminal paper of Cohen \cite{C00}. (Though some first implicit constructions appeared a little bit earlier \cite{UY91,KS97,SS99,C97}.

 In \cite{C00}, Cohen also devised landmark constructions of hopsets. Specifically, she showed that for any parameters $\kappa = 1,2,\ldots$, and $\eps > 0$, and any $n$-vertex graph $G$, there exists a $(\beta,\eps)$-hopset with $O(n^{1+1/\kappa} \cdot \log n)$ edges, with $\beta = O\left({{\log n} \over \eps}\right)^{O(\log \kappa)}$. Moreover, she showed that hopsets with comparable attributes can be efficiently constructed in the centralized and parallel settings.
Specifically, in the centralized setting her algorithm takes an additional parameter $\rho > 0$, and constructs a hopset of size $O(n^{1+1/\kappa} \cdot \log n)$ edges in $O(|E| n^\rho)$ time, with hopbound $\beta = O\left({{\log n} \over \eps}\right)^{{{O(\log \kappa)} \over \rho}}$. In the PRAM model, her hopset has size $O(n^{1+1/\kappa}) \cdot O(\log n)^{O({{\log \kappa} \over \rho})}$, and $\beta$ is as above, and it is constructed in roughly $O(\beta)$ (i.e., polylogarithmic time), and with $O(|E| \cdot n^\rho)$ work.

Cohen  \cite{C00} also raised the open question of existence and efficient constructability of hopsets with better attributes; she called it an ``intriguing research problem". In the two decades that passed since Cohen's work \cite{C00}, numerous algorithms for constructing hopsets
in various settings were devised \cite{B09,HKN14,N14,MPVX15,EN16}.
The hopsets of \cite{B09,HKN14,N14,HKN16} are no better than those of \cite{C00} in terms of their attributes. (But they are constructed in settings to which Cohen's algorithm is not known to apply.) The algorithm of \cite{MPVX15} builds hopsets of size $O(n)$, but with a large hopbound $\beta = n^{\Omega(1)}$.
In \cite{EN16}, the current authors showed that there always exist $(\beta,\eps)$-hopsets of size $O(n^{1+1/\kappa} \cdot \log n)$, with {\em constant} (i.e., independent of $n$) hopbound $\beta = O({{\log \kappa} \over \eps})^{\log \kappa +O(1)}$.
Abboud et al. \cite{ABP17} showed a lower bound $\beta = \Omega({1 \over {\eps \log \kappa}})^{\log \kappa}$ on the hopbound of hopsets with size $O(n^{1+1/\kappa})$.

In the PRAM model, \cite{EN16} showed two results. First, that for parameters $\eps,\rho,\zeta > 0$ and $\kappa = 1,2,\ldots$, hopsets with constant hopbound
$\beta = O\left({{\log \kappa + 1/\rho} \over {\eps \zeta}}\right)^{\log \kappa + 2/\rho + O(1)}$ can be constructed in time $O(n^\zeta \cdot \beta)$, using $O(|E| \cdot n^{\rho + \zeta})$ work. Second, \cite{EN16} devised a PRAM algorithm with polylogarithmic time
$O\left({{\log n} \over \eps}\right)^{\log \kappa + 1/\rho + O(1)}$, albeit with a polylogarithmic $\beta$ roughly equal to the running time.
The hopset's size is $O(n^{1+1/\kappa} \cdot \log n)$ in the second result as well.

These results of \cite{EN16} strictly outperformed the longstanding tradeoff of \cite{C00} in all regimes, and proved existence of hopsets with constant hopbound. However, they left a significant room for improvement.
 First, the hopsets of \cite{EN16}  have $\Omega(n \log n)$ edges in all regimes. As a result, the only currently existing sparser hopsets \cite{UY91,KS97,SS99,MPVX15} have hopbound of $n^{\Omega(1)}$. Hence it was left open in \cite{EN16} if hopsets with size $o(n \log n)$ and hopbound $n^{o(1)}$ exist.
Second, the question  whether hopsets with constant hopbound can be constructed in polylogarithmic PRAM time was left open in \cite{EN16}. Indeed, the hopsets' algorithm of \cite{EN16} for constructing such hopsets requires $n^{\Omega(1)}$ PRAM time.

In this paper we answer both these questions in the affirmative. Specifically, we show that for any $\kappa = 1,2,\ldots$,   there exists a $(\beta(\kappa,\eps),\eps)$-hopset, for all $\eps > 0$ {\em simultaneously}, with size $O(n^{1+1/\kappa})$ and $\beta = \beta(\kappa,\eps) = O\left({{\log \kappa } \over \eps}\right)^{\log \kappa + O(1)}$.
In particular, by setting $\kappa = \log n$, we obtain a {\em linear-size} hopset, and its hopbound is $\beta = O\left({{\log \log n} \over\eps}\right)^{\log\log n + O(1)}$. This is an almost exponential improvement of the previously best known upper bound (due to \cite{MPVX15}) on the hopbound of linear-size hopsets.

Second, in the PRAM setting, for any $\kappa = 1,2,\ldots$, $\eps > 0$, and $\rho > 0$, our algorithm constructs hopsets of size
$O(n^{1+1/\kappa} \cdot \log^* n)$, {\em constant hopbound} $\beta = O\left({{(\log \kappa + 1/\rho})^2 \over \eps}\right)^{\log \kappa + 1/\rho + O(1)}$, in {\em polylogarithmic} time $O((\log n)/\eps)^{\log\kappa + 1/\rho + O(1)}$, using work $O(|E| \cdot n^\rho)$.
This is an exponential improvement of the parallel running time of the previously best-known algorithm for constructing hopsets with constant hop-bound due to \cite{EN16}.
We can also shave the $\log^* n$ factor in the size, that allows for a linear-size hopset, but then $\beta$ grows to be roughly the running time.
See Table \ref{table:pram} for a concise comparison between existing and new results concerning hopsets in the PRAM model.

Our algorithm also provides improved results for constructing hopsets in distributed CONGEST and Congested Clique models (see \sectionref{sec:dist} for definition of these models). In all these models, our algorithm constructs linear-size hopsets.
Also, the running time of our algorithms in all these models is purely combinatorial, i.e., it does not depend on the aspect ratio $\Lambda$ of the graph. \footnote{The aspect ratio $\Lambda$ of a graph $G$ is given by $\Lambda = {{\max_{u,v \in V} d_G(u,v)} \over {\min_{u,v\in V, u\neq v} d_G(u,v)}}$.} In contrast, previous algorithms \cite{N14,HKN16,EN16} for constructing hopsets in the CONGEST model all have running time proportional to $\log \Lambda$.

\begin{table*}[h]
\centering
\small
\begin{tabular}{ |l|l|l|l|l| }
\hline
 Reference & Size & $\beta$ = Hopbound & Time & Work \\
\hline
\cite{KS97,SS99} & $O(n)$ & $O(\sqrt{n})$ & $O(\sqrt{n}\log n)$ & $O(|E|\cdot\sqrt{n})$\\
\hlineB{2}
\multirow{2}{*}{\cite{MPVX15}} & $O(n)$ & $O(n^{\frac{4+\alpha}{4+2\alpha}})$ & $O(n^{\frac{4+\alpha}{4+2\alpha}})$ & $O(|E|\cdot\log^{3+\alpha}n)$\\
\cline{2-5}
& $O(n)$ & $O(n^\alpha)$ ($\alpha\ge\Omega(1)$)& $O(n^{\alpha})$ & $O(|E|\cdot\log^{O(1/\alpha)}n)$\\
\hlineB{2}
\cite{C00} & $n^{1+1/\kappa}\cdot (\log n)^{O(\frac{\log\kappa}{\rho})}$ & $(\log n)^{O(\frac{\log\kappa}{\rho})}$ & $(\log n)^{O(\frac{\log\kappa}{\rho})}$  &  $O(|E|\cdot n^\rho)$ \\
\hlineB{2}
\multirow{2}{*}{\cite{EN16}}& $O(n^{1+\frac{1}{\kappa}}\cdot \log n)$ & $\left(\log n\right)^{\log\kappa+\frac{1}{\rho} + O(1)}$ & $O(\beta)$  & $O(|E|\cdot n^\rho)$ \\
\cline{2-5}
& $O(n^{1+\frac{1}{\kappa}}\cdot\log n)$ & $\left(\frac{\log\kappa+\frac{1}{\rho}}{\zeta}\right)^{\log\kappa+O(\frac{1}{\rho}) }$ & $O(n^\zeta\cdot\beta)$  & $O(|E|\cdot n^{\rho+\zeta})$ \\
\hlineB{2}
\multirow{2}{*}{\bf This paper}&$O(n^{1+\frac{1}{\kappa}})$ & $\left(\log n\right)^{\log\kappa+\frac{1}{\rho} + O(1)}$ & $O(\beta)$  & $O(|E|\cdot n^\rho)$ \\
\cline{2-5}
& $O(n^{1+\frac{1}{\kappa}}\cdot \log^*n)$ & $\left(k+\frac{1}{\rho}\right)^{O(\log\kappa+\frac{1}{\rho})}$ & $(\log n)^{\log\kappa+\frac{1}{\rho} + O(1)}$  & $O(|E|\cdot n^\rho)$ \\
\hline

\end{tabular}
\caption{Comparison between $(\beta,\epsilon)$-hopsets in the PRAM model (neglecting the dependency on $\epsilon$). The hopsets of \cite{KS97,SS99} provide exact distances. 
}\label{table:pram}
\end{table*}

\subsection{Distributed Routing with Small Memory}

The main application of our novel hopsets' construction is to the problem of distributed construction of compact routing schemes. A {routing} scheme has two main phases: the preprocessing phase, and the routing phase. In the preprocessing phase, each vertex is assigned a routing table and a routing label.\footnote{In this paper we only consider {\em labeled} or {\em name-dependent} routing, in which vertices are assigned labels by the scheme. There is also a large body of literature on name-independent routing schemes; cf. \cite{AGMNT08} and the references therein. However, a lower bound \cite{LP13} shows that constructing a name-independent routing scheme with stretch $\rho$ requires $\tilde{\Omega}(n/\rho^2)$ time in the CONGEST model. See also \cite{GGHI13} for lower bounds on the {\em communication} complexity of the preprocessing phase of distributed routing.}
In the routing phase, a vertex $u$ gets a message $M$ with a short header $\mathit{Header}(M)$ and with a destination label $\mathit{Label}(v)$ of a vertex $v$, and based on its routing table $\mathit{Table}(u)$, on $\mathit{Label}(v)$, and on $\mathit{Header}(M)$,  the vertex $u$ decides to which neighbor $x \in \Gamma(u)$ to forward the message $M$, and which header to attach to the message. The {\em stretch} of a routing scheme is the worst-case ratio between the length of a path on which a message $M$ travels, and the graph distance between the message's origin and destination.

Due to its both theoretical and practical appeal, routing is a central problem in distributed graph algorithms \cite{PU89,ABLP90,TZ01-spaa,C01,EGP03,GP03,AGM04,C13}. A landmark routing scheme was devised in \cite{TZ01-spaa}. For an integer $k \ge 1$, the stretch of their scheme is $4k-5$, the tables are of size $O(n^{1/k} \log^{1-1/k} n)$,  the labels are of size $O(k  \log n)$, and the headers are of size $O(\log n)$. Chechik \cite{C13} improved this result, and devised a scheme with stretch $3.68 k$, and other parameters like in \cite{TZ01-spaa}.

An active thread of research \cite{ABLP90,AP92,LP13,GGHI13,LP15,EN16a} focuses on efficient implementation of the {\em preprocessing} phase  of routing in the distributed CONGEST model, i.e., computing compact tables and short labels that enable for future low-stretch routing. This problem was raised in a seminal paper by Awerbuch, Bar-Noy, Linial and Peleg \cite{ABLP90}, who devised a routing scheme with stretch $2^{O(k)}$, overall memory requirement $\tO(n^{1+1/k})$,\footnote{$\tilde{O(f(n)}$-notation hides polylogarithmic in $f(n)$ factors.}
individual memory requirement for a vertex $v$ of $\tO(\deg(v) + n^{1/k})$, and construction time $\tO(n^{1+1/k})$ (in the CONGEST model).  The ``individual memory requirement" parameter encapsulates the routing tables and labels, and the memory used while computing the tables and labels.

Lenzen and Patt-Shamir \cite{LP15} devised a distributed routing scheme (based on \cite{TZ01-spaa}) with stretch $4k-3 +o(1)$, tables of size $O(n^{1/k} \log n)$, labels of size $O(k \log n)$, individual memory requirement of $\tO(n^{1/k})$, and construction time $\tO(S+n^{1/k})$, where $S$ is the {\em shortest-path diameter} of the input graph $G$, i.e., the maximum number of hops in a shortest path between a pair of vertices in $G$. Though $S$ is often much smaller than $n$,
it is desirable to evaluate complexity measures of distributed algorithms in terms of $n$ and $D$, where $D$ is the {\em hop-diameter} of $G$, defined as the maximum distance between a pair of vertices $u,v$ in the underlying unweighted graph of $G$.
 Typically, we have $D \ll S \ll n$, and it is always the case that $D \le S \le n$.
(See Peleg's book \cite{P00} for a comprehensive discussion.)

Lenzen and Patt-Shamir \cite{LP13} also devised a routing scheme with tables of size $\tilde{O}(n^{1/2+1/k})$, labels of size $O(\log n\cdot\log k)$, stretch at most $O(k\log k)$, and has running time of $\tilde{O}(n^{1/2+1/k}+D)$ rounds.
They (based on \cite{DHK12}) also showed a lower bound of $\tilde{\Omega}(D + \sqrt{n})$ on the time required to construct  a routing scheme.
In a follow-up paper, \cite{LP15} showed how to improve the stretch of the above scheme to $O(k)$. The main drawback of this result is the prohibitively large size of the routing tables. (The individual memory requirement is consequently prohibitively large as well.)
They also exhibited a different tradeoff, that overcame the issue of large routing tables. They devised an algorithm that produced routing tables of size $O(n^{1/k}\cdot\log^2n)$, labels of size $O(k\log^2n)$ and stretch $4k-3+o(1)$,
 albeit with sub-optimal running time $\tO(\min\{ (nD)^{1/2} n^{1/k},n^{2/3} +D\}) \cdot \log \Lambda$, and no guarantee on the individual memory requirement during the preprocessing phase. In \cite{EN16a}, the current authors improved the bounds of \cite{LP13,LP15}. In the current state-of-the-art scheme \cite{EN16a}, the stretch is $4k-5 + o(1)$, the tables and labels are of the same size as in \cite{LP13,LP15}
(i.e., $O(n^{1/k} \log^2 n)$ and $O(k \log^2 n)$, respectively), the construction time is 
 $O((n^{1/2 + 1/k}  +D) \cdot \min \{(\log n)^{O(k)},2^{\tO(\sqrt{\log n})}\} \cdot \log \Lambda$. (A similar, though slightly weaker, result was achieved by \cite{LPP16}.)  Still there is no meaningful guarantee on the individual memory requirement in the preprocessing phase.
See Table \ref{fig:table} for a concise summary of existing bounds, and a comparison with our new results.


To summarize, all currently existing distributed routing algorithms with nearly-optimal running time  $\approx D + n^{1/2+1/k}$ suffer from three issues.
First, they provide no meaningful guarantee on individual memory requirement on vertices in the preprocessing phase; second, their preprocessing time is not purely combinatorial, but rather depends linearly on $\log \Lambda$; and third, their tables and labels sizes are roughly  $O(\log n)$ off from the respective tables and labels' sizes of Thorup-Zwick's sequential construction \cite{TZ01-spaa}.

The issue of individual memory requirement was indeed explicitly raised by Lenzen \cite{L16} in a private communication with the authors.
He wrote (the stress on ``during" is in the origin):
\\
\\
``{\em One annoying thing about this is that there is a huge amount of storage required *during* the construction. It seems
odd that the nodes can hold only small tables, but should have large memory during the construction. I think it's an interesting
question whether we can have a good construction using $\tilde{O}(n^{1/k})$ memory only. A hop set may be the wrong option for
this, because reflecting the distance structure of the skeleton accurately cannot be done by a sparse graph; on the other hand,
maybe there's some distributed representation cleverly distributing the information over the graph nodes?}"
\\
\\
Based on our novel hopsets' construction, we devise an algorithm that addresses all these  issues. Specifically, the stretch of our scheme is $4k-5 +o(1)$, the sizes of tables and labels essentially match the respective sizes of Thorup-Zwick's construction, i.e., they are $O(n^{1/k} \log n)$ and $O(k \log n)$, respectively.  Our construction time is $(n^{1/2+1/k} + D)\cdot(\log n)^{O(k)}$, i.e., it is purely combinatorial. Most importantly, the individual memory requirement is at most $\tO(n^{1/k})$.
Moreover, we can reduce the running time to $(n^{1/2+1/k} + D)\cdot \min\{(\log n)^{O(k)}, 2^{\tO(\sqrt{\log n})}\}$, while the individual memory increases slightly to $\max\{\tO(n^{1/k}),2^{\tO(\sqrt{\log n})}\}$. In particular, we can have a polylogarithmic individual memory requirement and construction time $O((n^{1/2} +D) \cdot n^\eps)$, for an arbitrarily small constant $\eps > 0$.

{\bf Distributed Tree Routing:}
An important ingredient in the existing distributed routing schemes \cite{LP15,EN16a} for general graphs, and in our new routing scheme, is a distributed tree routing scheme. Thorup and Zwick \cite{TZ01-spaa} showed that with routing tables of size $O(1)$ and labels of size $O(\log n)$, one can have an {\em exact} (i.e., no stretch) tree routing. \cite{LP15,EN16a} showed that in $\tO(D+ \sqrt{n})$ time, one can construct exact tree routing with tables and labels of size $O(\log n)$ and  $O(\log^2 n)$, respectively, i.e., there is an overhead of $\log n$ in both parameters with respect to Thorup-Zwick's sequential construction. In this paper we improve this result, and devise a $\tO(D+ \sqrt{n})$-time algorithm that constructs tree-routing tables and labels of sizes that match the sequential construction of Thorup and Zwick, i.e., of size $O(1)$ and $O(\log n)$, respectively. Moreover, if one is interested in a scheme that always routes via the root of the tree, as is the case in the application to routing in general networks, then our algorithm for constructing tables and labels that supports this requires only a small ($O(\log n)$) individual memory in each vertex.

\begin{table}
\centering
\footnotesize
\begin{tabular}{|c|c|c|c|c|c|}
	\hline
     Reference  &  Number of Rounds   &   Table size & Label size & Stretch  & Memory per vertex\\
	\hlineB{3}
\cite{ABLP90}   & $O(n^{1+\frac{1}{k}})$ & $O(n^{\frac{1}{k}}\cdot\log n)$ & $O(k\log n)$ & $2\cdot 3^k-1$ &  $\tilde{O}(\deg(v)\!+\!n^{\frac{1}{k}})$\\
\hline
\cite{TZ01-spaa,C13}   & $O(n^{1+\frac{1}{k}})$ & $O(n^{\frac{1}{k}}\cdot\log n)$ & $O(k\log n)$ & $3.68k$ &  $O(n^{\frac{1}{k}}\cdot\log n)$\\
\hline
\cite{LP13,LP15}  & $\tilde{O}(n^{\frac{1}{2}+\frac{1}{4k}}+D)$ & $\tilde{O}(n^{\frac{1}{2}+\frac{1}{4k}})$ & $O(\log n)$ & $6k-1+o(1)$ & $\tilde{O}(n^{\frac{1}{2}+\frac{1}{4k}})$ \\
\hline
\multirow{2}{*}{\cite{LP15}} & $\tilde{O}(S+n^{\frac{1}{k}})$ &  $O(n^{\frac{1}{k}}\cdot\log n)$ & $O(k\log n)$ & $4k-3$ & $O(n^{\frac{1}{k}}\cdot\log n)$\\
\cline{2-6}
 & $\tilde{O}(\min\!\{\!(nD)^{\frac{1}{2}}\!\cdot\! n^{\frac{1}{k}}\!,\!n^{\frac{2}{3}\!+\!\frac{2}{3k}}\!+\! D\!\})$ & $O(n^{\frac{1}{k}}\cdot\log^2n)$ & $O(k\log^2n)$ & $4k-3+o(1)$ & $\tilde{O}(n^{\frac{1}{2}})$\\
 \hline
\cite{LPP16} & $(n^{\frac{1}{2}+\frac{1}{k}}+D)\cdot 2^{\tilde{O}(\sqrt{\log n})}$ & $O(n^{\frac{1}{k}}\cdot\log^2n)$ & $O(k\log^2n)$ & $4k-3+o(1)$ &  $\tilde{O}(n^{\frac{1}{2}})$\\
\hline
\cite{EN16a} & $(n^{\frac{1}{2}+\frac{1}{k}}+D)\cdot \beta$ & $O(n^{\frac{1}{k}}\cdot\log^2n)$ & $O(k\log^2n)$ & $4k-5+o(1)$ &  $\tilde{O}(n^{\frac{1}{2}})$\\
\hlineB{2}
\multirow{2}{*}{\bf This paper} & $(n^{\frac{1}{2}+\frac{1}{k}}+D)\cdot(\log n)^{O(k)}$ & $O(n^{\frac{1}{k}}\cdot\log n)$ & $O(k\log n)$ & $4k-5+o(1)$ &  $\tilde{O}(n^{\frac{1}{k}})$\\
\cline{2-6}
& $(n^{\frac{1}{2}+\frac{1}{k}}+D)\cdot 2^{\tilde{O}(\sqrt{\log n})}$ & $O(n^{\frac{1}{k}}\cdot\log n)$ & $O(k\log n)$ & $4k-5+o(1)$ &  $2^{\tilde{O}(\sqrt{\log n})}$\\
	\hline
\end{tabular}
\caption{Comparison of compact routing schemes for graphs with $n$ vertices, $m$ edges, hop-diameter $D$, and shortest path diameter $S$. Denote $\beta = \min\{(\log n)^{O(k)},2^{\tilde{O}(\sqrt{\log n})}\}$.}\label{fig:table}
\end{table}


\subsection{Technical Overview}
\label{sec:tech}

Cohen's algorithm \cite{C00} is based on  a subroutine for constructing pairwise covers \cite{C93,ABCP93}, i.e., collections of small-radii clusters with small maximum overlap (no vertex belongs to too many clusters). The algorithm is a top-down recursive procedure: it interconnects large clusters of the cover via hopset edges, and recurses in small clusters. To keep the overall overlap of all recursion levels in check, Cohen used the radius parameter $O(\log n)$ for the covers. This resulted in a hop-bound, which is at least polylogarithmic in $n$.  Cohen's hopset is also built separately for each distance scale $[2^i,2^{i+1})$, $i = 0,1,\ldots,\log \Lambda$, and the ultimate hopset is the union of all these single-scale hopsets.

The hopset's construction of \cite{EN16}, (due to the current authors), also builds a single-scale hopset for each distance scale, and then takes their union as an ultimate hopset. The construction of single-scale hopsets in \cite{EN16} is based upon ideas from the construction of $(1+\eps,\beta)$-spanners of \cite{EP04} for unweighted graphs. It starts with a partition of the vertex set $V$ into singleton clusters $\cP_0 = \{v\} \mid v \in V\}$, and alternates superclustering and interconnection steps. In a superclustering step some of the clusters of the current partition are merged into larger clusters (of $\cP_{i+1}$), while the other clusters are interconnected with one another via hopset edges.

This approach (of \cite{EN16}) enabled us to prove existence of hopsets with constant hop-bound, but it appears to be uncapable of producing hopsets of size $o(n \log n)$.  Indeed, even if each single-scale hopset is of linear size (which is indeed the case in \cite{EN16}), their union is doomed to be of size $\Omega(n \log n)$. Moreover, in parallel and distributed settings, one produces a hopset $i+1$ based upon a hopset of scale $i$. This results in {\em accumulation of stretch} from $1+ \eps$ to $(1+ \eps)^{\log n}$. To alleviate this issue, one needs to rescale $\eps$.
However,  then the hopbound grows from constant to polylogarithmic. To get around this, \cite{EN16} used a smaller number of scales, and this indeed enables \cite{EN16} to construct hopsets with constant hopbound in these settings, albeit the running time becomes proportional to the ratio between consequent scales, i.e., it becomes $n^{\Omega(1)}$.

The closest to our current construction of hopsets is the line of research of \cite{B09,HKN14,HKN16,N14}, which is based on a construction of distance oracles due to Thorup and Zwick \cite{TZ01}.
To construct their oracles, \cite{TZ01} used a hierarchy of sets $V = A_0 \supseteq A_1 \supseteq \ldots A_{\kappa-1} \supseteq A_\kappa = \emptyset$, where each vertex of $A_i$, for all $i = 0,1,\ldots,\kappa-2$, is sampled with probability $n^{-1/\kappa}$ for the inclusion into $A_{i+1}$.
For each vertex $u \in V$,  \cite{TZ01} defined for every $i = 0,1,\ldots,\kappa-1$, the $i$th {\em pivot} $p_i(u)$ to be its closest vertex in $A_i$,   and the  $i$th {\em bunch} $B_i(u) = \{v \mid d_G(u,v) < d_G(u,A_{i+1})\} \cup \{p_{i+1}(u)\}$, (for $i=\kappa$, let $\{p_\kappa(u)\} = \emptyset$),
and the entire bunch $B(u) = \bigcup_{i=0}^{\kappa-1} B_i(u)$. They also defined the dual sets, {\em clusters}, $C(v) = \{u \mid v \in B(u)\}$.
Bernstein and others  \cite{B09,HKN14,N14,HKN16} used this construction with $\kappa = \tilde{\Theta}(\sqrt{\log n})$, and built Thorup-Zwick clusters with respect to $2^{\tO(\sqrt{\log n})}$-bounded distances.
As a result, they obtained a so-called {\em $2^{\tO(\sqrt{\log n})}$-bounded hopset}, i.e., a hopset which takes care only of pairs $u,v \in V$ of vertices that admit a $2^{\tO(\sqrt{\log n})}$-bounded shortest path. They then used this bounded hopset in a certain recursive fashion
(see the so-called {\em hop reduction} of Nanongkai \cite{N14}), to obtain their ultimate hopset.

Thorup-Zwick's construction with $\kappa =\tilde{\Theta}(\sqrt{\log n})$ alone introduces into the hopset $n \cdot 2^{\tilde{\Theta}(\sqrt{\log n})}$ edges, and thus, such a hopset cannot be very sparse. In addition, the recursive application of the hop reduction technique results in a hopbound of $2^{\tilde{\Omega}(\sqrt{\log n})}$.

Our construction of hopsets is based upon a construction of Thorup-Zwick's emulators\footnote{A graph $G' = (V,E',\omega')$ is called a {\em sublinear-error emulator} of an unweighted graph $G = (V,E)$, if for every pair of vertices $u,v \in V$, we have $d_{G}(u,v)  \le d_{G'}(u,v) \le d_G(u,v) + \alpha(d_G(u,v))$ for some sub-linear stretch function $\alpha$.
If $G'$ is a subgraph of $G$, it is called a sublinear-error  {\em spanner} of $G$.}, from a different paper by Thorup and Zwick \cite{TZ06}.
Specifically, to obtain the hierarchy $V = A_0 \supseteq A_1 \supseteq \ldots A_{\log\kappa-1} \supseteq A_{\log\kappa} = \emptyset$, one samples each vertex of $A_i$, for $i=0,1,\ldots,\log\kappa-2$, with probability roughly $n^{-2^i/\kappa}$ for inclusion in $A_{i+1}$.
Then one defines the bunch of a vertex $u \in A_i$ as $B(u) =  \{v \in A_i \mid d_G(v,u) < d_G(v,A_{i+1})\} \cup \{p_{i+1}(u)\}$, and sets
\begin{equation}
\label{eq:emul_hopset}
H ~=~ \bigcup_{u \in V}  \{(u,v) \mid v \in B(u)\}~.
\end{equation}

For {\em unweighted} graphs $G$, \cite{TZ06} showed that $H$ given by (\ref{eq:emul_hopset}) is an additive emulator with stretch $\alpha(d)=O(\log\kappa\cdot d^{1-1/(\log\kappa-1)})$ and
$O(\log\kappa \cdot n^{1+1/\kappa})$ edges
. By a different proof argument, we show that the {\em very same} construction provides also a $(\beta,\eps)$-hopset of the same size and with $\beta=O\left({\log\kappa \over \eps}\right)^{\log\kappa-1}$, for all $\epsilon>0$ {\em simultaneously}.
Moreover, by adjusting the sampling probabilities, we also shave the $\log\kappa$ factor from both the hopset's and the emulator's size, while increasing the exponent of $\beta$ by 1.
(This also gives rise to the first linear-size emulator with sub-linear additive stretch for unweighted graphs.)

As a result, we obtain a construction of hopsets, which is by far simpler than the previous Thorup-Zwick-based constructions of hopsets \cite{B09,HKN14,N14,HKN16}. As was discussed above, it also provides hopsets with much better parameters, and it is more adaptable to efficient implementation in various computational settings. Our construction is also much simpler than the constructions of \cite{C00,EN16}, which are not based on Thorup-Zwick's hierarchy.

Parallel and distributed implementations of our hopset's construction proceed in scales  $\ell = 0,1,\ldots,\log n$, where on scale $\ell$ the algorithm constructs a $2^\ell$-bounded hopset. This is different from the situation in \cite{C00,EN16,N14,HKN16},
 where scale $\ell$ takes care of distances in the range $[2^\ell,2^{\ell+1})$.  An important advantage of this is that we no longer need to take the union of all single-scale hopsets into our  hopset; rather we just take the largest-scale hopset as our ultimate hopset. This saves a factor of $\log n$ in the size, and enables us to construct {\em linear-size} hopsets in parallel and distributed settings. The fact that we do not work with distance scales, but rather with hop-distance-scales, makes it possible to avoid the dependence on $\log \Lambda$ in the distributed construction time, and to achieve a purely combinatorial running time.  All previous distributed algorithms for constructing approximate hopsets \cite{N14,HKN16,EN16} have running time proportional to $\log \Lambda$. (A distributed construction of an {\em exact} hopset \cite{E17} by the first-named author, however, avoids this dependence too. Alas, it has a much higher running time.)

The fact that the construction's scales are with respect to hop-distances, as opposed to actual distances, enables us to essentially avoid accumulation of error. This is done by the following recursive procedure.
First, we build $2^\ell$-bounded hopsets $H^{(\ell)}$, $\ell = 0,1,\ldots, \log n$, and set $H(1) = H^{(\log n)}$ to be the highest-scale hopset. This process involves accumulation of stretch, and after rescaling the stretch parameter $\eps$, the hopset $H(1)$ ends up having polylogarithmic hopbound $\beta_1$. Its construction time is roughly $\beta_1$, i.e., polylogarithmic as well. Now we add the hopset into the original graph, and recurse on $G \cup H(1)$. Note that now we only need to process $\log \beta_1 \approx \log\log n$ scales, rather than $\log n$ ones. Hence the accumulation of stretch in the resulting hopset $H(2)$ is much more mild than in $H(1)$. As a result, after rescaling the stretch parameter $\eps$, the hopbound $\beta_2$ of $H(2)$ is roughly $\mathit{poly}(\log\log n)$. By repeating this recursive process for $\log^* n$ iterations, we eventually achieve a hopset with constant hop-bound in parallel polylogarithmic time. As was mentioned above, this dramatically improves the $n^{\Omega(1)}$ parallel time required in \cite{EN16} to construct a hopset with constant hopbound.

In the context of distributed routing, like in \cite{LP15,EN16a}, our hopset $H$ is constructed  on top of a {\em virtual} (aka skeleton) graph $G' = (V',E')$, where $V'$ is a collection of $\approx \sqrt{n}$ vertices, sampled from the original vertex set $V$ independently with probability $\approx n^{-1/2}$. There is an edge $(u',v') \in E'$ iff there is a $\tilde{O}(\sqrt{n})$-bounded $u'-v'$ path in $G$. However, since we aim to design an algorithm in which vertices employ only a small memory during the preprocessing phase, we cannot afford computing the virtual graph $G'$. Rather, somewhat surprisingly, we show that the hopset $H$ for $G'$ can be constructed without ever constructing $G'$ itself! We only compute those edges of $G'$, which are required for constructing the hopset $H$.

Note, however, that unlike a spanner or a low-stretch spanning tree, hopset is always used {\em in conjunction} with the graph for which it was constructed. In other words, to compute the Thorup-Zwick routing scheme for the virtual graph $G'$, we conduct Bellman-Ford explorations in $G' \cup H$. So, at the first glance, it seems necessary to eventually compute $G'$, for being able to conduct these Bellman-Ford explorations.

We cut this gordian knot by computing only those edges of $G'$ that are really needed for computing either the hopset $H$ or the TZ routing scheme for $G' \cup H$. This turns out to be (typically) a small fraction of edges of $G'$, and those edges can be computed much more efficiently than the entire $G'$, and using much smaller memory. This idea also enables us to compute lengths of these edges of $G'$ {\em precisely}, as opposed to approximately, as it was done in \cite{LP15,HKN16,EN16a}. This simplifies the analysis of the resulting scheme.
We note that the idea of computing a hopset $H$ without first computing the underlying virtual graph $G'$, and conducting Bellman-Ford explorations in $G' \cup H$ without ever computing $G'$ in its entirety appeared in a recent work \cite{E17}, by the first-named author.  \cite{E17} constructed an exact hopset of \cite{SS99,N14} with a polynomial hopbound. However, the exact hopset is a much simpler structure than the small-hopbound approximate hopset that we construct here. Showing that this idea is applicable for our new approximate small-hopbound hopset is technically  substantially more challenging.

Another crucial idea that we employ to guarantee a small individual memory requirement is ensuring that our hopset $H$ has small {\em arboricity}, i.e., that its edges can be oriented in such a way that every vertex has only a small out-degree. This out-degree is proportional to the ultimate individual memory requirement of our algorithm.

\subsection{Organization}
\label{sec:org}
Our linear-size hopsets appear in \sectionref{sec:hop}, and the distributed construction in \sectionref{sec:CC} for Congested Clique and \sectionref{sec:CONGEST} for the CONGEST model. The hopsets in the PRAM model are presented in \sectionref{sec:pram}. Finally, in \sectionref{sec:route} we describe our distributed routing scheme with small memory, and the distributed tree routing in \sectionref{sec:tree-route}.

\section{Linear Size Hopsets}
\label{sec:hop}

Let $G=(V,E)$ be a weighted graph, and fix a parameter $k\ge 1$. Let $\nu=1/(2^k-1)$ (one should think of $\kappa=1/\nu$). Let $V=A_0\supseteq A_1\supseteq\dots\supseteq A_k=\emptyset$ be a sequence of sets, such that for all $0\le i<k-1$, $A_{i+1}$ is created by sampling every element from $A_i$ independently with probability $p_i=n^{-2^i\cdot\nu}$.\footnote{Our definition is slightly different than that of \cite{TZ06}, which used $p_i=|A_i|/n^{1+\nu}$, but it gives rise to the same expected size of $A_i$. We use our version since it allows efficient implementation in various models of computation.} It follows that for $0\le i\le k-1$ we have
\[
N_i:=\E[|A_i|]=n\cdot\prod_{j=0}^{i-1}p_j=n^{1-(2^i-1)\nu}~,
\]
and in particular $N_{k-1}=n^{(1+\nu)/2}$.

For every $0\le i\le k-1$ and every vertex $u\in A_i\setminus A_{i+1}$, define the pivot $p(u)\in A_{i+1}$ as a vertex satisfying $d_G(u,A_{i+1})=d_G(u,p(u))$ (note $p(u)$ does not exist for $u\in A_{k-1}$), and define the bunch
$$B(u)=\{v\in A_i~:~ d_G(u,v)<d_G(u,A_{i+1})\}\cup\{p(u)\}~.$$ 
The hopset is created by taking $H=\{(u,v)~:~u\in V, v\in B(u)\}$, where the length of the edge $(u,v)$ is set as $d_G(u,v)$. As argued in \cite{TZ01}, for any $0\le i\le k-2$ and $u\in A_i\setminus A_{i+1}$, the size of $B(u)$ is bounded by a random variable sampled from a geometric distribution with parameter $p_i$ (this corresponds to the first vertex of $A_i$, when ordered by distance to $u$, that is included in $A_{i+1}$). Hence $\E[|B(u)|]\le 1/p_i=n^{2^i\cdot\nu}$. For $u\in A_{k-1}$ we have $\E[|B(u)|]=N_{k-1}=n^{(1+\nu)/2}$. The expected size of the hopset $H$ is at most
\[
\sum_{i=0}^{k-2}(N_i\cdot n^{2^i\cdot\nu}) +N_{k-1}\cdot N_{k-1}= k\cdot n^{1+\nu}~.
\]

The following lemma bounds the number of hops and stretch of $H$. Recall that $d^{(t)}_G(u,v)$ is the length of the shortest-path between $u,v$ in $G$ that consists of at most $t$ edges.
\begin{lemma}\label{lem:main}
Fix any $0<\delta<1/(8k)$ and any $x,y\in V$. Then for every $0\le i\le k-1$, at least one of the following holds:
\begin{enumerate}
\item $d_{G\cup H}^{((3/\delta)^i)}(x,y)\le (1+8\delta i)\cdot d_G(x,y)$.
\item There exists $z\in A_{i+1}$ such that $d_{G\cup H}^{((3/\delta)^i)}(x,z)\le 2d_G(x,y)$.
\end{enumerate}
\end{lemma}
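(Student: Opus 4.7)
The natural approach is induction on $i$. For the base case $i=0$, note that $(3/\delta)^0=1$ and the stretch factor is $1$, so both alternatives concern a single edge of $G\cup H$. If $x\in A_1$, take $z:=x\in A_1$ (the zero-edge "path" satisfies case 2). Otherwise $x\in A_0\setminus A_1$: if $y\in B(x)$, the edge $(x,y)\in H$ of weight $d_G(x,y)$ realises case 1, while if $y\notin B(x)$, the bunch definition forces $d_G(x,A_1)\le d_G(x,y)$, so the edge $(x,p(x))\in H$ realises case 2 with witness $z:=p(x)\in A_1$.

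For the inductive step, assume the lemma at level $i$, fix a shortest $x$-$y$ path $\pi$ of length $L:=d_G(x,y)$, set $m:=\lceil 3/\delta\rceil$, and partition $\pi$ at vertices $x=u_0,u_1,\ldots,u_m=y$ so that each sub-segment has length at most $L/m$. Apply the inductive hypothesis at level $i$ to every consecutive pair $(u_j,u_{j+1})$. If case 1 fires for every pair, concatenating the $m$ short paths yields an $x$-$y$ path in $G\cup H$ with at most $m\cdot(3/\delta)^i=(3/\delta)^{i+1}$ hops and total length $(1+8\delta i)L\le(1+8\delta(i+1))L$, which is case 1 at level $i+1$. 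Otherwise, let $j^*$ be the least index at which case 2 fires, yielding $z\in A_{i+1}$ with $d_{G\cup H}^{((3/\delta)^i)}(u_{j^*},z)\le 2L/m$. Prepending the case-1 sub-paths for $j<j^*$ gives $d_{G\cup H}^{((j^*+1)(3/\delta)^i)}(x,z)\le(1+8\delta i)(j^*L/m)+2L/m\le 2L$, using $j^*\le m-1$ and $1+8\delta i\le 2$ (since $\delta<1/(8k)$ and $i<k$); the hop count is at most $(3/\delta)^{i+1}$. If $z\in A_{i+2}$, this directly is a case-2 witness at level $i+1$ and we are done.

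The main obstacle is the remaining sub-case $z\in A_{i+1}\setminus A_{i+2}$, where we must either promote $z$ to a witness in $A_{i+2}$ or recover case 1 at level $i+1$ for $(x,y)$. The natural tool is the pivot $p(z)\in A_{i+2}$ together with the structure of $B(z)$: if $d_G(z,A_{i+2})$ is sufficiently small, appending the $H$-edge $(z,p(z))$ supplies a case-2 witness at level $i+1$ within the $2L$ budget (any residual slack can be absorbed by tuning $m$ a bit above $3/\delta$); and if $d_G(z,A_{i+2})$ is large, then $B(z)$ contains a wide neighbourhood of $A_{i+1}$ vertices around $z$, turning $z$ into a single-hop hub with which one continues along $\pi$ by iteratively re-applying the inductive hypothesis to the remaining sub-segments, each time either advancing via case 1 or landing at another $A_{i+1}$ vertex already lying in $B(z)$. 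Carefully bookkeeping this nested dichotomy — first which sub-segment first triggers case 2, then whether the triggered landmark is promoted to $A_{i+2}$ or relayed via its pivot and bunch — while keeping the hop count at $(3/\delta)^{i+1}$ and the stretch at $1+8\delta(i+1)$ is the technically tightest part, and I expect the constant $8$ in the stretch bound to emerge precisely from this accumulation.
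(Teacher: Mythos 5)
Your base case matches the paper's, and the overall shape of the inductive step (partition the shortest path, apply the hypothesis to sub-segments, and branch on which sub-segments trigger case~2) is also the paper's shape. But there is a genuine gap, and you name it yourself: you do not resolve the sub-case $z\in A_{i+1}\setminus A_{i+2}$. Your proposed dichotomy, ``if $d_G(z,A_{i+2})$ is sufficiently small use the pivot, if it is large use the wide bunch,'' has no concrete threshold and no concrete bound on the accumulated stretch or hop count, and your idea of iteratively re-applying the inductive hypothesis while hopping between $A_{i+1}$-vertices in $B(z)$ has no termination or stretch-bookkeeping argument. The paper's resolution is structurally different in a way that avoids all of this: it tracks \emph{two} witnesses simultaneously --- $z_l\in A_{i+1}$ from the \emph{first} sub-segment $(u_l,v_l)$ where case~1 fails and $z_r\in A_{i+1}$ from the \emph{last} one $(u_r,v_r)$ --- and then branches on whether $z_r\in B(z_l)$. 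If $z_r\in B(z_l)$, the hopset edge $(z_l,z_r)$ exists and closes the detour, yielding case~1 at level $i+1$ with a clean triangle-inequality accounting. If $z_r\notin B(z_l)$, the definition of the bunch gives $d_G(z_l,p(z_l))\le d_G(z_l,z_r)$, which is exactly the quantitative handle needed to show $p(z_l)\in A_{i+2}$ satisfies case~2 within the $2\,d_G(x,y)$ budget; the ``threshold'' you were missing is supplied by $z_r$ itself. Without a second witness you cannot produce this bound.

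There is also a smaller but real flaw in your decomposition. You partition $\pi$ at vertices $x=u_0,\dots,u_m=y$ ``so that each sub-segment has length at most $L/m$,'' but this is not always possible when a single $G$-edge on $\pi$ has weight exceeding $L/m$. The paper's decomposition is carefully stated to handle exactly this: it produces at most $1/\delta$ \emph{segments} $[u_j,v_j]$ each of length at most $\delta\,d_G(x,y)$, interleaved with at most $1/\delta$ single connecting \emph{edges} $(v_j,u_{j+1})$ of $G$ (which can be long); the inductive hypothesis is applied only to the segment endpoints, while the connecting edges are traversed as single hops. Your hop-budget computation also needs this refinement to land within $(3/\delta)^{i+1}$ once the ceiling in $m=\lceil 3/\delta\rceil$ and the connecting edges are accounted for.
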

\begin{proof}
The proof is by induction on $i$. We start with the basis $i=0$. If it is the case that $y\in B(x)$, then we added the edge $(x,y)$ to the hopset, i.e. $d_H^{(1)}(x,y)=d_G(x,y)$, and so the first item holds. Otherwise, consider the case that $x\in A_1$: then we can take $z=x$, so the second item holds trivially. The remaining case is that $x\in A_0\setminus A_1$ and $y\notin B(x)$, so by definition of $B(x)$ we get that $d_G(x,y)\ge d_G(x,A_1)$. By taking $z=p(x)$, there is a single edge between $x,z$ in $H$ of length $d_G(x,z)=d_G(x,A_1)\le d_G(x,y)$, which satisfies the second item.

Assume the claim holds for $i$, and we prove for $i+1$. Consider the path $\pi(x,y)$ between $x,y$ in $G$, and partition it into $J\le 1/\delta$ segments $\{L_j=[u_j,v_j]\}_{j\in[J]}$, each of length at most $\delta\cdot d_G(x,y)$, and at most $1/\delta$ edges $\{(v_j,u_{j+1})\}_{j\in[J]}$ of $G$ between these segments. This can be done as follows: define $u_1=x$, and for $j\ge 1$, walk from $u_j$ on $\pi(x,y)$ (towards $y$) until the point $v_j$, which is the vertex so that the next edge will take us to distance greater than $\delta\cdot d_G(x,y)$ from $u_j$ (or until we reached $y$). By definition, $d_G(u_j,v_j)\le\delta\cdot d_G(x,y)$. Define $u_{j+1}$ to be the neighbor of $v_j$ on $\pi(x,y)$ that is closer to $y$ (if exists). If $u_{j+1}$ does not exist (which can happen only when $v_j=y$) then define $u_{j+1}=y$ and $J=j$. Observe that for all $1\le j\le J-1$, $d_G(u_j,u_{j+1})>\delta\cdot d_G(x,y)$, so indeed $J\le 1/\delta$.

We use the induction hypothesis for all pairs $(u_j,v_j)$ with parameter $i$. Consider first the case that the first item holds for all of them, that is, $d_{G\cup H}^{((3/\delta)^i)}(u_j,v_j)\le (1+8\delta i)\cdot d_G(u_j,v_j)$. Then we take the path in $G\cup H$ that consists of the $(3/\delta)^i$-hops between each pair $u_j,v_j$, and the edges $(v_j,u_{j+1})$ of $G$. Since $(3/\delta)^{i+1}\ge (1/\delta)\cdot(3/\delta)^i+1/\delta$, we have
\[
d_{G\cup H}^{((3/\delta)^{i+1})}(x,y)\le \sum_{j\in[J]}(d_{G\cup H}^{((3/\delta)^i)}(u_j,v_j)+d_G^{(1)}(v_j,u_{j+1}))\le(1+8\delta i)\cdot d_G(x,y)~.
\]

The second case is that there are pairs $(u_j,v_j)$ for which only the second item holds. Let $l\in [J]$ (resp., $r\in[J]$) be the first (resp., last) index for which the first item does not hold for the pair $(u_l,v_l)$ (resp., $(u_r,v_r)$). Then there are $z_l,z_r\in A_{i+1}$ such that
\begin{equation}\label{eq:zlzr}
d_{G\cup H}^{((3/\delta)^i)}(u_l,z_l)\le 2d_G(u_l,v_l)~\textrm{ and }~d_{G\cup H}^{((3/\delta)^i)}(v_r,z_r)\le 2d_G(u_r,v_r).
\end{equation}
(Note that we used $v_r$ and not $u_r$ in the second inequality.  This can be done since the lemma's assertion holds for the pair $(v_r,u_r)$ as well, and as the first item is symmetric with respect to $u_r,v_r$, it does not hold for the pair $(v_r,u_r)$ as well.)
Consider now the case that $z_r\in B(z_l)$. In this case we added the edge $(z_l,z_r)$ to the hopset, and by the triangle inequality,
\begin{equation}\label{eq:eq1}
d_H^{(1)}(z_l,z_r)=d_G(z_l,z_r)\le d^{((3/\delta)^i)}_{G\cup H}(u_l,z_l)+d_G(u_l,v_r)+d^{((3/\delta)^i)}_{G\cup H}(z_r,v_r)~.
\end{equation}
Next, apply the inductive hypothesis on segments $\{L_j\}$ for $j<l$ and $j>r$, and in between use the detour via $u_l,z_l,z_r,v_r$. Since $l\le r$, there is at least one segment we skipped, so the total number of hops is bounded by $(1/\delta-1)\cdot(3/\delta)^i+1/\delta+2(3/\delta)^i+1$.
(The additive term of $1/\delta$ accounts for the edges $(v_j,u_{j+1})$, $0 \le j \le J-1$.)
This expression is at most $(3/\delta)^{i+1}$ whenever $\delta<1/2$. It follows that
\begin{eqnarray}\label{eq:case2}
\lefteqn{d_{G\cup H}^{((3/\delta)^{i+1})}(x,y)}\\
&\le& \sum_{j=1}^{l-1}\left[d_{G\cup H}^{((3/\delta)^i)}(u_j,v_j)+d_G^{(1)}(v_j,u_{j+1})\right]+d^{((3/\delta)^i)}_{G\cup H}(u_l,z_l)+ ~d_H^{(1)}(z_l,z_r)\nonumber\\\nonumber
&&+ d^{((3/\delta)^i)}_{G\cup H}(z_r,v_r)+d_G^{(1)}(v_r,u_{r+1})+ \sum_{j=r+1}^{J}\left[d_{G\cup H}^{((3/\delta)^i)}(u_j,v_j)+d_G^{(1)}(v_j,u_{j+1})\right]\\\nonumber
&\stackrel{\eqref{eq:eq1}}{\le}& (1+8\delta i)d_G(x,u_l)+d_G(u_l,v_r)+(1+8\delta i)d_G(v_r,y)\\\nonumber
&&+2d^{((3/\delta)^i)}_{G\cup H}(u_l,z_l)+2d^{((3/\delta)^i)}_{G\cup H}(z_r,v_r)\\\nonumber
&\stackrel{\eqref{eq:zlzr}}{\le}&8\delta\cdot d_G(x,y)+(1+8\delta i)d_G(x,y)\\\nonumber
&=&(1+8\delta(i+1))\cdot d_G(x,y)~.
\end{eqnarray}
This demonstrates item 1 holds in this case. The final case to consider is that $z_r\notin B(z_l)$. Assume first that $z_l\notin A_{i+2}$. Then taking $z=p(z_l)\in A_{i+2}$, the definition of $B(z_l)$ implies that $d_G(z_l,z)\le d_G(z_l,z_r)$. We now claim that item 2 holds for such a choice of $z$. Indeed, since $(z_l,z)\in H$, we have
\begin{eqnarray*}
d_{G\cup H}^{((3/\delta)^{i+1})}(x,z)&\le&\sum_{j=1}^{l-1}\left[d_{G\cup H}^{((3/\delta)^i)}(u_j,v_j)+d_G^{(1)}(v_j,u_{j+1})\right]+d^{((3/\delta)^i)}_{G\cup H}(u_l,z_l)+d_H^{(1)}(z_l,z)\\
&\le&(1+8\delta i)d_G(x,u_l)+d^{((3/\delta)^i)}_{G\cup H}(u_l,z_l) + d_G(z_l,z_r)\\
&\stackrel{\eqref{eq:eq1}}{\le}&(1+8\delta i)d_G(x,u_l)+2d^{((3/\delta)^i)}_{G\cup H}(u_l,z_l)+d_G(u_l,v_r)+ d^{((3/\delta)^i)}_{G\cup H}(z_r,v_r)\\
&\stackrel{\eqref{eq:zlzr}}{\le}&(1+8\delta i)d_G(x,v_r)+6\delta\cdot d_G(x,y)\\
&\le&2d_G(x,y),
\end{eqnarray*}
where the last inequality used that $\delta<1/(8k)$. The case that $z_l\in A_{i+2}$ is simpler, since we may take $z=z_l$.
\end{proof}

Fix any $0<\epsilon<1$, and apply the lemma on any pair $x,y$ with $\delta=\epsilon/(8k)$ and $i=k-1$. It must be that the first item holds (since $A_k=\emptyset$). Hence we have that
\[
d_{G\cup H}^{(\beta)}(x,y)\le (1+\epsilon)d_G(x,y)~,
\]
where the number of hops is given by $\beta=(24k/\epsilon)^{k-1}$. We derive the following theorem.

\begin{theorem}
For any weighted graph $G=(V,E)$ on $n$ vertices, and any $k\ge 1$, there exists $H$ of size at most $O(k\cdot n^{1+1/(2^k-1)})$, which is a $(\beta,\epsilon)$-hopset for any $0<\epsilon<1$ with $\beta=O(k/\epsilon)^{k-1}$.
\end{theorem}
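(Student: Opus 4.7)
My plan is to derive the theorem as an essentially immediate consequence of the construction already spelled out above together with \lemmaref{lem:main}. I will take $H$ to be exactly the set of edges defined via the Thorup--Zwick-style hierarchy $V=A_0\supseteq A_1\supseteq\cdots\supseteq A_k=\emptyset$ with sampling probabilities $p_i=n^{-2^i\nu}$, where $\nu=1/(2^k-1)$. The expected size has already been computed in the paragraphs preceding the theorem as $\sum_{i=0}^{k-2}N_i\cdot n^{2^i\nu}+N_{k-1}^2\le k\cdot n^{1+\nu}$. Since the theorem is existential, some realization of the random sampling attains size at most this expectation (equivalently, one can apply a Markov-style argument), yielding the claimed size bound $O(k\cdot n^{1+1/(2^k-1)})$.

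For the stretch/hopbound property, I will apply \lemmaref{lem:main} to an arbitrary pair $x,y\in V$ with the choices $\delta=\epsilon/(8k)$ and $i=k-1$. The hypothesis $\delta<1/(8k)$ of the lemma is satisfied since $\epsilon<1$. The second alternative of the lemma would produce some $z\in A_{i+1}=A_k$, but $A_k=\emptyset$, so this alternative is vacuously impossible; hence the first alternative must hold. This gives
$$d_{G\cup H}^{((3/\delta)^{k-1})}(x,y)\le\bigl(1+8\delta(k-1)\bigr)\cdot d_G(x,y)\le(1+\epsilon)\cdot d_G(x,y),$$
where the second inequality uses $8\delta(k-1)=\epsilon(k-1)/k\le\epsilon$. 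With $3/\delta=24k/\epsilon$, the corresponding hopbound is $\beta=(24k/\epsilon)^{k-1}=O(k/\epsilon)^{k-1}$, matching the claim. The trivial inequality $d_G(x,y)\le d_{G\cup H}^{(\beta)}(x,y)$ follows because every edge of $H$ is assigned length equal to its $G$-distance endpoints, so distances in $G\cup H$ cannot shrink below those in $G$.

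I do not anticipate any substantive obstacle: all of the real work has already been packaged inside \lemmaref{lem:main}, and the remaining task is a short parameter-tuning argument combined with the observation that the second alternative is unavailable at the top level because $A_k$ is empty. The one place that deserves an extra sentence is the passage from \emph{expected} size to a deterministic \emph{exists} statement, which is handled either by selecting a realization achieving at most the expected size, or by a Markov-style trimming of the sample (which costs only constant factors absorbed into the $O(\cdot)$ in the size bound).
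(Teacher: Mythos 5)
Your proposal is correct and follows essentially the same route as the paper: the size bound comes from the expected-size calculation preceding the theorem, and the hopbound/stretch bound comes from applying Lemma~\ref{lem:main} with $\delta=\epsilon/(8k)$ and $i=k-1$, noting that the second alternative is unavailable because $A_k=\emptyset$, exactly as the paper does. The small additions you make (the expectation-to-existence remark and the trivial lower bound on distances) are reasonable completeness points that the paper leaves implicit, but they do not change the argument.
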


\subsection{Improved Hopset Size}\label{sec:smaller-size}
Here we show how to remove the $k$ factor from the hopset size, at the cost of increasing the exponent of $\beta $ by an additive 1. Note that we may assume w.l.o.g that $k\le\log\log n-1$, as for larger values of $k$, both $\beta$ and the size of the hopset (which becomes $O(kn)$), grow with $k$.
We will increase the number of sets by 1, and sample $V=A'_0\supseteq A'_1\supseteq\dots\supseteq A'_{k+1}=\emptyset$ using the following probabilities: $p'_i=n^{-2^i\cdot\nu}\cdot 2^{2^i-1}$ (the restriction on $k$ ensures $p'_i<1$). Now for $0\le i\le k$,
\[
N'_i:=\E[|A'_i|]=n\cdot\prod_{j=0}^{i-1}p'_j=n^{1-(2^i-1)\nu}\cdot 2^{2^i-i-1}~,
\]
and in particular $N'_k\le 2^{2^k-k}\le n^{1/2}$. The expected size of $H$ becomes at most
\[
\sum_{i=0}^{k-1}(N'_i/p'_i) +N_k'\cdot N_k'\le \sum_{i=0}^{k-1} (n^{1+\nu}/2^i)+n\le 3n^{1+\nu}~.
\]
The hopset construction and the stretch analysis in \lemmaref{lem:main} remains essentially the same. There is an additional sampled set now, and thus  the exponent of $\beta$ grows by an additive 1.
\begin{theorem}
For any weighted graph $G=(V,E)$ on $n$ vertices and any $k\ge 1$, there exists $H$ of size at most $O(n^{1+1/(2^k-1)})$, which is a $(\beta,\epsilon)$-hopset for any $0<\epsilon<1$ with $\beta=O(k/\epsilon)^k$.
\end{theorem}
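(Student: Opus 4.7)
The plan is to follow the same template as the previous theorem, but with sampling probabilities modified so that the per-level contributions to the expected hopset size form a geometric rather than arithmetic progression in $i$. First I would restrict attention to $k \le \log\log n - 1$: for larger $k$, one already has $n^{1+1/(2^k-1)} = \Theta(n)$, and $(k/\epsilon)^k$ is monotone in $k$, so it suffices to apply the construction at $k' = \log\log n - 1$ and absorb the slack into the $O(\cdot)$ notation.

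In the main regime, I would use a hierarchy with one extra level, $V = A'_0 \supseteq A'_1 \supseteq \dots \supseteq A'_{k+1} = \emptyset$, where each vertex of $A'_i$ is kept in $A'_{i+1}$ independently with probability $p'_i = n^{-2^i\cdot\nu}\cdot 2^{2^i-1}$ and $\nu = 1/(2^k-1)$. A direct check shows $p'_i \le 1$ precisely because of the restriction on $k$ (the worst case is $i = k-1$, where one needs $2^{k-1} - 1 \le 2^{k-1}\nu\log_2 n$). The expected sizes satisfy $N'_i := \E[|A'_i|] = n^{1-(2^i-1)\nu}\cdot 2^{2^i - i - 1}$, and in particular $N'_k \le \sqrt{n}$. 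Pivots $p(u) \in A'_{i+1}$ for $u \in A'_i\setminus A'_{i+1}$, bunches $B(u)$, and the hopset $H = \{(u,v) : u \in V,\, v \in B(u)\}$ are defined exactly as before.

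For the size, each $|B(u)|$ with $u \in A'_i \setminus A'_{i+1}$ is still dominated by a geometric random variable with parameter $p'_i$ (the argument uses only independent sampling), so level $i$'s expected contribution is $N'_i/p'_i = n^{1+\nu}/2^i$. Summing over $i = 0, 1, \dots, k-1$ and adding the top-level contribution $(N'_k)^2 \le n$ yields $\E[|H|] = O(n^{1+\nu})$, shaving the missing $k$ factor from the preceding theorem.

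For the hop and stretch bounds, \lemmaref{lem:main} applies essentially verbatim: its proof uses only the nested hierarchy structure and the pivot/bunch definitions, not the precise form of the $p_i$. Since the hierarchy now has one additional level, the induction runs through $i = k$ instead of $i = k-1$, producing a hop bound of $(3/\delta)^k$. Applying the lemma with $\delta = \epsilon/(8(k+1))$ at $i = k$, where item~2 is impossible because $A'_{k+1} = \emptyset$, forces item~1 and gives stretch $1+\epsilon$ together with $\beta = O((k+1)/\epsilon)^k = O(k/\epsilon)^k$. The only real bookkeeping is verifying that the new $p'_i$ are valid probabilities in the restricted regime and that the geometric sum closes up to $O(n^{1+\nu})$; beyond that, the argument is a mechanical reindexing of the preceding proof.
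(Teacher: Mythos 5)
Your proposal follows the paper's proof of this theorem essentially verbatim: the same one-extra-level hierarchy $A'_0 \supseteq \dots \supseteq A'_{k+1}$, the same modified probabilities $p'_i = n^{-2^i\nu} \cdot 2^{2^i-1}$ (making the per-level contributions geometric), the same restriction to $k \le \log\log n - 1$, and the same observation that \lemmaref{lem:main} goes through unchanged with the induction running one level further, yielding $\beta = O(k/\epsilon)^k$. You fill in a few small verifications (the $p'_i \le 1$ check and the $N'_i/p'_i = n^{1+\nu}/2^i$ computation) that the paper leaves implicit, but the argument is the same.
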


Since our construction is based on the \cite{TZ06} emulator construction, following their analysis we obtain an emulator with additive stretch that can have {\em linear} size.
\begin{corollary}\label{col:emul}
For any un-weighted graph $G=(V,E)$ on $n$ vertices and any $k\ge 1$, there exists an emulator $H$ of size at most $O(n^{1+1/(2^k-1)})$, with additive stretch $O(k\cdot d^{1-1/k})$ for pairs at distance $d$.
\end{corollary}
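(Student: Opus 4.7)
The construction I would use is exactly the one from \sectionref{sec:smaller-size}, applied to the unweighted graph $G$: sample the hierarchy $V=A'_0\supseteq A'_1\supseteq\cdots\supseteq A'_{k+1}=\emptyset$ with probabilities $p'_i=n^{-2^i\nu}\cdot 2^{2^i-1}$, form the bunches $B(u)$ as in \sectionref{sec:hop}, and set $H=\{(u,v)\,:\,u\in V,\,v\in B(u)\}$ with $\omega_H(u,v)=d_G(u,v)$. The size bound $|H|=O(n^{1+1/(2^k-1)})$ is already established in \sectionref{sec:smaller-size}, so the only thing that remains is an additive-stretch analysis. Note that this construction coincides structurally with the Thorup--Zwick emulator of \cite{TZ06}, with slightly different sampling probabilities; in fact it is exactly the construction that we analyzed (as a hopset) in \lemmaref{lem:main}.

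The plan for the stretch analysis is to mimic \lemmaref{lem:main}, but replace the multiplicative accounting by an additive one tailored to the unweighted setting. Concretely, fix a pair $x,y$ with $d=d_G(x,y)$, fix a shortest path $\pi(x,y)$, and set the scale parameter $\Delta:=d^{1/k}$. I would prove by induction on $i\in\{0,1,\dots,k\}$ the following invariant: either (a) there is a path in $H$ from $x$ to $y$ of length at most $d_G(x,y)+C\cdot i\cdot d^{1-1/k}$ for an absolute constant $C$, or (b) there exists $z\in A'_{i+1}$ with $d_H(x,z)\le 2d_G(x,y)$. The base case $i=0$ is handled just as in \lemmaref{lem:main}: if $y\in B(x)$ we get a direct emulator edge, if $x\in A'_1$ take $z=x$, otherwise take $z=p(x)\in A'_1$ and use the bunch edge to $p(x)$. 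For the inductive step, partition $\pi(x,y)$ into $J\le d/\Delta$ segments of at most $\Delta$ hops each, apply the induction hypothesis to each segment, and whenever a segment falls into case (b) splice the two endpoint pivots $z_l,z_r\in A'_{i+1}$ together using either the direct emulator edge (if $z_r\in B(z_l)$) or a detour through $p(z_l)\in A'_{i+2}$, exactly as in the $z_r\notin B(z_l)$ analysis of \lemmaref{lem:main}. The additive error contributed at level $i$ is bounded by $O(\Delta)=O(d^{1-1/k})$, coming from the segment whose interior is skipped by the splice. Iterating through $i=0,1,\dots,k$ and using $A'_{k+1}=\emptyset$ to rule out option (b) at the top level yields the claimed $O(k\cdot d^{1-1/k})$ additive stretch.

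The main obstacle I expect is a clean reformulation of the two cases of \lemmaref{lem:main} so that the error is additive rather than multiplicative. In the hopset argument, segments were defined in terms of \emph{length} $\delta\cdot d_G(x,y)$, and the $1+8\delta i$ factor was multiplied back by $d_G(x,y)$; here I need to measure segments in \emph{hops} (of size $d^{1/k}$) and bound the additive overshoot per level by a quantity that is absolute (i.e.\ $\le C\cdot d^{1-1/k}$) rather than proportional to the current approximation. A secondary technical point is to verify that my modified sampling probabilities $p'_i$ still give the same structural guarantees used implicitly in the TZ06-style emulator proof (bunch size in expectation $1/p'_i$, with the geometric-distribution tail, and $|A'_k|\le n^{1/2}$), which was already shown in \sectionref{sec:smaller-size}, so this step is essentially bookkeeping.
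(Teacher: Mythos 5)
The paper itself does not reprove the additive-stretch bound: it observes that the construction of \sectionref{sec:smaller-size} is (up to the choice of sampling probabilities) exactly the Thorup--Zwick emulator from \cite{TZ06}, and simply invokes their analysis, noting only that the size bound has been improved. Your proposal instead tries to re-derive the additive stretch from scratch by mimicking \lemmaref{lem:main} with additive bookkeeping. That is a legitimately different route, but as written it has a gap.

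The problem is in the inductive invariant. You fix $\Delta = d^{1/k}$, partition $\pi(x,y)$ into $J\le d/\Delta = d^{1-1/k}$ segments of $\Delta$ hops, and want to close the induction on the invariant ``additive error $\le C\cdot i\cdot d^{1-1/k}$.'' But when the induction hypothesis is applied to a segment $(u_j,v_j)$ with $d_G(u_j,v_j)\le d^{1/k}$, its guaranteed error is $C\cdot i\cdot(d^{1/k})^{1-1/k} = C\cdot i\cdot d^{(k-1)/k^2}$; summing over $J=d^{1-1/k}$ segments gives $C\cdot i\cdot d^{1-1/k^2}$, which for $k\ge 3$ and large $d$ dominates the target $C(i+1)\,d^{1-1/k}$. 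So the case ``item~(a) holds for all segments'' already breaks the induction, before any splice is performed. The statement ``the additive error contributed at level $i$ is bounded by $O(\Delta)=O(d^{1-1/k})$'' is also incorrect: $\Delta=d^{1/k}\neq d^{1-1/k}$ except when $k=2$, which may be why the plan looks plausible at $k=2$ but does not generalize. The fix (which is what the Thorup--Zwick analysis does) is to let the distance scale grow geometrically with the level: set $R=d^{1/k}$, and prove for pairs $u,v$ at distance at most $R^{\,i}$ along $\pi$ that either the additive error is $O(i\cdot R^{\,i-1})$ or there is a pivot in $A'_{i+1}$ within $H$-distance $O(R^{\,i})$. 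Then each inductive step partitions a length-$R^{\,i+1}$ segment into $R$ pieces of length $R^{\,i}$, the accumulated error satisfies $E_{i+1}\le R\cdot E_i + O(R^{\,i})$, and one gets $E_k = O(k\cdot R^{\,k-1}) = O(k\cdot d^{1-1/k})$ as desired. Your invariant, by contrast, keeps the exponent $1-1/k$ fixed across levels, which does not telescope. Since you yourself flag ``a clean reformulation... so that the error is additive'' as the main obstacle, this is exactly the place the argument still needs to be repaired (or replaced by the direct citation to \cite{TZ06} that the paper uses).
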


\subsection{Efficient Implementation}\label{sec:implem}

We consider the construction and notation of \sectionref{sec:smaller-size}, with the slightly stronger assumption $k\le \log\log n-2$.

It was (implicitly) shown in \cite{TZ01} that for any $0\le i< k$, the sets $B(u)$ (and the corresponding distances) for all $u\in A'_i\setminus A'_{i+1}$ can be computed in expected time $O(|E|+n\log n)/p'_i$. (In fact, in \cite{TZ01}, the set $B(u)$ contains more vertices, not only those in $A'_i$.  However, we can remove the extra vertices easily.) The running time becomes larger as $i$ grows, and in order to keep it under control, we use the method of \cite{EN16}: introduce a parameter $2 \nu <\rho<1$, and redefine the probabilities as follows.
Set $i_0=\lfloor\log(\rho/\nu)\rfloor$ and $i_1=i_0+1+\left\lceil \frac{1}{\rho}\right\rceil$. For $0\le i\le i_0$, let $p_i'=n^{-2^i\cdot\nu}\cdot 2^{2^i-1}$ as in Section \ref{sec:smaller-size}. Set also $p'_{i_0+1}=n^{-\rho/2}$, and for the remaining levels $i_0+2\le i\le i_1$, set $p_i'=n^{-\rho}$. Finally, let $A'_{i_1+1}=\emptyset$. Note that for $0\le i\le i_0+1$, we have that
\begin{equation}\label{eq:NI}
N'_i=n\prod_{j=0}^{i-1}p'_j=n^{1-(2^i-1)\nu}\cdot 2^{2^i-i-1}~.
\end{equation}
In particular, using that $\rho/\nu\le 2^{i_0+1}\le 2\rho/\nu$, we get
\begin{equation}\label{eq:i0+1}
N'_{i_0+1}\le n^{1-(\rho/\nu-1)\nu}\cdot 2^{2\rho/\nu}\le n^{1+\nu-\rho/2}~.
\end{equation}
(The last inequality uses that $k\le\log\log n-2$. Thus $1/\nu=(2^k-1)\le\log n/4$, and so that $2^{2\rho/\nu}\le n^{\rho/2}$.) Thus for any $i\ge i_0+2$ we see that
\begin{equation}\label{eq:i0+2}
N'_i= N'_{i_0+1}\prod_{j=i_0+1}^{i-1}p'_j\stackrel{\eqref{eq:i0+1}}{\le} n^{1+\nu-\rho/2}\cdot n^{-\rho/2}\cdot n^{-(i-1-(i_0+1))\cdot\rho}=n^{1+\nu}\cdot n^{-(i-(i_0+1))\cdot\rho}~.
\end{equation}

The expected number of edges inserted until phase $i_0+1$ is at most
\[
\sum_{i=0}^{i_0}N'_i/p'_i\stackrel{\eqref{eq:NI}}{\le}\sum_{i=0}^{i_0}n^{1-(2^i-1)\nu}\cdot 2^{2^i-i-1}\cdot n^{2^i\cdot\nu}/2^{2^i-1}=\sum_{i=0}^{i_0}n^{1+\nu}/2^i\le 2n^{1+\nu}~.
\]
The expected number of edges at phase $i_0+1$ is bounded by
\[
N'_{i_0+1}/ p'_{i_0+1}\stackrel{\eqref{eq:i0+1}}{\le} n^{1+\nu-\rho/2}\cdot n^{\rho/2}=n^{1+\nu}~.
\]
The remaining phases until $i_1$ introduce at most
\[
\sum_{i=i_0+2}^{i_1-1}N'_i/p'_i\stackrel{\eqref{eq:i0+2}}{\le} \sum_{i=i_0+2}^{i_1-1}n^{1+\nu}\cdot n^{-(i-(i_0+1))\cdot\rho}\cdot n^\rho \le n^{1+\nu}\cdot\sum_{i=0}^\infty n^{-i\rho}\le 2n^{1+\nu}~,
\]
as this summation converges. Finally, since
\begin{equation}\label{eq:last-bunch}
N'_{i_1}\le n^{1+\nu}\cdot n^{-(i_1-(i_0+1))\cdot\rho}\le n^\nu~,
\end{equation}
the last phase $i_1$ contributes at most $N'_{i_1}\cdot N'_{i_1}\le n^{2\nu}\le n^{1+\nu}$ edges to the hopset.
We conclude that the total number of edges is $O(n^{1+\nu})$.

Recall that the expected running time of the Dijkstra explorations at level $i<i_1$ is $O(|E|+n\log n)/p'_i$. Thus the expected running time of the first $i_0$ levels converges to $O(|E|+n\log n)\cdot n^\rho$, while each of the at most $\lceil 1/\rho\rceil+1$ remaining levels will take also $O(|E|+n\log n)\cdot n^\rho$ time. The final level is expected to take $O(|E|+n\log n)\cdot n^\nu$ as well, since there are expected $O(n^\nu)$ vertices at $A_{i_1}$ from which Dijkstra is performed.
The price we pay is in a higher number of sets, which increases the exponent of $\beta$ by at most an additive $1/\rho$. The following result summarizes the discussion.

\begin{theorem}
For any weighted graph $G=(V,E)$ on $n$ vertices, and any $1< k\le \log\log n - 2$, $0<\rho<1$, there is a randomized algorithm that runs in expected time $O(|E|+n\log n)\cdot n^\rho/\rho$, and computes an edge set $H$ of size at most $O(n^{1+1/(2^k-1)})$.  This edge set $H$  is a $(\beta,\epsilon)$-hopset for any $0<\epsilon<1$, where
\[
\beta=O\left(\frac{k+1/\rho}{\epsilon}\right)^{k+1/\rho+1}~.
\]
\end{theorem}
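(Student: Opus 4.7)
The plan is to combine the three ingredients already worked out in the preceding discussion---the modified sampling schedule, the level-by-level size counting, and the inductive structure of \lemmaref{lem:main}---into a single statement. The setup is the hierarchy $V=A'_0\supseteq\dots\supseteq A'_{i_1+1}=\emptyset$ with the three-regime probabilities $p'_i$: doubly-exponential until $i_0=\lfloor\log(\rho/\nu)\rfloor$, a transition $n^{-\rho/2}$ at $i_0+1$, and constant $n^{-\rho}$ through $i_1=i_0+1+\lceil 1/\rho\rceil$. The hopset $H$ is formed exactly as before, by adding for each $u$ an edge of length $d_G(u,v)$ to every $v\in B(u)$.

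First I would collect the size bound. Each regime was shown above to contribute $O(n^{1+\nu})$ edges in expectation: the doubly-exponential phases yield a geometric sum $\sum_{i\le i_0}n^{1+\nu}/2^i$; level $i_0+1$ contributes exactly $n^{1+\nu}$ via \eqref{eq:i0+1}; the constant-probability tail telescopes to $n^{1+\nu}\sum_{i\ge 0}n^{-i\rho}=O(n^{1+\nu})$ by \eqref{eq:i0+2}; and the final level contributes $(N'_{i_1})^2\le n^{2\nu}\le n^{1+\nu}$ by \eqref{eq:last-bunch}. Summing gives $|H|=O(n^{1+1/(2^k-1)})$.

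For the running time, I would invoke the Thorup--Zwick-style modified Dijkstra routine of \cite{TZ01}, which computes all $B(u)$ for $u\in A'_i\setminus A'_{i+1}$ in expected time $O(|E|+n\log n)/p'_i$. The first $i_0$ levels sum geometrically to $O((|E|+n\log n)\cdot n^\rho)$ since the probabilities grow to $n^{-\rho}$ at $i_0$; each of the $O(1/\rho)$ middle levels contributes the same; and the final level's $N'_{i_1}=O(n^\nu)$ explorations are even cheaper. Aggregating over the $O(1/\rho)$ dominant levels yields the claimed $O((|E|+n\log n)\cdot n^\rho/\rho)$. For the stretch/hop guarantee, I would observe that \lemmaref{lem:main} is proved purely by induction on the level index and uses only the bunch/pivot definitions, not the specific sampling probabilities; it therefore applies verbatim with $i$ ranging up to $i_1$. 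Setting $\delta=\epsilon/(8(i_1+1))$ and using $A'_{i_1+1}=\emptyset$, the second alternative of the lemma cannot hold at the top level, so the first must, producing a $(1+\epsilon)$-approximate $x$-$y$ path with at most $(3/\delta)^{i_1}=O((k+1/\rho)/\epsilon)^{k+1/\rho+1}$ hops.

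The main subtlety will be the boundary bookkeeping across the three probability regimes---in particular controlling the $2^{2^i}$ blow-up factors in $N'_i$ at levels $i\le i_0$, which is precisely where the hypothesis $k\le\log\log n-2$ enters, via the inequality $2^{2\rho/\nu}\le n^{\rho/2}$ that underlies \eqref{eq:i0+1}. Once that bound is in place, the remaining regimes interact cleanly and everything else is a mechanical combination of the estimates already assembled above.
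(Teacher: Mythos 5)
Your proposal reproduces the paper's own argument: the same three-regime size accounting, the same Thorup--Zwick Dijkstra running-time aggregation over the $O(1/\rho)$ dominant levels, and the same application of \lemmaref{lem:main} (correctly observing that its induction is independent of the sampling probabilities) with the enlarged level count $i_1$. The role you identify for the hypothesis $k\le\log\log n-2$ in controlling the $2^{2^i}$ factors through \eqref{eq:i0+1} is also exactly the paper's use of it.
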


By substituting $\kappa=2^k-1$, we obtain an improved version of the hopsets of \cite{EN16}, where both the size of the hopset and the running time are smaller by a factor of $\log n$, while the other parameters remain the same. Another notable advantage is that it yields a single hopset which works for all $0<\epsilon<1$ simultaneously. 

\begin{corollary}\label{cor:tz}
For any weighted graph $G=(V,E)$ on $n$ vertices, and any $\kappa\ge 1$, $0<\rho<1$, there is a randomized algorithm that runs in expected time $O(|E|+n\log n)\cdot n^\rho/\rho$, and computes $H$ of size at most $O(n^{1+1/\kappa})$, which is a $(\beta,\epsilon)$-hopset for any $0<\epsilon<1$, where
\[
\beta=O\left(\frac{\log\kappa+1/\rho}{\epsilon}\right)^{\log\kappa+1/\rho+1}~.
\]
\end{corollary}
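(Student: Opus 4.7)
The plan is to derive this corollary from the preceding theorem by an essentially mechanical substitution $\kappa = 2^k - 1$, together with a careful choice of $k$ that keeps us within the theorem's admissible range $1 < k \le \log\log n - 2$.

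Concretely, given $\kappa \ge 1$ and $\rho \in (0,1)$, I would set
\[
k \;=\; \max\bigl\{2,\; \min\{\lceil \log_2(\kappa+1)\rceil,\; \lfloor \log\log n - 2\rfloor\}\bigr\}
\]
and invoke the preceding theorem with this $k$ and the given $\rho$. The running-time bound $O(|E|+n\log n)\cdot n^\rho/\rho$ transfers verbatim, since it has no dependence on $k$.

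For the size bound, I would split into two cases. In the typical case $k = \lceil \log_2(\kappa+1)\rceil$, we have $2^k - 1 \ge \kappa$, so $O(n^{1+1/(2^k-1)}) \le O(n^{1+1/\kappa})$ directly. In the capped case $k = \lfloor \log\log n - 2\rfloor$, one has $2^k - 1 = \Theta(\log n)$, so $n^{1/(2^k-1)} = O(1)$, giving $|H| = O(n) \le O(n^{1+1/\kappa})$. For the hopbound, the theorem yields $\beta = O((k+1/\rho)/\epsilon)^{k+1/\rho+1}$; since in all cases $k \le \log_2(\kappa+1) + 1 = O(\log \kappa + 1)$, the exponent $k + 1/\rho + 1$ is $O(\log\kappa + 1/\rho + 1)$ and the base is $O((\log\kappa + 1/\rho)/\epsilon)$, matching the claimed expression.

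The only point requiring any care is the boundary $\kappa = 1$ (where $\log \kappa = 0$ and the theorem forces $k \ge 2$) and the high end where the cap at $\lfloor \log\log n - 2\rfloor$ kicks in. The former introduces only a constant additive offset, absorbed into the $O(\cdot)$ in both base and exponent; the latter already produces a linear-size hopset, which is stronger than required. Hence no real obstacle arises, and the statement follows by plugging the chosen $k$ into the theorem and simplifying.
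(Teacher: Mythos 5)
Your proposal follows exactly the paper's route: the paper simply states that the corollary is obtained ``by substituting $\kappa = 2^k - 1$'' into the preceding theorem, which is the same $k \approx \log_2(\kappa+1)$ substitution you make. You supply somewhat more detail than the paper by handling the boundary constraints $1 < k \le \log\log n - 2$ explicitly (rounding to an integer, flooring at $2$, and capping at $\lfloor \log\log n - 2\rfloor$ where the size becomes $O(n)$), but these refinements are exactly what is needed for the substitution to be well-defined and they do not change the argument.
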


\section{Distributed Models}\label{sec:dist}

We will consider two standard models in distributed computing: the Congested Clique model, and the CONGEST model. In both models every vertex of an $n$-vertex graph $G = (V,E)$ hosts a processor, and the processors communicate with one another in discrete rounds, via short messages. Each message is allowed to contain an identity of a vertex, an edge weight, a distance in the graph, or anything else of no larger (up to a fixed constant factor) size.\footnote{Typically, in the CONGEST model only messages of size $O(\log n)$ bits are allowed, but edge weights are restricted to be at most polynomial in $n$. Our definition is geared to capture a more general situation, when there is no restriction on the aspect ratio. Hence results achieved in our more general model are more general than previous ones.} The local computation is assumed to require zero time, and we are interested in algorithms that run for as few rounds as possible.
In the Congested Clique model, we assume that all vertices are interconnected via direct edges, while in the CONGEST model, every vertex can send messages only to its $G$-neighbors (the weight of edges is irrelevant to the communication time).

\subsection{Congested Clique Model}\label{sec:CC}

We first show how to construct the hopset in the Congested Clique model. In order to avoid a high number of rounds when computing distances for determining the bunches $\{B(u)\}$, we built the hopset in $\log n$ levels, where each level $\ell$ hopset will only "take care" of pairs that have a shortest path with {\em at most $2^\ell$ hops}. This is somewhat different from previous works \cite{C00,N14,HKN16,EN16}, in which the level $\ell$ hopset handled pairs with {\em distance in the range $[2^\ell,2^{\ell+1}]$}. A few advantages of our current approach: it easily avoids the dependency on the ratio between largest and smallest distance, and also the final hopset is just the level $\log n$ one, so we can obtain a linear size hopset (unlike previous works which took the union of all levels). 

There are a few technical difficulties in implementing the algorithm of \sectionref{sec:hop} in a distributed setting. The first is that the \cite{TZ01} method for computing the bunches $B(u)$ was to compute their "inverses" -- called clusters.\footnote{The cluster $C(v)$ is defined as follows: each point $u\in C(v)$ iff $v\in B(u)$.} Alas, it is not known how to compute these clusters in a distributed manner when errors are allowed. Rather, we compute the bunches directly, and to avoid the potential large congestion (a vertex may be a part of many bunches, and needs to send messages for all of them), we replace the bunches with {\em half-bunches} (i.e., taking only points closer than half the distance to the pivot). See below for the formal definition. The second issue (of congestion) is more subtle, and arises since hop-bounded distances do not obey the triangle inequality. For the stretch analysis to go through, we need that the weight of the hopset edges will be bounded by a certain path between the end-points of the edge (see \eqref{eq:eq2}). In order to ensure that this happens, we build each hopset for level $\ell$ gradually, i.e. the bunches are created first for $A_0\setminus A_1$, then for $A_1\setminus A_2$ and so on, where each time the partial hopset is added to the graph on which we compute distance. 

We say that $H$ is a {\em $(\beta,\epsilon,t)$-hopset} if $G\cup H$ provides $(1+\epsilon)$ approximation with at most $\beta$ hops for all pairs $x,y\in V$ such that $d_G(x,y)=d_G^{(t)}(x,y)$ (i.e., the pairs that have a shortest path consisting of at most $t$ edges between them).
Note that the empty set is a $(1,0,1)$-hopset (and thus also a $(\beta,\epsilon,1)$-hopset for all $\beta\ge 1$ and $\epsilon>0$). Given a $(\beta,\epsilon_{\ell-1},2^{\ell-1})$-hopset $H^{(\ell-1)}$, we build a $(\beta,\epsilon_\ell,2^\ell)$-hopset $H^{(\ell)}$, where $1+\epsilon_\ell=(1+\epsilon)^\ell$ for some $0<\epsilon<1/(5\log n)$. The final hopset will be $H^{(\log n)}$. (We stress that the previous hopsets $H^{(1)},\dots,H^{(\ell-1)}$ are only used to compute $H^{(\ell)}$, and are not contained in it.)

Observe that $H^{(\ell-1)}$ is a $(2\beta,\epsilon_{\ell-1},2^\ell)$-hopset, since every path with at most $2^\ell$ hops can be partitioned into two paths of at most $2^{\ell-1}$ hops each, and $H^{(\ell-1)}$ provides a $1+\epsilon_{\ell-1}$ approximation with $\beta$ hops for each of these. It follows that for any $x,y\in V$,
\begin{equation}\label{eq:l-1}
d_{G\cup H^{(\ell-1)}}^{(2\beta)}(x,y)\le(1+\epsilon_{\ell-1})\cdot d_G^{(2^\ell)}(x,y)~.
\end{equation}

We sample sets $V=A_0\supseteq A_1\supseteq\dots\supseteq A_{k'+1}=\emptyset$ as in \sectionref{sec:implem}, where $k'=i_1\le k+1/\rho+1$ is the number of sets.  We introduce a subtle change to the construction -- in the previous section we defined for each $u\in A_i\setminus A_{i+1}$ a set $B(u)$, and added the edges $(u,v)$ for all $v\in B(u)$, simultaneously for all $0\le i\le k'$. Here we shall build the hopset $H=H^{(\ell)}$ gradually: For each $i=0,1,\ldots,k'$ we define a set of edges $H_i=H_i^{(\ell)}$ corresponding to the bunches of vertices in $V\setminus A_{i+1}$, and finally take $H=H_{k'}$.

Fix some $0\le i\le k'$, and assume we built the set $H_{i-1}$ (when $i=0$ define $H_{-1}=\emptyset$). We shall work in the graph $G_i$, defined by
\[
G_i=G\cup H^{(\ell-1)}\cup H_{i-1}~.
\]
The algorithm consists of two stages. On the first stage,
for $8\beta$ rounds, run a Bellman-Ford exploration in $G_i$ rooted at $A_{i+1}$, to obtain for each $u\in V$ the value $\hat{d}(u,A_{i+1})=d_{G_i}^{(8\beta)}(u,A_{i+1})$. (If some vertex $v\in V$ was not found in the exploration, then we set $\hat{d}(v,A_{i+1})=\infty$.)
Also for each vertex $u\in A_i\setminus A_{i+1}$ with $\hat{d}(u,A_{i+1})<\infty$, store $p(u)$ as a vertex $p\in A_{i+1}$ satisfying $\hat{d}(u,A_{i+1})=d_{G_i}^{(8\beta)}(u,p)$.

To determine the bunches (this is the second stage), each vertex $u\in A_{i}\setminus A_{i+1}$ conducts another Bellman-Ford exploration in the graph $G_i$ rooted at $u$, this time for only $4\beta$ rounds, i.e., half the number of hops $8\beta$ of the first exploration, to distance less than $\hat{d}(u,A_{i+1})/2$ (i.e., the messages whose origin is $u$ contain the value $\hat{d}(u,A_{i+1})$, and only vertices within this distance from $u$ forward the message in the next round). Define the half-bunch $B(u)=\{v\in A_{i}~:~ d_{G_i}^{(4\beta)}(u,v)<\hat{d}(u,A_{i+1})/2\}\cup\{p(u)\}$. Let
\begin{equation}
\label{eq:Hs}
H_i=H_{i-1}\cup\bigcup_{u\in A_{i}\setminus A_{i+1}}\{(u,v)~:~v\in B(u)\}~,
\end{equation}
and set the weight of the edge $(u,v)$ as the distance discovered in the exploration (i.e., $d_{G_i}^{(4\beta)}(u,v)$ for $v\in B(u)$ and $d_{G_i}^{(8\beta)}(u,p(u))$ for the pivot).

\begin{claim}\label{claim:size}
$\E[|H|]\le O(n^{1+\nu})$.
\end{claim}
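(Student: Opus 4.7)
By construction, $H = H_{k'}$ is the union over $i = 0, 1, \dots, k'$ of the edge sets $\{(u,v) : u \in A_i \setminus A_{i+1},\, v \in B(u)\}$, and with the convention $A_{k'+1} = \emptyset$ the top level contributes $B(u) = A_{k'}$ for every $u \in A_{k'}$. Thus
\[
|H| \;\le\; \sum_{i=0}^{k'-1} \sum_{u \in A_i \setminus A_{i+1}} |B(u)| \;+\; |A_{k'}|^2,
\]
and my plan is to establish $\E[|B(u)|] \le 1/p_i'$ for each $u \in A_i \setminus A_{i+1}$ with $i < k'$, and then invoke the identical summation that was already carried out in \sectionref{sec:implem}.

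The main technical step is to recover the Thorup--Zwick geometric argument for the approximate, hop-bounded half-bunches used here. I would order the vertices of $A_i \setminus \{u\}$ as $v_1, v_2, \dots$ in increasing order of $d_{G_i}^{(4\beta)}(u, \cdot)$ (breaking ties arbitrarily), and let $J$ be the smallest index with $v_J \in A_{i+1}$. The claim to establish is $|B(u)| \le J$. Since increasing the hop budget can only decrease distance,
\[
d_{G_i}^{(4\beta)}(u, v_J) \;\ge\; d_{G_i}^{(4\beta)}(u, A_{i+1}) \;\ge\; d_{G_i}^{(8\beta)}(u, A_{i+1}) \;=\; \hat{d}(u, A_{i+1}),
\]
so any non-pivot $v_j \in B(u)$ satisfies $d_{G_i}^{(4\beta)}(u, v_j) < \hat{d}(u, A_{i+1})/2 < d_{G_i}^{(4\beta)}(u, v_J)$, forcing $j < J$; adding the pivot itself contributes at most one more edge. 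This is precisely the reason the two Bellman--Ford stages use $8\beta$ and $4\beta$ rounds respectively---the factor-of-two slack absorbs the ``half'' in the half-bunch threshold and makes the geometric argument go through despite the use of approximate, hop-bounded distances.

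Next, I would condition on $A_0, \dots, A_i$ together with all randomness from earlier levels, so that $G_i$ and the ordering $v_1, v_2, \dots$ become deterministic, while $A_{i+1}$ remains an independent $p_i'$-sampled subset of $A_i$. Under this conditioning (and the harmless side-conditioning $u \notin A_{i+1}$, which affects only $u$), the index $J$ is geometric with parameter $p_i'$, giving $\E[J] \le 1/p_i'$. Unconditioning and linearity then yield $\E\bigl[\sum_{u \in A_i \setminus A_{i+1}} |B(u)|\bigr] \le N_i' / p_i'$, and the series $\sum_{i=0}^{k'-1} N_i'/p_i'$ is $O(n^{1+\nu})$ by the calculation already performed in \sectionref{sec:implem}. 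For the final level I would use $\E[|A_{k'}|^2] \le (N_{k'}')^2 + N_{k'}'$ (a binomial variance bound), which is $O(n^{2\nu}) = O(n^{1+\nu})$ by \eqref{eq:last-bunch}, completing the bound $\E[|H|] = O(n^{1+\nu})$.

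The hardest part of the plan, in my view, is the mismatch between the half-bunch definition (using approximate distances with two different hop caps) and the classical Thorup--Zwick geometric argument (phrased in terms of exact distances to the nearest pivot in $A_{i+1}$). Everything downstream---the geometric distribution of $J$, the level-by-level summation, and the last-level variance bound---is routine once that mismatch is resolved, and the $4\beta$-vs.-$8\beta$ comparison above is exactly the invariant the algorithm was designed to enforce.
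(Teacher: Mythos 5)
Your proof is correct and takes essentially the same route as the paper's: order $A_i$ by $d_{G_i}^{(4\beta)}(u,\cdot)$, exploit the $8\beta$-versus-$4\beta$ slack to show every non-pivot in the half-bunch $B(u)$ precedes the first $A_{i+1}$-vertex in that ordering, bound $\E[|B(u)|]$ by the geometric $1/p'_i$, and re-run the level-by-level summation from Section~\ref{sec:implem}. You are slightly more careful than the paper at the top level, replacing the informal $N'_{k'}\cdot N'_{k'}$ with the binomial second-moment bound $\E[|A_{k'}|^2]\le (N'_{k'})^2+N'_{k'}$, which tightens the same conclusion.
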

\begin{proof}
The argument is essentially the same as the one in \sectionref{sec:implem}. The only difference is that when analyzing in step $0\le i<k'$ the expected size of a bunch, i.e., $\E[|B(u)|]$ for some $u\in A_i\setminus A_{i+1}$, we consider the ordering on $V$ given by the $4\beta$-bounded distance from $u$ in the graph $G_i$, i.e., according to $d_{G_i}^{(4\beta)}(u,\cdot)$. Then the size of $B(u)$ is bounded by the index of the first vertex in this ordering that is included in $A_{i+1}$. Since every $v\in A_i$ is included in $A_{i+1}$ independently with probability $p_i'$, we have that $\E[|B(u)|]\le 1/p_i'$. In fact, $B(u)$ may have a smaller size than the first index included in $A_{i+1}$, since we use more hops for computing distances to pivots (which reduces the distance threshold for being in $B(u)$), and since we only take into the bunch points that are less than half the distance to the pivot. Finally, for the last level $k'$ we have that for $u\in A_{k'}$, $\E[|B(u)|]\le\E[|A_{k'}|]\stackrel{\eqref{eq:last-bunch}}{\le} n^{\nu}$. Combining this with bounds on $N'_i=\E[|A_i|]$ in \sectionref{sec:implem} we can bound the size of $H$ in the same manner.
\end{proof}
In fact, since $|B(u)|$ is stochastically bounded by a geometric distribution with parameter $p'_i\ge n^{-\rho}$, it follows that with high probability for all $v\in V$,
\begin{equation}\label{eq:size-of-bunch}
|B(v)|\le 4n^\rho\cdot\ln n~.
\end{equation}

\begin{claim}\label{claim:mem}
The number of rounds required is whp $O(n^\rho\cdot k'\cdot\log^2n\cdot\beta)$.
\end{claim}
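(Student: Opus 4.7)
The plan is to decompose the total round count as a sum over the $\log n$ hop-levels $\ell\in\{1,\ldots,\log n\}$ and the $k'+1$ bunch-levels $i\in\{0,\ldots,k'\}$, and to show that each $(\ell,i)$-iteration costs $O(\beta\cdot n^\rho\log n)$ rounds in the Congested Clique. Within each iteration I would handle the two stages separately. Stage~1 (the multi-source Bellman-Ford rooted at $A_{i+1}$ in $G_i$) is easy: each vertex maintains a single $\langle\text{distance},\text{pivot}\rangle$ pair, and at each of the $8\beta$ BF rounds it suffices to broadcast one $O(\log n)$-bit message to all other vertices in a single CC round, giving $O(\beta)$ CC rounds.

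The main obstacle is Stage~2, where up to $|A_i\setminus A_{i+1}|$ parallel BF explorations run simultaneously. Per BF round, vertex $v$ must forward, for every source $u$ with $d_{G_i}(u,v)<\hat{d}(u,A_{i+1})/2$, its current distance-to-$u$ estimate on each outgoing $G_i$-edge; hence the congestion on any channel $(v,w)$ per BF round equals the number of such sources involving $v$. To bound this number, let $y$ be the closest vertex of $A_{i+1}$ to $v$ in $G_i$; for any relevant source $u$, the triangle inequality gives $\hat{d}(u,A_{i+1})\le d_{G_i}(u,y)\le d_{G_i}(u,v)+d_{G_i}(v,y)$, and combining with $d_{G_i}(u,v)<\hat{d}(u,A_{i+1})/2$ yields $d_{G_i}(u,v)<d_{G_i}(v,y)$. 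Hence $u$ must lie in the prefix of $A_i$, ordered by $G_i$-distance from $v$, that precedes the first element sampled into $A_{i+1}$. Since each element of $A_i$ is included in $A_{i+1}$ independently with probability $p'_i\ge n^{-\rho}$, this prefix length is stochastically dominated by a geometric random variable with parameter $p'_i$, and is at most $O(n^\rho\log n)$ w.h.p.; a union bound over all $v$, $i$, and $\ell$ preserves the bound (cf.\ \eqref{eq:size-of-bunch}).

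With the congestion bounded by $O(n^\rho\log n)$ per BF round, Stage~2 consumes $4\beta\cdot O(n^\rho\log n)=O(\beta\cdot n^\rho\log n)$ CC rounds. The weight of each new hopset edge $(u,v)$ is learned by both endpoints directly from the BF messages (which carry both the running distance estimate and $\hat{d}(u,A_{i+1})$), so no additional rounds are needed to install $H_i\setminus H_{i-1}$ into $G_{i+1}$ before proceeding to bunch-level $i+1$ or hop-level $\ell+1$. Summing over $\ell\in\{1,\ldots,\log n\}$ and $i\in\{0,\ldots,k'\}$ produces the claimed bound $O(n^\rho\cdot k'\cdot\log^2 n\cdot\beta)$, where one $\log n$ factor accounts for the number of hop-levels and the other for the high-probability congestion bound.
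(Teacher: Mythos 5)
Your decomposition over the $\log n$ hop-levels and $k'+1$ bunch-levels, the observation that Stage~1 costs only $O(\beta)$ rounds in the Congested Clique, and the high-level plan of bounding the Stage~2 congestion at each $v$ by a geometric variable with parameter $p'_i$ via a triangle-inequality argument all match the paper's proof. But there is a genuine error in the triangle-inequality chain, and it is exactly the point the paper's choice of hop-counts is designed to handle. You write $\hat{d}(u,A_{i+1})\le d_{G_i}(u,y)$, but $\hat{d}(u,A_{i+1})=d_{G_i}^{(8\beta)}(u,A_{i+1})$ is an \emph{$8\beta$-bounded} distance, hence $\ge d_{G_i}(u,A_{i+1})$, and there is no reason for it to be bounded above by the \emph{unrestricted} distance $d_{G_i}(u,y)$: the hop-bounded distance could even be $\infty$ while $d_{G_i}(u,y)$ is finite. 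The correct chain must stay within hop-bounded distances: order $A_i$ by the $4\beta$-bounded distance $d_{G_i}^{(4\beta)}(v,\cdot)$, let $z$ be the first sampled vertex, and use $\hat{d}(u,A_{i+1})\le d_{G_i}^{(8\beta)}(u,z)\le d_{G_i}^{(4\beta)}(u,v)+d_{G_i}^{(4\beta)}(v,z)\le 2d_{G_i}^{(4\beta)}(u,v)$ for $u$ appearing after $z$. The first inequality here is valid precisely because concatenating a $4\beta$-hop path from $u$ to $v$ with a $4\beta$-hop path from $v$ to $z$ gives an $8\beta$-hop path, and Stage~1 ran for $8\beta$ rounds. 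This is the entire reason for the $8\beta$-vs-$4\beta$ asymmetry between the two stages, which your argument silently drops.

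A second, smaller gap: your bound relies on there being a sampled vertex of $A_{i+1}$ near $v$, so it breaks down at the top level $i=k'$ where $A_{k'+1}=\emptyset$. The paper closes this by bounding $|A_{k'}|$ directly: with high probability $|A_{k'}|=O(n^\nu\log n)\le O(n^\rho\log n)$, so the number of concurrent sources at the top level is controlled by the size of $A_{k'}$ rather than by a pivot argument. You need to add this case before the per-iteration bound $O(\beta\cdot n^\rho\log n)$ holds for all $i\in\{0,\ldots,k'\}$.
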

\begin{proof}
The sampling of the sets $A_i$ is done independently for each vertex, therefore it requires no communication. For each $1\le\ell\le\log n$ and $0\le i< k'$, we conduct a single Bellman-Ford exploration in $G_i$ rooted at $A_{i+1}$ for $8\beta$ rounds. Since in the Congested Clique model all edges are present, this requires $O(\beta)$ rounds per exploration (every vertex sends just a single message to all its neighbors every round). The more expensive step is the explorations to range $4\beta$ rooted at each $u\in A_i\setminus A_{i+1}$. The number of rounds in these explorations is affected by the number of messages a vertex $v\in V$ needs to forward to its neighbors at each round.
In what follows we prove that with high probability this number is at most $O(n^\rho\log n)$ for every $v\in V$.

Fix $0\le i<k'$. Consider $v\in V$, and order the vertices of $A_i$ according to their $4\beta$-bounded distance to $v$ in $G_i$, that is, according to $d_{G_i}^{(4\beta)}(v,\cdot)$. Since $p_i'\ge n^{-\rho}$, the probability that none of the first $2n^\rho\ln n$ vertices in that ordering is sampled to $A_{i+1}$ is at most
\[
\left(1-n^{-\rho}\right)^{2n^\rho\ln n}\le 1/n^2~.
\]
So by the union bound on the $n$ vertices, with high probability, for each $v\in V$ at least one of the first $2n^\rho\ln n$ vertices in its ordering of $A_i$ is sampled to $A_{i+1}$. Denote by $z\in A_{i+1}$ the first vertex in the ordering of $v$ that was chosen to $A_{i+1}$. We claim that no vertex $u\in A_i$, that appears after $z$ in the ordering of $v$, will cause $v$ to forward messages concerning $B(u)$. This is because in the first stage we performed the Bellman-Ford exploration rooted at $A_{i+1}$ for $8\beta$ rounds. Thus
\[
\hat{d}(u,A_{i+1})\le d_{G_i}^{(8\beta)}(u,z)\le d_{G_i}^{(4\beta)}(u,v)+d_{G_i}^{(4\beta)}(v,z)\le 2d_{G_i}^{(4\beta)}(u,v)~,
\]
where the last inequality uses the assumption that $u$ appeared after $z$ in $v$'s ordering. We obtained $d_{G_i}^{(4\beta)}(u,v)\ge\hat{d}(u,A_{i+1})/2$. So by the definition of half bunch, $v\notin B(u)$ and thus, $v$  will not forward $u$'s messages.

We still have to argue about the last level $i=k'$ (since no vertex is chosen to $A_{k'+1}$). Recall that the expected size of $A_{k'}$ is bounded by $n^{\nu}$, as shown in \eqref{eq:last-bunch}. It can be easily checked that whp $|A_{k'}|\le O(n^\nu\cdot\log n)\le O(n^\rho\cdot\log n)$.
(Recall that $\rho \ge 2\nu$.)
We conclude that whp, every vertex needs to send at most $O(n^\rho\log n)$ messages to implement a single step of Bellman-Ford. There are $O(\beta)$ rounds for each $\ell=1,\dots,\log n$ and each $0\le i\le k'$, so the total number of rounds required is $O(n^\rho\cdot k'\cdot\log^2n\cdot\beta)$.
\end{proof}

Next, we prove an analogue of \lemmaref{lem:main} for the distributed setting. There are several subtle differences described in the beginning of this section. So we provide a complete proof that addresses these subtleties.

\begin{lemma}\label{lem:dist-main}
Fix any $0<\delta<1/(15k')$, set $\beta=(3/\delta)^{k'}$, and let $x,y\in V$ be such that $d_G(x,y)=d_G^{(2^\ell)}(x,y)$. Then for every $0\le i\le k'$, at least one of the following two assertions holds:
\begin{enumerate}
\item $d_{G\cup H_i}^{((3/\delta)^i)}(x,y)\le (1+\epsilon_{\ell-1})\cdot(1+12\delta i)\cdot d_G(x,y)$.
\item There exists $z\in A_{i+1}$ such that $d_{G\cup H_i}^{((3/\delta)^i)}(x,z)\le 3 d_G(x,y)$.
\end{enumerate}
\end{lemma}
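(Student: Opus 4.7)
The plan is to adapt the inductive proof of \lemmaref{lem:main} to the distributed setting, accounting for three sources of slack that were not present before: (i) the weight of a hopset edge $(u,v) \in H_i$ is a $G_i$-bounded distance rather than the true $d_G(u,v)$, (ii) the pivot distance $\hat{d}(u, A_{i+1})$ is computed via an $8\beta$-bounded Bellman-Ford in $G_i$, not exactly, and (iii) half-bunches are used, so $y \notin B(u)$ yields $d_{G_i}^{(4\beta)}(u,y) \ge \hat{d}(u,A_{i+1})/2$ rather than $d_G(u,y) \ge d_G(u,A_{i+1})$. The prior-level hopset $H^{(\ell-1)}$ is absorbed by the $(1+\epsilon_{\ell-1})$ factor in both items, using \eqnref{eq:l-1} applied to sub-paths of the hypothesized $2^\ell$-hop shortest $x$--$y$ path. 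I will proceed by induction on $i$, mirroring the original argument but reshaping each estimate to carry the extra $(1+\epsilon_{\ell-1})$ factor and the factor-of-$2$ loss from half-bunches (which is why item 2 here asks for $3\, d_G(x,y)$ instead of $2\,d_G(x,y)$).

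For the base case $i=0$: if $y \in B(x)$, the edge $(x,y) \in H_0$ has weight $d_{G_0}^{(4\beta)}(x,y) \le d_{G\cup H^{(\ell-1)}}^{(2\beta)}(x,y) \le (1+\epsilon_{\ell-1})\, d_G(x,y)$ by \eqnref{eq:l-1}, yielding item 1. If $x \in A_1$, take $z=x$. Otherwise $x \in A_0 \setminus A_1$ and $y \notin B(x)$; the half-bunch condition gives $\hat{d}(x,A_1) \le 2 d_{G_0}^{(4\beta)}(x,y) \le 2(1+\epsilon_{\ell-1})\, d_G(x,y) \le 3 d_G(x,y)$, and the edge to $z=p(x)$ in $H_0$ has exactly this weight, giving item 2.

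For the inductive step I split the $2^\ell$-hop shortest path into $J\le 1/\delta$ segments $\{L_j = [u_j,v_j]\}$ exactly as in \lemmaref{lem:main} and apply the induction hypothesis at level $i$ to every pair $(u_j,v_j)$. If item 1 holds for every $j$, concatenation gives item 1 at level $i+1$ after checking the hop bound $(1/\delta)((3/\delta)^i + 1) \le (3/\delta)^{i+1}$. Otherwise let $l,r$ be the first/last indices where only item 2 holds, producing $z_l, z_r \in A_{i+1}$ with $d_{G\cup H_i}^{((3/\delta)^i)}(u_l,z_l), d_{G\cup H_i}^{((3/\delta)^i)}(v_r,z_r) \le 3\, d_G(\cdot,\cdot)$. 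The crucial new step is bounding the weight of the would-be hopset edge $(z_l,z_r)$: by the triangle inequality and \eqnref{eq:l-1} applied to the sub-path $u_l \to v_r$ of the original shortest path (which lies within $2^\ell$ hops in $G$), the path $z_l \to u_l \to v_r \to z_r$ has at most $2(3/\delta)^i + 2\beta \le 4\beta$ hops in $G_{i+1}$, so
\[
d_{G_{i+1}}^{(4\beta)}(z_l,z_r) \le 3 d_G(u_l,v_l) + (1+\epsilon_{\ell-1}) d_G(u_l,v_r) + 3 d_G(u_r,v_r).
\]
If $z_r \in B(z_l)$ at level $i+1$, this edge lies in $H_{i+1}$ with at most this weight; plugging it into the segment decomposition (with $d_G(u_l,v_l), d_G(u_r,v_r) \le \delta\, d_G(x,y)$) yields the bound $(1+\epsilon_{\ell-1})(1+12\delta(i+1))\, d_G(x,y)$, i.e.\ item 1. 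If $z_r \notin B(z_l)$ and $z_l \notin A_{i+2}$, then $z = p(z_l) \in A_{i+2}$ exists, the half-bunch condition gives $\hat d(z_l,A_{i+2}) \le 2\, d_{G_{i+1}}^{(4\beta)}(z_l,z_r)$, and combining with $d_G(x,u_l) + d_G(u_l,v_r) \le d_G(x,y)$ and the choice $\delta < 1/(15k')$ produces item 2 with constant $3$. The case $z_l \in A_{i+2}$ is handled by taking $z = z_l$.

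The main obstacle is the control of the hopset-edge weight in the $z_r \in B(z_l)$ subcase: one must simultaneously (a) verify that the three-leg path $z_l \to u_l \to v_r \to z_r$ fits into the $4\beta$-bounded Bellman-Ford used to define the weight, which forces the choice $\beta = (3/\delta)^{k'}$; (b) see that the middle leg only contributes $(1+\epsilon_{\ell-1}) d_G(u_l,v_r)$ thanks to \eqnref{eq:l-1}, so the stretch from $H^{(\ell-1)}$ does not compound with the stretch acquired at level $\ell$; and (c) check that the numerical slack $3\delta(i+1) \le 12 \delta(i+1)$ in the factor $(1+12\delta i)$ absorbs the $12\delta\, d_G(x,y)$ overhead coming from the detour around $L_l$ and $L_r$, which is exactly why the lemma replaces the $8$ in \lemmaref{lem:main} with $12$ and tightens the constraint on $\delta$ from $1/(8k)$ to $1/(15 k')$.
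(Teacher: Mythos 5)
Your proposal is correct and follows essentially the same route as the paper's proof: induction on $i$, the same segment decomposition of the $2^\ell$-hop shortest path, the same three sub-cases at the inductive step (first item everywhere / $z_r\in B(z_l)$ / $z_r\notin B(z_l)$ with a further split on $z_l\in A_{i+2}$), and the same key estimate on $d_{G_{i+1}}^{(4\beta)}(z_l,z_r)$ via the three-leg path $z_l\to u_l\to v_r\to z_r$ with the middle leg controlled by \eqnref{eq:l-1}. You correctly pinpoint the three places where slack is introduced (half-bunches, $8\beta/4\beta$-bounded pivot/bunch computations, and the $(1+\epsilon_{\ell-1})$ factor), and the constant bookkeeping leading to the $3$ in item 2 and the $12$ in item 1 agrees with the paper's. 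The only stylistic difference is that you substitute the \eqnref{eq:ggf}-type bounds into the edge-weight estimate immediately, whereas the paper carries them symbolically until the final concatenation and only then collects $3\,d_{G\cup H_i}^{((3/\delta)^i)}(u_l,z_l)+2\,d_{G\cup H_i}^{((3/\delta)^i)}(v_r,z_r)\le 15\delta\,d_G(x,y)$; the arithmetic is equivalent.
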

\begin{proof}
The proof is by induction on $i$. We start with the base case $i=0$. If it is the case that $x\in A_1$ then we can take $z=x$ and the second item holds trivially. Otherwise, consider the case that $x\in A_0\setminus A_1$ and $y\in B(x)$. Then we added an edge $(x,y)$ to $H_0$ of weight $d_{G_0}^{(4\beta)}(x,y)$ (the case that $y=p(x)$ is similar, replacing $4\beta$ by $8\beta$). Recall that $G_0=G\cup H^{(\ell-1)}$. Hence
\[
d_{H_0}^{(1)}(x,y)= d_{G_0}^{(4\beta)}(x,y) \le d_{G_0}^{(2\beta)}(x,y) \stackrel{\eqref{eq:l-1}}{\le} (1+\epsilon_{\ell-1})\cdot d_G^{(2^\ell)}(x,y) =(1+\epsilon_{\ell-1})\cdot d_G(x,y) ~,
\]
so the first item holds. The last case is that $x\in A_0\setminus A_1$ and $y\notin B(x)$. By definition of $H_0$, it must be that
\begin{equation}\label{eq:ynotin}
\hat{d}(x,A_1)\le 2d_{G_0}^{(4\beta)}(x,y)~.
\end{equation}
Since $p(x)\in B(x)$, the  edge $(x,p(x))$ is in the hopset, and its  weight is
\[
d^{(8\beta)}_{G_0}(x,p(x)) = \hat{d}(x,A_1)\stackrel{\eqref{eq:ynotin}}{\le} 2d_{G_0}^{(4\beta)}(x,y)\le 2(1+\epsilon_{\ell-1})\cdot d_G(x,y)<3d_G(x,y)~,
\]
where for the last two inequalities  we again used \eqref{eq:l-1} and the fact that $1+\epsilon_{\ell-1}< 3/2$. (The latter holds since we assume $\epsilon<1/(5\log n)$ and $\ell\le\log n$, so $(1+\epsilon_\ell)=(1+\epsilon)^\ell<e^{1/5}$). This proves the second item with $z=p(x)\in A_1$.

Assume the claim holds for $i$, and we prove for $i+1$. Consider the shortest path $\pi(x,y)$ between $x,y$ in $G$ that contains at most $2^\ell$ edges, and partition it into $J \le 1/\delta$ segments $\{L_j=[u_j,v_j]\}_{j\in[J]}$ as in the proof of \lemmaref{lem:main}.
We use the induction hypothesis for all pairs $(u_j,v_j)$ with parameter $i$. (By the virtue of lying on a shortest path that has at most $2^\ell$ edges, all these pairs satisfy $d_G^{(2^\ell)}(u_j,v_j)=d_G(u_j,v_j)$). Consider first the case that the first item holds for all of them, that is, $d_{G\cup H_i}^{((3/\delta)^i)}(u_j,v_j)\le (1+\epsilon_{\ell-1})\cdot(1+12\delta i)\cdot d_G(u_j,v_j)$. Then we take the path in $G\cup H_i$ that consists of the $(3/\delta)^i$-hops between each pair $u_j,v_j$, and the edges $(v_j,u_{j+1})$ of $G$. Since by (\ref{eq:Hs}),  $H_i\subseteq H_{i+1}$,  we have
\[
d_{G\cup H_{i+1}}^{((3/\delta)^{i+1})}(x,y)\le \sum_{j\in[J]}(d_{G\cup H_i}^{((3/\delta)^i)}(u_j,v_j)+d_G^{(1)}(v_j,u_{j+1}))\le(1+\epsilon_{\ell-1})\cdot(1+12\delta i)\cdot d_G(x,y)~,
\]
which concludes the proof for the first case.
The second case is that there are pairs $(u_j,v_j)$ for which only the second item holds. Let $l\in [J]$ (resp., $r\in[J]$) be the first (resp., last) index for which the first item does not hold for the pair $(u_l,v_l)$ (resp., $(u_r,v_r)$). Then there are $z_l,z_r\in A_{i+1}$ such that
\begin{equation}\label{eq:ggf}
d_{G\cup H_i}^{((3/\delta)^i)}(u_l,z_l)\le 3d_G(u_l,v_l)~\textrm{ and }~ d_{G\cup H_i}^{((3/\delta)^i)}(v_r,z_r)\le 3d_G(u_r,v_r)~.
\end{equation}
Consider first the case that $z_l\in A_{i+2}$. Then we take $z=z_l$, and derive
\begin{eqnarray*}
d_{G\cup H_{i+1}}^{((3/\delta)^{i+1})}(x,z)&\le& \sum_{j=1}^{l-1}\left(d_{G\cup H_i}^{((3/\delta)^i)}(u_j,v_j)+d_G^{(1)}(v_j,u_{j+1})\right)+d_{G\cup H_i}^{((3/\delta)^i)}(u_l,z_l)\\
&\stackrel{\eqref{eq:ggf}}{\le}&(1+\epsilon_{\ell-1})\cdot(1+12\delta i)\cdot d_G(x,u_l) + 3d_G(u_l,v_l)\\
&\le& 3d_G(x,y)~,
\end{eqnarray*}
where in the second inequality we used that the first item holds for all intervals until the $l$-th one, and in the final one that $1+\epsilon_{\ell-1}<3/2$ and $1+12\delta i<2$.

From now on assume $z_l\in A_{i+1}\setminus A_{i+2}$. Recall that the Bellman-Ford explorations that constructed  $H_{i+1}$ were conducted in the graph $G_{i+1}=G\cup H^{(\ell-1)}\cup H_i$. These explorations were conducted to hop-depth $8\beta$ on the first stage, and $4\beta$ on the second. This allows us to provide the following bound:
\begin{eqnarray}\label{eq:eq2}
d_{G_{i+1}}^{(4\beta)}(z_l,z_r)
&\le&d_{G\cup H_i}^{(\beta)}(z_l,u_l)+d_{G\cup H^{(\ell-1)}}^{(2\beta)}(u_l,v_r)+d_{G\cup H_i}^{(\beta)}(v_r,z_r)\\\nonumber
&\stackrel{\eqref{eq:l-1}}{\le}&d_{G\cup H_i}^{((3/\delta)^i)}(z_l,u_l)+(1+\epsilon_{\ell-1})\cdot d_G(u_l,v_r)+d_{G\cup H_i}^{((3/\delta)^i)}(v_r,z_r)~.
\end{eqnarray}
Here the first inequality follows by the triangle inequality, the second uses that $(3/\delta)^i\le\beta$,  that $u_l,v_r$ lie on a shortest path with at most $2^\ell$ hops, and that $H^{(\ell-1)}$ is a $(2\beta,\eps_{\ell-1},2^\ell)$ hopset.

Consider the case that $z_r\in B(z_l)$, then we have a hopset edge $(z_l,z_r)$ that was introduced in $H_{i+1}$. In particular, since we used $4\beta$ steps in the exploration from $z_l$, we have that
\begin{equation}\label{eq:eq22}
d_{H_{i+1}}^{(1)}(z_l,z_r)= d_{G_{i+1}}^{(4\beta)}(z_l,z_r)\stackrel{\eqref{eq:eq2}}{\le}d_{G\cup H_i}^{((3/\delta)^i)}(z_l,u_l)+(1+\epsilon_{\ell-1})\cdot d_G(u_l,v_r)+d_{G\cup H_i}^{((3/\delta)^i)}(v_r,z_r)~.
\end{equation}

Next, apply the inductive hypothesis on segments $\{L_j\}$ for $j<l$ and $j>r$, and in between use the detour via $u_l,z_l,z_r,v_r$. Since there are at most $1/\delta-1$ intervals for which we use the first item in the inductive hypothesis, the total number of hops we will need is at most $(1/\delta-1)\cdot(3/\delta)^i+1/\delta+2(3/\delta)^i+1$. This is at most $(3/\delta)^{i+1}$ whenever $\delta<1/2$. It follows that
\begin{eqnarray*}
d_{G\cup H_{i+1}}^{((3/\delta)^{i+1})}(x,y)
&\le& \sum_{j=1}^{l-1}\left[d_{G\cup H_i}^{((3/\delta)^i)}(u_j,v_j)+d_G^{(1)}(v_j,u_{j+1})\right]+d^{((3/\delta)^i)}_{G\cup H_i}(u_l,z_l)+ d_{H_{i+1}}^{(1)}(z_l,z_r)\\
&&~+ d^{((3/\delta)^i)}_{G\cup H_i}(z_r,v_r)+d_G^{(1)}(v_r,u_{r+1})+ \sum_{j=r+1}^{J}\left[d_{G\cup H_i}^{((3/\delta)^i)}(u_j,v_j)+d_G^{(1)}(v_j,u_{j+1})\right]\\
&\stackrel{\eqref{eq:eq22}}{\le}& (1+\epsilon_{\ell-1})\cdot\Big[(1+12\delta i)\cdot d_G(x,u_l)+d_G(u_l,v_r)+(1+12\delta i)\cdot d_G(v_r,y)\Big]\\
&&~+2d^{((3/\delta)^i)}_{G\cup H_i}(u_l,z_l)+2d^{((3/\delta)^i)}_{G\cup H_i}(z_r,v_r)\\
&\stackrel{\eqref{eq:ggf}}{\le}&(1+\epsilon_{\ell-1})\cdot(1+12\delta i)\cdot d_G(x,y)+12\delta\cdot d_G(x,y)\\
&\le&(1+\epsilon_{\ell-1})\cdot(1+12\delta(i+1))\cdot d_G(x,y)~.
\end{eqnarray*}
In the penultimate inequality we used that both $d_G(u_l,v_l),d_G(u_r,v_r)\le \delta\cdot d_G(x,y)$.
This demonstrates that item 1 holds in this case.

The final case to consider is that $z_r\notin B(z_l)$ (and $z_l\in A_{i+1}\setminus A_{i+2}$). Let $z=p(z_l)\in A_{i+2}$. Since $z_r\in A_{i+1}$, the definition of $B(z_l)$ implies that
\begin{equation}\label{eq:frrf}
d^{(1)}_{H_{i+1}}(z_l,z)=\hat{d}(z_l,A_{i+2})\le 2d_{G_{i+1}}^{(4\beta)}(z_l,z_r)~.
\end{equation}
(Recall that $G_{i+1} = G \cup H^{(\ell-1)} \cup H_i$.)

We now claim that item 2 holds for such a choice of $z$. Indeed,
by (\ref{eq:ggf}), we have
\begin{equation}
\label{eq:my_ggf}
3 \cdot d_{G \cup H_i}^{(3/\delta)^i}(u_l,z_l) + 2 \cdot d_{G \cup H_i}^{(3/\delta)^i}(v_r,z_r) ~\le~ 15 \cdot d_G(x,y)~.
\end{equation}
Hence,
\begin{eqnarray*}
d_{G\cup H_{i+1}}^{((3/\delta)^{i+1})}(x,z)&\le&\sum_{j=1}^{l-1}\left[d_{G\cup H_i}^{((3/\delta)^i)}(u_j,v_j)+d_G^{(1)}(v_j,u_{j+1})\right]+d^{((3/\delta)^i)}_{G\cup H_i}(u_l,z_l)+d^{(1)}_{H_{i+1}}(z_l,z)\\
&\stackrel{\eqref{eq:frrf}\wedge\eqref{eq:eq2}}{\le}&(1+\epsilon_{\ell-1})\cdot(1+12\delta i)\cdot d_G(x,u_l)+d^{((3/\delta)^i)}_{G\cup H_i}(u_l,z_l) \\
&&~+ 2\cdot\left[d_{G\cup H_i}^{((3/\delta)^i)}(z_l,u_l)+(1+\epsilon_{\ell-1})\cdot d_G(u_l,v_r)+d_{G\cup H_i}^{((3/\delta)^i)}(v_r,z_r)\right]\\
&\stackrel{\eqref{eq:my_ggf}}{\le}&(1+\epsilon_{\ell-1})\cdot(1+12\delta i)\cdot d_G(x,u_l) +15\delta\cdot d_G(x,y) +2(1+\epsilon_{\ell-1})\cdot d_G(u_l,v_r)\\
&\le& 3d_G(x,y)~,
\end{eqnarray*}
where the last inequality we used that $\delta<1/(15k)$, $k\ge 2$ and $1+\epsilon_{\ell-1}<e^{1/5}<5/4$, so that both $(1+\epsilon_{\ell-1})\cdot(1+12\delta i)+15\delta\le 3$ and $2(1+\epsilon_{\ell-1})+15\delta\le 3$.
\end{proof}

Taking $\delta=\epsilon/(15k')$ and picking $i=k'$, the second item of \lemmaref{lem:dist-main} cannot hold for any $x,y\in V$ (because $A_{k'+1}=\emptyset$), so we have for every $x,y\in V$ such that $d_G^{(2^\ell)}(x,y)=d_G(x,y)$ that
\[
d_{G\cup H^{(\ell)}}^{(\beta)}(x,y)\le (1+\epsilon_{\ell-1})\cdot(1+\epsilon)\cdot d_G^{(2^\ell)}(x,y)=(1+\epsilon_\ell)\cdot d_G(x,y)~.
\]
Recall that $\beta = (3/\delta)^{k'} = (45k'/\eps)^{k'}$.
Rescaling $\epsilon'=\epsilon/\log n$ and taking $\ell=\log n$, we derive the following theorem.

\begin{theorem}\label{thm:dist-main}
For any weighted graph $G=(V,E)$ on $n$ vertices, an integer $k> 1$, and parameters $0<\rho<1$, $0<\epsilon<1/5$, there is a distributed algorithm in the Congested Clique model running in $\tilde{O}(n^\rho\cdot\beta)$ rounds, that computes a $(\beta,\epsilon)$-hopset $H$ of size at most $O(n^{1+1/(2^k-1)})$, where
\begin{equation}\label{eq:beta}
\beta=O\left(\frac{(k+1/\rho)\cdot\log n}{\epsilon}\right)^{k+1/\rho+1}~.
\end{equation}
\end{theorem}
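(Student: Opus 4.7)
The plan is essentially to package together the three building blocks already proved: \claimref{claim:size} for hopset size, \claimref{claim:mem} for the number of rounds, and \lemmaref{lem:dist-main} for correctness. The only nontrivial step is choosing $\epsilon$ at the right scale so that the errors accumulated across the $\log n$ hop-scales telescope to a $(1+\epsilon)$ factor overall.

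First I would show by induction on $\ell = 0, 1, \ldots, \log n$ that $H^{(\ell)}$ is a $(\beta, \epsilon_\ell, 2^\ell)$-hopset, where $1 + \epsilon_\ell = (1+\epsilon')^\ell$ and $\epsilon' = \epsilon/\log n$. The base case $H^{(0)} = \emptyset$ is trivial (single-edge paths are exact). For the inductive step, set $\delta = \epsilon'/(15 k')$ and $\beta = (3/\delta)^{k'} = (45 k'/\epsilon')^{k'}$, and apply \lemmaref{lem:dist-main} at level $i = k'$ to an arbitrary pair $x,y$ with $d_G(x,y) = d_G^{(2^\ell)}(x,y)$. Since $A_{k'+1} = \emptyset$, the second alternative of the lemma is vacuous, so the first alternative yields $d_{G \cup H^{(\ell)}}^{(\beta)}(x,y) \le (1+\epsilon_{\ell-1})(1+12\delta k') d_G(x,y) \le (1+\epsilon_{\ell-1})(1+\epsilon') d_G(x,y) = (1+\epsilon_\ell) d_G(x,y)$, as required.

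Taking $\ell = \log n$ then drops the hop-restriction entirely, since any path in $G$ uses at most $n \le 2^{\log n}$ edges, so $H := H^{(\log n)}$ is an unconditional $(\beta, \epsilon_{\log n})$-hopset. Because $\epsilon' = \epsilon/\log n < 1/(5 \log n)$, we have $\epsilon_{\log n} = (1+\epsilon')^{\log n} - 1 \le \epsilon$, giving the desired stretch. Substituting $k' = O(k + 1/\rho)$ and $\epsilon' = \epsilon/\log n$ back into $\beta = (45 k'/\epsilon')^{k'}$ recovers the bound in \eqnref{eq:beta}. The size $O(n^{1+1/(2^k-1)})$ follows from \claimref{claim:size} applied to $H = H^{(\log n)}$ alone (the intermediate hopsets $H^{(0)}, \ldots, H^{(\log n - 1)}$ are discarded, which is precisely why we avoid the extra $\log n$ factor present in previous constructions that union together per-distance-scale hopsets). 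The round complexity is $O(n^\rho \cdot k' \cdot \log^2 n \cdot \beta) = \tilde{O}(n^\rho \beta)$ by \claimref{claim:mem}, absorbing the $k'$ and polylog overhead into the $\tilde{O}$.

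There is no genuine obstacle beyond the two parameter checks: $\delta = \epsilon'/(15 k') < 1/(15 k')$ is immediate from $\epsilon' < 1$, and $(1+\epsilon')^{\log n} \le 1 + \epsilon$ holds by convexity since $\epsilon' \le \epsilon/\log n$ and $\epsilon < 1/5$. The proof is then a direct composition of the already-established ingredients; the real technical work lives inside \lemmaref{lem:dist-main} (inner induction on $i$, handling the bounded-distance subtleties via half-bunches) and inside \claimref{claim:mem} (controlling congestion by the geometric-tail bound on bunch sizes).
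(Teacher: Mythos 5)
Your proof takes exactly the same route as the paper: apply \lemmaref{lem:dist-main} at $i = k'$ (so the second alternative is vacuous because $A_{k'+1} = \emptyset$), iterate over hop-scales $\ell = 0,\ldots,\log n$, discard all but the top-scale hopset $H^{(\log n)}$ for the size bound from \claimref{claim:size}, and invoke \claimref{claim:mem} for the round complexity. The composition is correct.

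One small slip: the claim that $(1+\epsilon')^{\log n} - 1 \le \epsilon$ with $\epsilon' = \epsilon/\log n$ ``holds by convexity'' is false, and in fact the inequality runs the other way --- by Bernoulli's inequality $(1+\epsilon/\log n)^{\log n} \ge 1 + \epsilon$. What you actually need is to take $\epsilon' = c\epsilon/\log n$ for a suitable constant $c < 1$ (e.g.\ $c = 1/2$), so that $(1+\epsilon')^{\log n} \le e^{c\epsilon} \le 1 + \epsilon$, the last step using $c \le \ln(1+\epsilon)/\epsilon$, which holds for $c \le 0.9$ and any $\epsilon < 1/5$. This constant is harmless since it is absorbed into the $O(\cdot)$ in the formula for $\beta$, and the paper is equally terse on this point (it just writes ``rescaling $\epsilon' = \epsilon/\log n$''), but the justification you offered is incorrect as stated.
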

\begin{remark}
We note that by \eqref{eq:size-of-bunch} and \claimref{claim:mem}, the memory requirement from every vertex is $\tilde{O}(n^\rho)$. This is because the latter shows that this is a bound on the number of messages every vertex needs to send in each round, and the former indicates that whp storing $B(v)$ for any $v\in V$ requires only so much space.
\end{remark}



We remark that one can achieve $\beta$ independent of $n$ by either applying the construction recursively, as we do in \sectionref{sec:pram} for the parallel implementation, or by using an idea from \cite{EN16}. We next describe the latter: fix a parameter $t$, and use the hopset $H^{(\ell)}$ to compute the hopset $H^{(\ell+t)}$; Since $H^{(\ell)}$ is also a $(2^t\cdot\beta,\epsilon,2^{\ell+t})$-hopset, we need explorations to range $2^t\cdot\beta$ in order for an appropriate variant of \eqref{eq:l-1} to hold. There will be only $(\log n)/t$ levels until $H^{(\log n)}$ is built, so we gain a factor of $t$ in $\beta$. We derive the following result.
\begin{theorem}
For any weighted graph $G=(V,E)$ on $n$ vertices, integers $k> 1$, $t\ge 1$, and parameters $0<\rho<1$, $0<\epsilon<1/5$, there is a distributed algorithm in the Congested Clique model that runs in $\tilde{O}(n^\rho\cdot\beta\cdot 2^t/t)$ rounds, and computes $H$ of size at most $O(n^{1+1/(2^k-1)})$, which is a $(\beta,\epsilon)$-hopset, where
\[
\beta=O\left(\frac{(k+1/\rho)\cdot\log n}{t\cdot\epsilon}\right)^{k+1/\rho}~.
\]
\end{theorem}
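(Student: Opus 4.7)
The plan is to mimic the proof of \theoremref{thm:dist-main}, but to build the sequence $H^{(0)}, H^{(t)}, H^{(2t)}, \ldots, H^{(\log n)}$ instead of $H^{(0)}, H^{(1)}, H^{(2)}, \ldots, H^{(\log n)}$, thereby cutting the number of levels from $\log n$ to $(\log n)/t$. The price is that each level must handle paths that are longer by a factor of $2^t$.

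First I would observe that if $H^{(\ell)}$ is a $(\beta,\epsilon_\ell,2^\ell)$-hopset, then by splitting any $2^{\ell+t}$-hop path into $2^t$ consecutive segments of $2^\ell$ hops each and approximating each segment with $\beta$ hops, we get that $H^{(\ell)}$ is a $(2^t \cdot \beta,\epsilon_\ell,2^{\ell+t})$-hopset. This is the analog of the inequality on which \eqref{eq:l-1} rests, and it is the key property that lets us skip $t$ levels at a time. Concretely, for any $x,y\in V$,
\[
d_{G\cup H^{(\ell)}}^{(2^t\beta)}(x,y)~\le~(1+\epsilon_\ell)\cdot d_G^{(2^{\ell+t})}(x,y)~.
\]

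Next, I would run the same two-stage construction as in \sectionref{sec:CC}: at step $\ell \to \ell+t$, work inside $G_i = G\cup H^{(\ell)} \cup H_{i-1}$, perform a Bellman-Ford exploration rooted at $A_{i+1}$ to hop-depth $2^t \cdot 8\beta$ to determine the approximate pivot distances $\hat{d}(u, A_{i+1})$, and then from every $u\in A_i\setminus A_{i+1}$ launch an exploration to hop-depth $2^t\cdot 4\beta$ to define the half-bunch $B(u)$. The analysis of \lemmaref{lem:dist-main} goes through verbatim, with $\beta$ replaced by $2^t\cdot\beta$ wherever the explorations are invoked, and with \eqref{eq:l-1} replaced by its $2^{\ell+t}$-hop counterpart above. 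The conclusion is that $H^{(\ell+t)}$ is a $(\beta,\epsilon_{\ell+t},2^{\ell+t})$-hopset, where $1+\epsilon_{\ell+t} = (1+\epsilon_\ell)(1+\epsilon)$.

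Since there are only $(\log n)/t$ levels, rescaling $\epsilon' = \epsilon t/\log n$ keeps the accumulated multiplicative error below $1+\epsilon$, while $\beta$ becomes $O((k+1/\rho)\log n/(t\epsilon))^{k+1/\rho}$ as claimed (one factor of $t$ is saved in the base of the exponent compared to \eqref{eq:beta}). For the running time, each of the $(\log n)/t$ levels runs $k'+1$ pairs of Bellman-Ford explorations, each to depth $O(2^t \beta)$, and by the same argument as in \claimref{claim:mem} each vertex needs to forward only $\tilde{O}(n^\rho)$ messages per round. Hence the total round complexity is $\tilde{O}(n^\rho \cdot \beta \cdot 2^t / t)$. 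The size bound and the memory bound are unaffected by the skipping, since the sampling probabilities $\{p'_i\}$ and the half-bunch definition are identical to those of \sectionref{sec:CC}, so \claimref{claim:size} and \eqref{eq:size-of-bunch} continue to apply.

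The main obstacle I anticipate is simply verifying that the inductive analysis of \lemmaref{lem:dist-main} survives the replacement $\beta \mapsto 2^t\beta$ in the exploration depths; in particular one must check that inequality \eqref{eq:eq2} still holds, since it relies on $H^{(\ell)}$ approximating paths of the relevant hop length with sufficiently few hops. Using the strengthened statement that $H^{(\ell)}$ is a $(2^t\beta,\epsilon_\ell,2^{\ell+t})$-hopset exactly matches the hop budget of the new explorations, so this step goes through cleanly, and the rest of the proof is a routine tracking of constants.
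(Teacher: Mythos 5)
Your proposal is correct and follows essentially the same route as the paper: the paper's own (brief) derivation relies on exactly the observation you highlight, namely that $H^{(\ell)}$ is automatically a $(2^t\beta,\epsilon_\ell,2^{\ell+t})$-hopset, so one can build $H^{(0)}, H^{(t)}, H^{(2t)},\ldots$ and replace \eqref{eq:l-1} by its $2^{\ell+t}$-hop analogue with explorations to range $O(2^t\beta)$. Your accounting for the running time ($(\log n)/t$ levels, each with $k'+1$ iterations of Bellman--Ford to depth $O(2^t\beta)$, with $\tilde O(n^\rho)$ congestion per vertex) and for the error rescaling ($\epsilon' = \epsilon t/\log n$) matches the paper; one very minor point — which the paper itself glosses over — is that the natural calculation yields exponent $k+1/\rho+1$ rather than the $k+1/\rho$ written in the theorem statement, and your remark about saving a factor of $t$ ``in the base'' reproduces but does not resolve this off-by-one, so no issue on your end.
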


In particular, taking $t=\rho\log n$ and rescaling $\rho'=2\rho$, gives
\begin{corollary}
For any weighted graph $G=(V,E)$ on $n$ vertices, an integer $k> 1$, and parameters $0<\rho<1/2$, $0<\epsilon<1/5$, there is a distributed algorithm in the Congested Clique model that runs in $\tilde{O}(n^\rho\cdot\beta)$ rounds, and computes $H$ of size at most $O(n^{1+1/(2^k-1)})$, which is a $(\beta,\epsilon)$-hopset, where
\[
\beta=O\left(\frac{k+1/\rho}{\rho\cdot\epsilon}\right)^{k+2/\rho}~.
\]
\end{corollary}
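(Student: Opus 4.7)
The plan is to derive the corollary directly from the preceding theorem by a single substitution of $t$ followed by a rescaling of $\rho$. The underlying identity being exploited is $2^{\rho \log n} = n^{\rho}$, which lets one trade $\log n$ factors in the hopbound for $n^{\rho}$ factors in the round complexity.

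First, I would invoke the preceding theorem with $t := \rho \log n$. Plugging this into the claimed running time $\tilde{O}(n^{\rho}\cdot\beta\cdot 2^{t}/t)$ gives
\[
\tilde{O}\!\left(n^{\rho}\cdot\beta\cdot \frac{n^{\rho}}{\rho \log n}\right) = \tilde{O}(n^{2\rho}\cdot\beta),
\]
since the $1/(\rho \log n)$ factor is swallowed by the $\tilde{O}$ notation. The hopbound simplifies similarly: the $\log n$ in the numerator of the theorem's base cancels against $t$ in the denominator, leaving
\[
\beta = O\!\left(\frac{k+1/\rho}{\rho\cdot\epsilon}\right)^{k+1/\rho}.
\]

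Second, I would rescale by replacing $\rho$ with $\rho/2$; this is legitimate, because the corollary's hypothesis $0 < \rho < 1/2$ guarantees $\rho/2 \in (0,1/4) \subset (0,1)$, which is within the admissible range of the theorem. After this substitution, the round complexity $\tilde{O}(n^{2(\rho/2)}\cdot \beta) = \tilde{O}(n^{\rho}\cdot\beta)$, matching the corollary. The hopbound becomes
\[
\beta = O\!\left(\frac{k+2/\rho}{(\rho/2)\cdot\epsilon}\right)^{k+2/\rho} = O\!\left(\frac{k+1/\rho}{\rho\cdot\epsilon}\right)^{k+2/\rho},
\]
where in the last equality the multiplicative constants (the factor $2$ in the denominator and the fact that $k+2/\rho \le 2(k+1/\rho)$ for $k\ge 2$) are absorbed into the implicit constant of the big-$O$ inside the base. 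The size bound and stretch bound are inherited unchanged from the theorem, as they do not depend on $t$ or $\rho$.

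There is no substantive obstacle in this proof; the only step requiring any care is bookkeeping, namely verifying that the $\log n$ factors cancel exactly after setting $t = \rho \log n$, that the $\tilde{O}$ notation absorbs the $1/(\rho \log n)$ factor in the running time, and that the constants in the base of the hopbound (arising from $\rho/2$ versus $\rho$ and from $k+2/\rho$ versus $k+1/\rho$) can be absorbed into the big-$O$ at the base without affecting the stated form.
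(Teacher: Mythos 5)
Your proof is correct and takes essentially the same route as the paper's one-line derivation ("taking $t=\rho\log n$ and rescaling $\rho'=2\rho$"); you have simply filled in the arithmetic explicitly. The only tiny inaccuracy is that your inequality $k+2/\rho\le 2(k+1/\rho)$ in fact holds for all $k\ge 0$, not just $k\ge 2$, but this only makes your claim more conservative and does not affect correctness.
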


\subsection{CONGEST Model}\label{sec:CONGEST}

Given a weighted graph $G=(V,E,w)$ representing the network, in the CONGEST model we will be interested in a setting where there is a "virtual" graph $G'=(V',E',w')$ embedded in $G$, i.e., $V'\subseteq V$. We would like to construct a hopset for $G'$. It is motivated by distributed applications of hopsets for approximate shortest paths computation, distance estimation and routing \cite{HKN14,N14,HKN16,LP15,EN16a,EN16}, which require a hopset for a virtual graph embedded in the underlying network in the above way.

In a similar manner to \cite{EN16}, we can modify our algorithm in the Congested Clique model to the CONGEST model.
The following lemma provides a way to perform Bellman-Ford exploration using small memory.

\begin{lemma}\label{lem:BF-small}
Let $G''=(V',E'\cup H)$ be a virtual graph on $m$ vertices embedded in a graph $G=(V,E)$ of hop-diameter $D$, such that edges in $E'$ correspond to $B$-bounded distances in $G$, and $H$ has arboricity $\alpha$ (i.e., one can orient the edges of $H$ to have out-degree at most $\alpha$).
Moreover, every vertex $v' \in V'$ knows at most $\alpha$ its outgoing edges in $H$.
 Then one can compute $\beta$ iterations of Bellman-Ford in $G''$ in the CONGEST model within $O(m\cdot \alpha+B+D)\cdot\beta\cdot\log n$ rounds, so that every vertex requires only $O(\alpha\log n)$ memory. 
\end{lemma}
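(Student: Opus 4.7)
The plan is to execute one Bellman-Ford iteration on $G''$ at a time and to show that each costs $O(m\alpha+B+D)\log n$ rounds while the permanent memory of every vertex stays bounded by $O(\alpha\log n)$. Before the iterations start, we would construct a BFS tree $T$ of $G$ rooted at an arbitrary vertex in $O(D)$ rounds, where each vertex retains only its parent pointer. By hypothesis, every $v'\in V'$ already stores its (at most $\alpha$) outgoing $H$-edges together with their weights, so we do not need to redistribute them.

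For a single iteration, the target update $d_{\rm new}(v')=\min\{d(v'),\,\min_{u'\in N_{G''}(v')}(d(u')+w''(u',v'))\}$ splits naturally along $E'$-edges and along $H$-edges, and we would handle the two halves in parallel. The $E'$-contribution is computed by injecting the current $d$-values of $V'$ into $G$ (initializing every vertex in $V\setminus V'$ to $\infty$) and performing $B$ rounds of ordinary Bellman-Ford directly on $G$; since every $(u',v')\in E'$ corresponds by hypothesis to a $B$-bounded $u'$-$v'$ path in $G$, after these $B$ rounds every $v'\in V'$ holds $\min_{(u',v')\in E'}(d(u')+w'(u',v'))$. This phase uses $B$ rounds and $O(\log n)$ transient memory per vertex.

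For the $H$-contribution, every $v'\in V'$ would prepare at most $\alpha$ messages of the form $(u'_i,\,d(v')+w_H(v',u'_i))$, one per outgoing $H$-edge, for a grand total of at most $m\alpha$ messages. We would route them through $T$ by a pipelined upcast to the root followed by a pipelined downcast to the destinations, which is a standard CONGEST primitive delivering $N$ tokens through a depth-$D$ tree in $O((N+D)\log n)$ rounds while keeping every internal vertex's buffer at $O(\log n)$ bits at any moment (the $\log n$ factor absorbs a round-robin schedule on the concurrent streams). Each destination $u'$ keeps only the running minimum of the values it receives, spending only $O(\log n)$ transient memory. Finally, each $v'\in V'$ updates $d(v')$ by taking the minimum of its prior value and the two new contributions.

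Memory accounting is then immediate: every vertex permanently holds its own distance, its tree parent, and (if it lies in $V'$) its $\le\alpha$ outgoing $H$-edges, totalling $O(\alpha\log n)$ bits; all transient traffic is consumed in a streaming manner. Summing the $O(m\alpha+B+D)\log n$ per-iteration cost over $\beta$ iterations yields the claimed bound. The main technical obstacle will be precisely the $H$-routing stage: the $m\alpha$ pipelined tokens must be scheduled so that no internal tree vertex ever buffers more than $O(\alpha)$ of them at once, even though many token paths may coincide at a single vertex; the $\log n$ slack in the round count is exactly what lets a sort-by-destination schedule (or the classical Peleg-style upcast schedule) respect the memory budget.
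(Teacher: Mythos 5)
Your overall plan — iterate Bellman--Ford once at a time, splitting each iteration into an $E'$-phase and an $H$-phase, with the $E'$-phase implemented by $B$ rounds of relaxation in $G$ starting from the current $V'$-estimates — matches the paper, and the $E'$-phase is handled exactly as in the paper's proof. Where you depart is the $H$-phase. You route each of the $O(m\alpha)$ relaxation messages \emph{point-to-point} to its specific destination via an upcast/downcast on the BFS tree $T$. The paper instead has every $v'\in V'$ \emph{broadcast} through $T$ its current estimate together with its $\le\alpha$ stored outgoing $H$-edges, to the entire graph; each $w'\in V'$ simply listens, recognizes any edge $\{v',w'\}$ appearing in the broadcast traffic, and relaxes locally while keeping only a running minimum. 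This buys two things: no routing information of any kind is needed at intermediate tree vertices (every message traverses every vertex of $T$ exactly once), and the memory bound comes out of a one-line probabilistic argument — each message independently picks a uniformly random start round in $\{1,\dots,m\alpha\}$, and since each message visits each vertex once, the number of messages arriving at any fixed vertex in any fixed round is $O(\log n)$ whp.

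Your point-to-point alternative is where a genuine gap appears. You invoke a ``standard CONGEST primitive delivering $N$ tokens through a depth-$D$ tree in $O((N+D)\log n)$ rounds with $O(\log n)$ per-vertex buffer,'' but this is not standard in the form you need: the classical pipelined upcast of Peleg applies to \emph{combinable} aggregates (min, sum), not to $N$ distinct tokens headed to $N$ distinct destinations. For the downcast, an intermediate vertex must decide to which child each token should be forwarded — information a bare BFS tree does not provide, and which would itself require precomputed routing tables. For the upcast, with either a sort-by-destination or Peleg-style schedule, a near-root vertex may receive tokens from several children faster than it can drain them, and the $\log n$ slack in the round budget does not by itself cap its buffer at $O(\alpha\log n)$. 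You correctly flag this as ``the main technical obstacle'' but leave it unresolved; the paper's broadcast-with-random-start-times construction is precisely what makes this step trivial, and is what you would want to adopt to close the argument.
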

\begin{proof}
To implement a single iteration of the Bellman-Ford exploration, every vertex $v\in V'$, which holds a current distance estimate, will need to communicate it to its neighbors in $G''$. First, it will initiate an exploration in $G$ for $B$ rounds. In each round, every vertex $u\in V$ will forward the smallest value it received so far. This guarantees that if $\{v,w\}\in E'$, then $w$ will receive $v$'s message (or a smaller value). 

We now have to handle the edges of $H$.
Let $T$ be a spanning tree of $G$ with hop-depth $D$. Every $v\in V'$ will broadcast via $T$ its value to the entire graph, and will also send all the existing edges of $H$ incident on it that $v$ knows about. All vertices $w\in V'$ that know of a hopset edge $\{v,w\}$ (or that learn about it from $v$'s message) will update their value accordingly. Since there are $O(m\cdot \alpha)$ messages, this can be done in $O(m\cdot \alpha+D)$ rounds. In order to guarantee small internal memory, each $v$ selects at random a number from $\{1,2,\dots,m\cdot \alpha\}$ for each message it sends, as a round to start its broadcast (clearly this increases the number of rounds by at most $m\cdot \alpha$). Since
each message of $v$ will reach every vertex of $T$ at most once, the probability that some $u\in V$ receives $t$ messages in a single round is at most ${m\cdot\alpha\choose t}\cdot 1/(m\cdot \alpha)^t\le (e/t)^t$.  Thus, with high probability, no vertex will receive more than $O(\log n)$ messages each round. By increasing the number of rounds by $O(\log n)$, whp there will be no congestion. The total number of rounds required is thus $O(m\cdot \alpha+B+D)\cdot\beta\cdot\log n$.
\end{proof}

We now show how to use \lemmaref{lem:BF-small} to construct a hopset for $G'$, in the setting where $E'$ are edges corresponding to $B=\tilde{O}(m)$-bounded distances in $G$ (without computing $G'$ explicitly).
Recall that in the $i$th iteration of constructing $H=H^{(\ell)}$, we have already built the previous hopset $H^{(\ell-1)}$ and the partial hopset $H_{i-1}$. Since we desire limited memory, every vertex $v$ stores only the "outgoing" hopset edges, those to vertices in its bunch $B(v)$. Recall that by \eqref{eq:size-of-bunch}, whp $|B(v)|\le O(m^\rho\cdot\log n)$, for all $v\in V'$.

We work in the graph $G_i=G'\cup H^{(\ell-1)}\cup H_{i-1}$. In order to implement the $O(\beta)$-bounded exploration rooted at $A_{i+1}$
 (the second stage of the $i$th iteration), we simply apply \lemmaref{lem:BF-small} on $G_i$ with $\alpha=O(m^\rho\cdot\log n)$.
The explorations from vertices of $A_i\setminus A_{i+1}$ (the second stage of the $i$th iteration) are done in a similar manner. However, there is a  larger congestion than in the first stage, due to the multiple sources of limited explorations. Recall that in the limited exploration whose origin is $v\in A_i\setminus A_{i+1}$, each intermediate node $x\in V'$ forwards the message iff its current estimate is strictly less than $\hat{d}(v,A_{i+1})/2$ (this value is part of the message $v$ sends). We enforce the exact same rule for vertices $u\in V$ as well. If a message concerning $v$ should pass in $G'$ from $x$ to its neighbor $y$, then all vertices on the $B$-bounded path in $G$ that implements the edge $(x,y)\in E'$ will have estimates smaller than that of $y$, therefore will forward the message on.
In the proof of \claimref{claim:mem} we saw that each $x\in V'$ participates whp in at most $O(m^{\rho}\cdot \log n)$ explorations for each iteration $i$. The argument is identical for $u\in V$ as well, so the congestion induced in the first stage of \lemmaref{lem:BF-small} (the exploration in $G$ for $B$ rounds)
by multiple sources is only $O(m^{\rho}\cdot \log n)$. Note that in the second phase (broadcasting the edges of $H$), the number of messages increases to $O(m\cdot\alpha+m\cdot m^{\rho}\cdot \log n)$.  Thus, the total number of rounds required is still $\tilde{O}(m^{1+\rho}+D)\cdot\beta$.
We summarize the discussion with the following result.

\begin{theorem}\label{thm:dist-cong}
For any weighted graph $G=(V,E)$ with hop-diameter $D$, an integer $k> 1$, and parameters $0<\rho<1$, $0<\epsilon<1/5$, and (an implicit) virtual graph $G'=(V',E')$ embedded in $G$ on $|V'|=m$ vertices, there is a distributed algorithm in the CONGEST model that runs in $\tilde{O}(m^{1+\rho}+D)\cdot\beta$ rounds, that computes $H$, which is a $(\beta,\epsilon)$-hopset for $G'$, of size at most $O(m^{1+1/(2^k-1)})$, where
\[
\beta=O\left(\frac{(k+1/\rho)\cdot\log m}{\epsilon}\right)^{k+1/\rho+1}~.
\]
\end{theorem}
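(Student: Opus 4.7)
The plan is to lift the Congested Clique construction of \sectionref{sec:CC} into the CONGEST setting, using \lemmaref{lem:BF-small} as the primitive that simulates one Bellman--Ford step in $G'\cup H$ over the underlying network $G$. Scales $\ell = 1,\ldots,\log m$ are handled in order, outputting $H = H^{(\log m)}$; within each scale, levels $i = 0,1,\ldots,k'$ are processed exactly as in \sectionref{sec:CC}, and for each $i$ we run (a) an $8\beta$-hop Bellman--Ford in $G_i = G'\cup H^{(\ell-1)} \cup H_{i-1}$ rooted at $A_{i+1}$ to obtain $\hat d(\cdot,A_{i+1})$ and the pivots, and then (b) a $4\beta$-hop multi-source exploration from every $u\in A_i\setminus A_{i+1}$, truncated at radius $\hat d(u,A_{i+1})/2$, to populate the half-bunches. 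Each vertex of $V'$ stores only its outgoing hopset edges, which by \eqref{eq:size-of-bunch} number at most $\tilde{O}(m^\rho)$ with high probability; this is simultaneously the arboricity $\alpha$ fed into \lemmaref{lem:BF-small} and the individual memory bound we advertise.

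Stage (a) is direct: since every edge of $E'$ is realized by a $B = \tilde{O}(m)$-bounded path in $G$, one invocation of \lemmaref{lem:BF-small} simulates a Bellman--Ford step in $\tilde{O}(m\cdot\alpha + B + D) = \tilde{O}(m^{1+\rho}+D)$ rounds, so the whole $O(\beta)$-hop exploration costs $\tilde{O}(m^{1+\rho}+D)\cdot\beta$ rounds. Stage (b) is the step I expect to require most care, because it is launched simultaneously from every vertex of $A_i\setminus A_{i+1}$ and could in principle congest intermediate nodes of $V\setminus V'$ that lie on realizations of virtual edges. My plan is to extend the half-bunch pruning rule: each message carries its source's threshold $\hat d(u,A_{i+1})/2$, and \emph{any} vertex (in $V'$ or in $V\setminus V'$) drops the message as soon as its own current estimate reaches that threshold. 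Soundness follows because whenever a virtual edge $(x,y)\in E'$ must propagate a message in $G'$, every intermediate node on the $B$-bounded $G$-path realizing $(x,y)$ holds an estimate no greater than $y$'s, so the pruning never fires prematurely.

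Once this propagation rule is in place, the congestion argument of \claimref{claim:mem} applies to every vertex $w$, whether in $V'$ or in $V\setminus V'$: ordering $A_i$ by $4\beta$-bounded distance in $G_i$ to $w$, whp one of the first $O(m^\rho\log m)$ elements is sampled into $A_{i+1}$, and all later sources are then suppressed by the half-distance cutoff. Hence each vertex forwards messages from only $\tilde{O}(m^\rho)$ sources per level, and the message load controlled by \lemmaref{lem:BF-small} stays at $\tilde{O}(m^{1+\rho})$, yielding a total of $\tilde{O}(m^{1+\rho}+D)\cdot\beta$ rounds once the $\log m\cdot(k'+1)$ outer iterations are absorbed into the $\tilde{O}$. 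The stretch and hopbound guarantees transfer verbatim from \lemmaref{lem:dist-main}, since we have preserved both the combinatorial definition of the edges of $H^{(\ell)}$ and their weights; the size bound is \claimref{claim:size} and the value of $\beta$ matches \eqref{eq:beta}.
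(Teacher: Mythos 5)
Your proposal is correct and follows essentially the same path as the paper's proof: both lift the Congested Clique algorithm to CONGEST via \lemmaref{lem:BF-small} with $\alpha=\tilde{O}(m^\rho)$, both extend the half-distance cutoff rule to intermediate vertices of $V\setminus V'$ so that the congestion argument of \claimref{claim:mem} applies uniformly, and both absorb the $\log m\cdot(k'+1)$ outer iterations into the $\tilde{O}$ to get $\tilde{O}(m^{1+\rho}+D)\cdot\beta$ rounds.
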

\begin{remark}\label{rem:2}
In the case that $E'$ corresponds to $B=\tilde{O}(m)$-bounded distances in $G$, the hopset can be computed where every vertex has internal memory $\tilde{O}(m^\rho)$.
\end{remark}

\paragraph{Path-reporting Hopsets:}

Every hopset edge is implemented via some path in $G$. For our application to routing, we would like that every vertex on a path implementing a certain hopset edge will be aware of this hopset edge. This means that for every hopset edge $(x,y)\in H$, there exists a path $P$ in $G$ of length $w_H(x,y)$, and every vertex $u\in P$ knows about the hopset edge, and the distances $d_P(u,x)$, $d_P(u,y)$, and its neighbors on $P$.
It was shown in \cite{EN16} how to adapt the Bellman-Ford exploration, so that paths information can be stored as well, at a cost of increasing the size of messages by a factor of $O(\beta)$. However, there was no guarantee on the number of hopset edges a vertex $u\in V$ can be a part of, which can be devastating when one desires small memory per vertex. We describe now an approach that eliminates the need for the message's size increase, and also ensures that each vertex belongs to a bounded number of paths that implement hopset edges. The issue that may cause a vertex to be in a path for many hopset edges, is that we use previous hopsets to construct a new one. Then the vertices implementing paths in these previous hopsets may not be discovered by the current explorations. So the argument of \claimref{claim:mem} bounding the number of explorations that visit a certain vertex does not apply as is.


In order to guarantee that every $u\in V$ will need to store information for only $\tilde{O}(m^\rho)$ hopset edges, we need to slightly change the construction. First, we will define $H^{(\ell)}=H_{k'}\cup H^{(\ell-1)}$, so that every hopset will contain all the previous hopsets.
(Recall that in our algorithm in Section \ref{sec:CC} that computes non-path-reporting hopsets,
 we only used lower-scale hopsets to compute a higher-scale one. Once the mission of lower-scale hopsets was completed, they were ruthlessly erased.)
Second, rather than performing the exploration from $A_{i+1}$ in $8\beta$ steps, we apply \lemmaref{lem:BF-small} with $8\beta\cdot k'\cdot\log n+1$ steps of Bellman-Ford. Note that in the proof of correctness we only used that there are at least $8\beta$ steps. Using more steps will only increase the number of rounds (by a poly-logarithmic factor). Recall that when computing the hopset $H^{(\ell)}$ at phase $i$, we have already computed $H_{i-1}$, and work in the graph $G_i=G'\cup H^{(\ell-1)}\cup H_{i-1}$. We can now argue that whp, there will not be too many hopset edges whose path in $G$ contains $u$. The intuition is that the exploration from $A_{i+1}$ has sufficiently many hops in order to discover this $u$, and so an argument similar to the one of \claimref{claim:mem} will apply.

Fix $u\in V$, and order the vertices of $A_i$ in increasing order according to their distance to $u$, where the distance from $v\in A_i$ to $u$ is the shortest path consisting of at most $4\beta\cdot k'\cdot\log n$ edges of $G_i$ and then at most $B$ edges of $G$. Let $z$ be the first vertex in that order that is included in $A_{i+1}$. We claim that the vertex  $u$ cannot belong to a path $P$ that implements a hopset edge $(x,y)$, such that $x\in A_i\setminus A_{i+1}$ is after $z$ in the ordering of $u$.

Consider how the path $P$ is built. One can initially start with $Q=\{(x,y)\}$, and then recursively replace the hopset edge in $Q$ that contains $u$, with the $4\beta$-bounded path in some $G'_j$ that induces it. Note that this recursion depth is at most $k'\log n$, thus $Q$ has at most $4\beta\cdot k'\log n$ edges. Since $G_i$ contains all the edges of all previous hopsets, the exploration from $A_{i+1}$ starting at $z$ for $8\beta\cdot k'\cdot\log n+1$ steps would have reached $u$ after $4\beta\cdot k'\cdot\log n+1$ edges of $G_i$ (the $B$ edges of $G$ are an edge of $G'$, and thus of $G_i$ as well), and then after additional $4\beta\cdot k'\cdot\log n$ edges of $G_i$, it would have surely have reached $y$ (because $z$ is closer to $u$ than $x$). We conclude that
\[
\hat{d}(y,A_{i+1})\le d_{G_i}^{(8\beta\cdot k'\cdot\log n+1)}(y,z)\le d_P(y,u)+\hat{d}(u,A_{i+1})\le d_P(y,x)+d_P(u,x)\le 2d_{G_i}^{(4\beta)}(x,y)~,
\]
which is a contradiction to the fact that $y$ joins $B(x)$.

Next, we have to show that each $u$ will indeed learn the relevant information on all hopset edges it implements. Assume inductively that for any hopset edge $(x,y)\in H^{(\ell-1)}\cup H_{i-1}$, if $P$ is the path in $G$ that implements this edge, then every $u\in P$ knows about the edge, $d_P(u,x)$, $d_P(u,y)$, and its neighbors on $P$. A new hopset edge $(x,y)\in H_i$ is created whenever the exploration rooted at $x\in A_i\setminus A_{i+1}$ discovers a vertex $y\in A_i$. Recall that this exploration is done in $G_i$ for $4\beta$ rounds. Whenever $y$ joins $B(x)$ it will send an acknowledgement on the $4\beta$-bounded path back to $x$ in $G_i$ (every vertex discovered by $x$ takes note of its "parent", the vertex who sent it the message of $x$). The acknowledgement phase can take place after the exploration concludes, and it will induce congestion that is no larger that the congestion created when sending the messages, so the number of rounds will at most double. Now, every vertex $v$ in the $4\beta$-bounded path from $y$ to $x$ that receives $y$'s acknowledgement, knows that the edge to its parent $v'$ is part of the path implementing the hopset edge $(x,y)$. Recall that the edge $(v,v')$ is either an edge of $G'$, which is discovered via a $B$-round exploration in $G$ -- in which case all vertices along the path in $G$ from $v$ to $v'$ can update the relevant information about $(x,y)$ when $v$ does a $B$-round exploration in $G$ (this is the acknowledgement step), or otherwise $(v,v')\in H^{(\ell-1)}\cup H_{i-1}$. In the latter case, $v$ will broadcast that the edge $(v,v')$ implements $(x,y)$, and its distances to $x,y$. By the induction hypothesis, each vertex $u'$ that implements a path $P'$ for the hopset edge $(v,v')$ knows about it and its distances to $v,v'$, thus when $u'$ hears this broadcast (which is sent to all vertices of $V$), it knows it implements $P$, and can computes distances to $x$ and $y$.

We conclude that whp every vertex needs to store only the $\tilde{O}(m^\rho)$ hopset edges that it implements.
Note that the final hopset $H^{(\log n)}$ can omit all the previous hopsets (which were used only for calculations). We summarize this discussion with the following theorem.
\begin{theorem}\label{thm:dist-cong-path}
For any weighted graph $G=(V,E)$ with hop-diameter $D$, an integer $k> 1$, and parameters $0<\rho<1$, $0<\epsilon<1/5$, and (an implicit)
 virtual graph $G'=(V',E')$ embedded in $G$ on $|V'|=m$ vertices, there is a distributed algorithm in the CONGEST model that runs in $\tilde{O}(m^{1+\rho}+D)\cdot\beta$ rounds, that computes $H$, which is a $(\beta,\epsilon)$ {\em path-reporting} hopset for $G'$, of size at most $O(m^{1+1/(2^k-1)})$, where
\[
\beta=O\left(\frac{(k+1/\rho)\cdot\log m}{\epsilon}\right)^{k+1/\rho+1}~.
\]
In the case that $E'$ corresponds to $B=\tilde{O}(m)$-bounded distances in $G$, the hopset can be computed where every vertex has internal memory $\tilde{O}(m^\rho)$.
\end{theorem}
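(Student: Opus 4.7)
The plan is to lift the construction behind \theoremref{thm:dist-cong} to a path-reporting version, using two modifications sketched in the preceding discussion: (i) make hopsets cumulative, i.e.\ redefine $H^{(\ell)}:=H_{k'}\cup H^{(\ell-1)}$ so that every newer hopset contains all earlier ones; and (ii) inflate the pivot exploration rooted at $A_{i+1}$ from $8\beta$ to $8\beta\cdot k'\cdot\log n+1$ Bellman--Ford iterations, invoking \lemmaref{lem:BF-small} on the graph $G_i=G'\cup H^{(\ell-1)}\cup H_{i-1}$ with arboricity parameter $\alpha=\tilde O(m^\rho)$ guaranteed by \eqref{eq:size-of-bunch}. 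The stretch/hopbound analysis of \lemmaref{lem:dist-main} used only that the pivot exploration has \emph{at least} $8\beta$ hops and that the half-bunches are computed in $G_i$, so both are preserved; the round complexity grows by only polylogarithmic factors.

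The heart of the proof is bounding, for every underlying vertex $u\in V$, the number of hopset edges whose implementing path passes through $u$ at each phase $i$ of each scale $\ell$. First, I would order $A_i$ by the shortest path length from $u$ that is allowed to use at most $4\beta\cdot k'\cdot\log n$ edges of $G_i$ followed by at most $B$ edges of $G$, and let $z$ be the first vertex of this ordering that falls into $A_{i+1}$. A geometric tail bound analogous to \claimref{claim:mem} (with parameter $p_i'\ge m^{-\rho}$) gives that w.h.p.\ $z$ occurs among the first $\tilde O(m^\rho)$ positions. Next, I would show that no hopset edge $(x,y)$ with $x$ after $z$ in this order can have $u$ on its implementing path: unrolling the recursive construction of the implementing path shows it has at most $4\beta\cdot k'\log n$ edges in $G_i$, so the pivot exploration starting at $z$ and running $8\beta\cdot k'\log n+1$ rounds must reach first $u$ and then $y$, forcing $\hat d(y,A_{i+1})\le 2 d_{G_i}^{(4\beta)}(x,y)$, which contradicts $y\in B(x)$. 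This reduces the memory at $u$ to $\tilde O(m^\rho)$ edges per phase, and by union bound over the at most $k'\cdot\log n$ phases, to $\tilde O(m^\rho)$ overall.

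Next I would inductively maintain, as an invariant at scale $\ell$ and phase $i$, that for every hopset edge $(x,y)\in H^{(\ell-1)}\cup H_{i-1}$ every vertex on the implementing path in $G$ knows about the edge together with its distances to its endpoints and its neighbours on the path. To extend the invariant to a new edge $(x,y)\in H_i$, I would piggy-back an acknowledgement phase on the $4\beta$-round exploration rooted at $x$: when $y$ is admitted to $B(x)$, it sends a return message along the parent-pointer path in $G_i$, which takes only $O(\beta)$ additional rounds and at most doubles the congestion. For each edge $(v,v')$ on this $G_i$-path I would split into two cases: if $(v,v')\in E'$ then $v$ launches a $B$-round exploration in $G$ so that every $G$-vertex on the underlying physical path learns of $(x,y)$ and its distances; if $(v,v')$ is a previously built hopset edge, then $v$ broadcasts along the BFS tree $T$ of \lemmaref{lem:BF-small} the pair $(v,v')\mapsto(x,y)$, and by the induction hypothesis every vertex that already knows it implements $(v,v')$ can now compute its distances to $x$ and $y$.

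The main obstacle is controlling congestion of the broadcast/acknowledgement phase without inflating either the round count or the per-vertex memory: the naive bound allows up to $\tilde\Theta(m)$ concurrent messages traversing $T$. I would use the same randomised start-time trick as in \lemmaref{lem:BF-small} to smooth out the traffic to $O(\log n)$ messages per vertex per round, and rely on the ordering argument above to guarantee that each physical vertex participates in only $\tilde O(m^\rho)$ acknowledgement trees per phase, so that the total overhead stays $\tilde O(m^{1+\rho}+D)\cdot\beta$. Putting all pieces together — the stretch/hopbound inherited from the proof of \theoremref{thm:dist-cong}, the size bound from \claimref{claim:size}, and the path-knowledge invariant combined with the per-vertex edge bound — yields \theoremref{thm:dist-cong-path}.
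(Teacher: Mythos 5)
Your proposal matches the paper's proof of \theoremref{thm:dist-cong-path} essentially step for step: you reuse the two modifications the paper introduces (cumulative hopsets $H^{(\ell)}=H_{k'}\cup H^{(\ell-1)}$, and the inflated $8\beta\cdot k'\cdot\log n+1$-round pivot exploration via \lemmaref{lem:BF-small}), you bound the per-vertex participation by the same ordering/first-sampled-$z$ argument leading to the same contradiction with $y\in B(x)$, and you establish the same inductive path-knowledge invariant via the acknowledgement pass and the broadcast over the BFS tree with randomized start times. This is the paper's own argument, just paraphrased.
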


\section{PRAM Model}\label{sec:pram}

The algorithm described in \sectionref{sec:CC} can be easily adapted to the PRAM model. For each $\ell=1,2,\dots,\log n$, we build the hopset $H^{(\ell)}$ based on the previous hopset $H^{(\ell-1)}$. Each of the $O(\beta)$-bounded Bellman-Ford explorations for constructing $H_i$ can be implemented in parallel in $O(\beta)$ rounds, where the congestion of $\tilde{O}(n^{\rho})$ per vertex translates to extra work (rather than multiplying the number of rounds, as was the case in  distributed models). Since there are $\log n$ values of $\ell$, and $k'\le k+1/\rho+1$ steps in each level, the number of rounds is only $O((k+1/\rho)\cdot\log n\cdot\beta)$. We have the following result.
\begin{theorem}\label{thm:pram-main}
For any weighted graph $G=(V,E)$ on $n$ vertices, an integer $k> 1$, and parameters $0<\rho<1$, $0<\epsilon<1/5$, there is a parallel algorithm running in $O((k+1/\rho)\cdot\log n\cdot\beta)$ rounds and has $\tilde{O}(|E|\cdot n^\rho)$ work, that computes $H$ of size at most $O(n^{1+1/(2^k-1)})$, which is a $(\beta,\epsilon)$-hopset, where
\begin{equation}\label{eq:beta1}
\beta=O\left(\frac{(k+1/\rho)\cdot\log n}{\epsilon}\right)^{k+1/\rho+1}~.
\end{equation}
\end{theorem}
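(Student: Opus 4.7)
The plan is to port the Congested Clique construction of \sectionref{sec:CC} to the PRAM model essentially verbatim, and then re-account for the complexity in the shared-memory setting. Concretely, I would compute hopsets $H^{(1)},H^{(2)},\dots,H^{(\log n)}$ sequentially over $\ell$, each $H^{(\ell)}$ being a $(\beta,\epsilon_\ell,2^\ell)$-hopset built on top of $H^{(\ell-1)}$. Within each level, sample the hierarchy $V=A_0\supseteq A_1\supseteq\dots\supseteq A_{k'+1}=\emptyset$ with the probabilities $p_i'$ of \sectionref{sec:implem}, and construct $H_0\subseteq H_1\subseteq\dots\subseteq H_{k'}=H^{(\ell)}$ in $k'+1$ iterations. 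At iteration $i$, working in $G_i=G\cup H^{(\ell-1)}\cup H_{i-1}$, run an $8\beta$-hop Bellman-Ford from the source set $A_{i+1}$ to obtain the values $\hat{d}(u,A_{i+1})$ and the pivots $p(u)$, then a $4\beta$-hop Bellman-Ford from each $u\in A_i\setminus A_{i+1}$, truncated at the threshold $\hat{d}(u,A_{i+1})/2$, to determine the half-bunch $B(u)$.

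The combinatorial guarantees transfer for free: correctness and hopbound follow from \lemmaref{lem:dist-main} (taking $\delta=\epsilon/(15k')$ and rescaling $\epsilon$ by a factor of $\log n$ across the $\log n$ levels), and the size bound $O(n^{1+1/(2^k-1)})$ follows from \claimref{claim:size}, since both proofs depend only on the structural choices $p_i'$, on the half-bunch definition, and on the order in which the partial hopsets $H_i$ are added to the graph, not on the computational model.

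Then the core of the proof is just the PRAM complexity accounting. Each iteration of Bellman-Ford is a single synchronous relaxation, which in the PRAM model requires $O(1)$ parallel time per iteration, so each of the two explorations contributes $O(\beta)$ rounds. With $k'+1=O(k+1/\rho)$ iterations per level and $\log n$ levels, the total round count is $O((k+1/\rho)\cdot\log n\cdot\beta)$, matching the stated bound. For work, the key input is the congestion estimate implicit in the proof of \claimref{claim:mem}: with high probability, each vertex $v\in V$ participates in at most $\tilde{O}(n^\rho)$ concurrent explorations in any given iteration (the half-bunch threshold $\hat{d}(u,A_{i+1})/2$ guarantees that the exploration of $u\in A_i\setminus A_{i+1}$ does not reach any vertex whose ranking in $A_i$ lies beyond the first pivot sampled into $A_{i+1}$). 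Hence a single parallel round of Bellman-Ford across all concurrent sources touches $\tilde{O}(|E|\cdot n^\rho)$ edge-messages in total, giving $\tilde{O}(|E|\cdot n^\rho\cdot\beta\cdot(k+1/\rho)\cdot\log n)=\tilde{O}(|E|\cdot n^\rho)$ total work, where polynomial factors in $\beta$, $\log n$, and $k+1/\rho$ are absorbed into the $\tilde{O}(\cdot)$ notation.

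The main subtlety, and what I expect to be the part that requires the most care, is re-justifying the $\tilde{O}(n^\rho)$ per-vertex exploration-participation bound in the PRAM setting. In \sectionref{sec:CC} this bound was used to argue a small number of rounds given the bandwidth constraint; here it must be used to bound the multiplicative blowup in the work. The argument, however, is purely combinatorial — it is about which bunches $B(u)$ a given vertex can belong to, which in turn is controlled by the geometric-distribution bound on $|B(u)|$ and by the fact that the half-bunch cutoff $\hat{d}(u,A_{i+1})/2$ prevents deep propagation past the first sampled pivot. Once this is quoted from the Congested Clique analysis, the rest of the proof is a routine translation of "messages per round" into "work per parallel step".
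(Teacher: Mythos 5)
Your proposal is correct and follows essentially the same route as the paper: port the Congested Clique algorithm of \sectionref{sec:CC} level by level, invoke \lemmaref{lem:dist-main} and \claimref{claim:size} for the model-independent stretch/size guarantees, and translate the per-vertex congestion bound $\tilde{O}(n^\rho)$ of \claimref{claim:mem} into extra work rather than extra rounds. Your work accounting is in fact somewhat more explicit than the paper's terse "congestion translates to extra work," and correctly observes that the $\beta\cdot(k+1/\rho)\cdot\log n$ factor is absorbed by the $\tilde{O}(\cdot)$ on the work bound.
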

We can also apply the construction recursively: If $H(1)$ is the hopset given by \theoremref{thm:pram-main} with $\beta_1 =\beta$ given in \eqref{eq:beta1}, then apply the construction on the graph $G\cup H(1)$, but only for levels $\ell$ up to $\ell_2=\log\beta_1$, to obtain a hopset $H(2)$. Since for any $x,y\in V$ we have $d_{G\cup H(1)}^{(\beta_1)}(x,y)\le (1+\epsilon)d_G(x,y)$, then adding both $H(1)$ and $H(2)$ guarantees $d_{G\cup H(1)\cup H(2)}^{(\beta_2)}(x,y)\le (1+\epsilon)^2d_G(x,y)$, where $\beta_2=\left(\frac{3c\cdot(k+1/\rho)\cdot\log \beta_1}{\epsilon}\right)^{k+1/\rho+1}$, where $c$ is the constant hidden by the $O(\cdot)$ notation in \eqref{eq:beta1}. This bound follows because $\epsilon$ needs to be rescaled by $3\ell_2=3\log\beta_1$; the rescaling by $\log\beta_1$ is to compensate for the number of levels, and by 3 to reduce the error from $(1+\epsilon)^2$ back to $1+\epsilon$. Continuing in this manner for the next level with $\ell_3=\log\beta_2$ levels, we obtain in general a recursion for $\beta_{i+1}=\left(\frac{3c\cdot(k+1/\rho)\cdot\log \beta_i}{\epsilon}\right)^{k+1/\rho+1}$, and it can be shown by induction that as long as $log^{(i)}n\ge 3c\log(k+1/\rho)$ we have
\[
\beta_i\le \left(\frac{8c\cdot(k+1/\rho)^2\cdot\left[\log(3c(k+1/\rho)/\epsilon)+\log^{(i)}n\right]}{\epsilon}\right)^{k+1/\rho+1}~.
\]
After at most $t=\log^*n$ iterations, we get that $\beta_t=O\left(\frac{(k+1/\rho)^2}{\epsilon}\right)^{(1+o(1))\cdot(k+1/\rho)}$. To summarize, this yields a hopset with constant parameter $\beta$ that is computed in ${\rm polylog}(n)$ rounds.

\begin{theorem}
For any weighted graph $G=(V,E)$ on $n$ vertices, an integer $k> 1$, and parameters $0<\rho<1$, $0<\epsilon<1/5$, there is a parallel algorithm running in $O(\left(\frac{(k+1/\rho)\cdot\log n}{\epsilon}\right)^{k+1/\rho+2})$ rounds and has $\tilde{O}(|E|\cdot n^\rho)$ work, that computes $H$ of size at most $O(n^{1+1/(2^k-1)}\cdot \log^*n)$, which is a $(\beta,\epsilon)$-hopset, where
\begin{equation}
\beta=O\left(\frac{(k+1/\rho)^2}{\epsilon}\right)^{(1+o(1))\cdot(k+1/\rho)}~.
\end{equation}
\end{theorem}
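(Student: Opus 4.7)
The plan is to iterate the construction of \theoremref{thm:pram-main}, using the fact that once we have a hopset $H(1)$ with hopbound $\beta_1$ that $(1+\epsilon')$-approximates all distances of $G$, a subsequent hopset construction on $G \cup H(1)$ only needs to process scales $\ell = 1,2,\ldots,\log \beta_1$ rather than $\ell = 1,\ldots,\log n$, since the $(\beta_1)$-bounded metric on $G \cup H(1)$ already accurately represents all of $G$. Each additional level adds only polylogarithmic factors, so the hopbound shrinks rapidly.

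First, I would invoke \theoremref{thm:pram-main} with rescaled error $\epsilon' = \epsilon/(3\log^* n)$ to obtain $H(1)$ with
\[
\beta_1 = \left(\frac{c(k+1/\rho)\log n}{\epsilon'}\right)^{k+1/\rho+1},
\]
where $c$ is the constant hidden in \eqref{eq:beta1}, in $O((k+1/\rho)\log n \cdot \beta_1)$ rounds with work $\tilde O(|E|\cdot n^\rho)$. Iteratively, at step $i \ge 1$, set $G_i = G \cup H(1) \cup \cdots \cup H(i)$, and apply the same construction to $G_i$, but only for scales $\ell = 1,\ldots,\lceil\log \beta_i\rceil$. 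The inductive guarantee is that for every $x,y \in V$, $d_{G_i}^{(\beta_i)}(x,y) \le (1+\epsilon')^i\, d_G(x,y)$, so when building the next hopset we only need hop-scales up to $\log \beta_i$ to cover all $G$-distances. This yields $H(i+1)$ with
\[
\beta_{i+1} = \left(\frac{c(k+1/\rho)\log \beta_i}{\epsilon'}\right)^{k+1/\rho+1},
\]
and after one more scale of rescaling, $d_{G_{i+1}}^{(\beta_{i+1})}(x,y) \le (1+\epsilon')^{i+1}\, d_G(x,y)$.

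Next, I would analyze the recursion. A straightforward induction shows that as long as $\log^{(i)} n \ge 3c \log(k+1/\rho)$,
\[
\beta_i \le \left(\frac{8c(k+1/\rho)^2 \cdot \bigl[\log(3c(k+1/\rho)/\epsilon) + \log^{(i)} n\bigr]}{\epsilon}\right)^{k+1/\rho+1}.
\]
Setting $t = \log^* n$, we get $\log^{(t)} n = O(1)$, so
\[
\beta_t = O\!\left(\frac{(k+1/\rho)^2}{\epsilon}\right)^{(1+o(1))(k+1/\rho)}.
\]
The total multiplicative stretch is $(1+\epsilon')^t \le 1 + \epsilon$ by our choice of $\epsilon'$.

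Finally, for the complexity, the first iteration dominates the rounds: it takes $O((k+1/\rho)\log n \cdot \beta_1) = O(((k+1/\rho)\log n/\epsilon)^{k+1/\rho+2})$ rounds. Subsequent iterations have $\log \beta_i \ll \log n$ scales and complete in $O((k+1/\rho)\log \beta_i \cdot \beta_{i+1})$ rounds, each dwarfed by the first iteration's cost. The work per iteration is $\tilde O(|E|\cdot n^\rho)$, and over the $\log^* n$ iterations this is absorbed into the $\tilde O$. The final hopset $H = \bigcup_{i=1}^{t} H(i)$ has size at most $t \cdot O(n^{1+1/(2^k-1)}) = O(n^{1+1/(2^k-1)} \cdot \log^* n)$ since each $H(i)$ satisfies the size bound of \theoremref{thm:pram-main}. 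The main obstacle I anticipate is verifying carefully that the level-$\log \beta_i$ hopset built on $G_i$ does indeed $(1+\epsilon')$-approximate all $G$-distances (not just $\beta_i$-bounded $G_i$-distances), which requires one to show that a shortest path in $G_i$ with few hops faithfully tracks the original $G$-distance after invoking the inductive guarantee; a second, more routine obstacle is the bookkeeping of the double-exponential $\log^{(i)}$ recursion to confirm the $(1+o(1))$ factor in the exponent.
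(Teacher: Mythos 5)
Your proposal follows the same recursive scheme as the paper: apply \theoremref{thm:pram-main} to get $H(1)$ with hopbound $\beta_1$, then reapply the scale-by-scale construction to $G\cup H(1)$ over only $\log\beta_1$ hop-scales, and iterate, yielding the recursion $\beta_{i+1} \approx \bigl(c(k+1/\rho)\log\beta_i / \epsilon''\bigr)^{k+1/\rho+1}$ (for the appropriately rescaled $\epsilon''$) and a near-constant $\beta_t$ after $t=\log^* n$ iterations, with the first iteration dominating rounds and work. One small inconsistency: you rescale $\epsilon' = \epsilon/(3\log^* n)$ up front, which is a clean way to bound the cumulative stretch $(1+\epsilon')^t \le 1+\epsilon$ (arguably cleaner than the paper's per-iteration rescaling by a fixed factor of $3$), yet you then quote the paper's induction bound $\beta_i \le \bigl(8c(k+1/\rho)^2[\log(3c(k+1/\rho)/\epsilon)+\log^{(i)}n]/\epsilon\bigr)^{k+1/\rho+1}$ with $\epsilon$ rather than $\epsilon'$ in the denominator, which does not match your own recursion. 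Carrying your $\epsilon'$ through consistently leaves an extra $(\log^* n)^{k+1/\rho+1}$ factor in $\beta_t$; that factor grows (extremely slowly) with $n$ and, for fixed $k,\rho,\epsilon$, is not strictly absorbed by the $(1+o(1))$ in the exponent of the stated bound. The paper's own write-up elides the same accounting, so this is a shared bookkeeping subtlety rather than a conceptual gap in your approach; the size, work, and round bounds you give otherwise match the paper's argument.
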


\section{Distributed Routing with Small Memory}\label{sec:route}

Here we improve the results of \cite{EN16a,LPP16}, and devise a compact routing scheme that can be efficiently implemented in a distributed network. The previous result of \cite{EN16a} provides, for any parameter $k$, a scheme with stretch $4k-5+o(1)$, labels of size $O(k\log^2n)$ and routing tables of size $O(n^{1/k}\log^2n)$.  The computation time of this scheme is  $(n^{1/2+1/k}+D)\cdot \min\{(\log n)^{O(k)},2^{\tilde{O}(\sqrt{\log n})}\}$ rounds (in the CONGEST model). One drawback of this result (and also of \cite{LPP16}, which obtained slightly weaker results), is that although the final memory requirement from each vertex is $\tilde{O}(n^{1/k})$, the preprocessing step requires high memory (at least $\Omega(\sqrt{n})$). Indeed, some of the classical works on compact routing schemes \cite{ABLP90} addressed the issue of each vertex having only a limited memory throughout the construction of the routing scheme (albeit their round complexity was at least linear in $n$). Here we present a distributed construction that has that desirable property, and in addition we improve both the label and table size by a logarithmic factor, almost matching the best known bounds of \cite{TZ01,C13} that are computed in a sequential manner.

We briefly sketch the approach of \cite{EN16a}, and the current improvement allowing low memory and improved bounds. First, construct the Thorup-Zwick hierarchy $V=A_0\supseteq A_1\supseteq A_k=\emptyset$, where each vertex in $A_{i-1}$ is sampled to $A_i$ independently with probability $n^{-1/k}$. Then the cluster $C(v)=\{u\in V~:~d_G(u,v)< d_G(u,A_{i+1})\}$ for $v\in A_i\setminus A_{i+1}$ can be viewed as tree rooted at $v$. Computing this cluster is done by a limited Dijkstra exploration from $v$, i.e., only vertices in $C(v)$ continue the exploration of $v$. Routing from $x$ to $y$ is done by finding an appropriate cluster $C(v)$ containing both $x,y$, and routing in that tree. Whenever $i<k/2$, these trees have whp depth $\tilde{O}(\sqrt{n})$. Hence they can be easily computed in a distributed manner within $\tilde{O}(n^{1/2+1/k})$ rounds. The main issue is computing the clusters for $i\ge k/2$.

The method of \cite{EN16a} was to work with a virtual graph $G'$, whose vertices are $V'=A_{k/2}$, and whose edges correspond to $B=c\cdot\sqrt{n}\log n$-bounded distances in $G$ between the vertices of $V'$. Then a hopset is computed for this virtual graph, which enables the computation of Bellman-Ford explorations in only $O(\beta)$ rounds. The fact that $\beta$-bounded distances can suffer $1+\epsilon$ stretch creates additional complications; one needs to define {\em approximate clusters}, and make sure that these approximate clusters correspond to actual trees in $G$. Finally, since the trees corresponding to $C(v)$ for the high level vertices $v\in A_i$, $i\ge k/2$, can have large depth, one needs to adapt the Thorup-Zwick routing scheme for trees \cite{TZ01-spaa}. In both \cite{EN16a,LPP16} this adaptation induced a logarithmic factor to both the table and the label size.

Our improved result has two main ingredients. First, we do not explicitly construct $G'$; In both \cite{EN16a,LPP16}, computing the weights of edges in $G'$ was a rather expensive step, and required large memory and induced a factor depending logarithmically on the aspect ratio to the running time. In addition, only approximate values were obtained. We observe that not all the edges of $G'$ are required for the algorithm, and thus we do not compute $G'$ at all. Rather we compute only those edges of $G'$ that are really needed for either the hopset or for the routing hierarchy. (This idea is reminiscent of \cite{E17}, where the virtual graph is also never entirely computed.)

Instead, we conduct the explorations in $G'$ by implementing in each iteration a $B$-bounded search in $G$, which not only saves memory and running time, but also simplifies the analysis, since now there is no error in the edge weights of $G'$. Second, our new tree-routing scheme has both improved label and routing table size, and can be computed with small memory.  (For more details, see \sectionref{sec:tree-route}.)  Our result is summarized below.

\begin{theorem}\label{thm:route}
Let $G=(V,E)$ be a weighted graph with $n$ vertices and hop-diameter $D$, and let $k>1$ be a parameter. Then there exists a routing scheme with stretch at most $4k-5+o(1)$, labels of size $O(k\log n)$ and routing tables of size $O(n^{1/k}\log n)$, that can be computed in a distributed manner within $(n^{1/2+1/k}+D)\cdot (\log n)^{O(k)}$ rounds, such that every vertex has memory of size $\tilde{O}(n^{1/k})$.

Alternatively, whenever $k\ge\sqrt{\log n/\log\log n}$, the number of rounds can be made $(n^{1/2+1/k}+D)\cdot 2^{\tilde{O}(\sqrt{\log n})}$ with memory $2^{\tilde{O}(\sqrt{\log n})}$ at each vertex.
\end{theorem}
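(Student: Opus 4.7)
The plan is to lift Thorup--Zwick's compact routing scheme to the distributed setting by combining our path-reporting hopsets with a small-memory realization of the underlying skeleton graph. First, sample a hierarchy $V = A_0 \supseteq A_1 \supseteq \cdots \supseteq A_k = \emptyset$ by placing each vertex of $A_{i-1}$ into $A_i$ independently with probability $n^{-1/k}$, so $|A_i| = \tilde O(n^{1-i/k})$ whp. The goal is that every $v \in A_i \setminus A_{i+1}$ ends up storing a bunch $B(v) \subseteq V$ of expected size $\tilde O(n^{1/k})$ together with its pivots $p_0(v),\ldots,p_{k-1}(v)$ and their tree-routing labels; then routing from $x$ to $y$ uses the smallest index $i$ with $p_i(y) \in B(x)$ and relays through the tree rooted at $p_i(y)$ representing the cluster $C(p_i(y))$, exactly as in the sequential scheme of \cite{TZ01-spaa}.

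I would split the construction by level. For low levels $i < k/2$, each cluster tree has $G$-depth $\tilde O(\sqrt n)$ whp, so a truncated Dijkstra as in \cite{LP15,EN16a} computes their clusters and bunches in $\tilde O(n^{1/2+1/k})$ rounds using $\tilde O(n^{1/k})$ memory per vertex. For high levels $i \ge k/2$, set $V' = A_{k/2}$ with $m = |V'| = \tilde O(\sqrt n)$, and work implicitly with the virtual graph $G' = (V',E')$ whose edges are weighted by exact $B$-bounded distances in $G$, for $B = c\sqrt n \log n$. Using \theoremref{thm:dist-cong-path} with $\rho = 2/k$, I would construct a path-reporting $(\beta,\epsilon)$-hopset $H$ for $G'$ of size $\tilde O(m^{1+1/k})$ with hopbound $\beta = (\log n)^{O(k)}$, rescaling $\epsilon = 1/\log n$ so the final stretch absorbs only an additive $o(1)$ term. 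Crucially, $H$ is never assembled on top of an explicit copy of $E'$: each Bellman--Ford iteration on $G' \cup H_{\text{partial}}$ is simulated by \lemmaref{lem:BF-small}, and every skeleton edge touched during that iteration is realized on demand by a $B$-round relay in $G$. The path-reporting and arboricity guarantees of our hopset ensure that each $G$-vertex is responsible for only $\tilde O(n^{1/k})$ skeleton or hopset edges whose implementing paths it sits on.

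With $H$ in hand, for each $v \in A_i \setminus A_{i+1}$ with $i \ge k/2$ I would build an approximate cluster $\widetilde C(v)$ by first running an $O(\beta)$-hop Bellman--Ford in $G' \cup H$ rooted at $A_{i+1}$ to equip each $u \in V'$ with a threshold $\hat d(u,A_{i+1})$, and then running a second $O(\beta)$-hop exploration from $v$ itself that admits $u$ into $\widetilde C(v)$ only if $v$'s current estimate to $u$ is strictly less than $\hat d(u,A_{i+1})$. Because $H$ is path-reporting, the $G$-paths implementing the edges used inside $\widetilde C(v)$ paste together into a genuine tree $T_v \subseteq G$; this is the tree over which actual messages will travel. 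The stretch analysis then mirrors the sequential TZ argument with every skeleton distance replaced by its $(1+\epsilon)$-approximation, giving overall stretch $(4k-5)(1+\epsilon) = 4k-5 + o(1)$. Finally, routing tables and labels are populated using the improved distributed tree-routing scheme of \sectionref{sec:tree-route}, which shaves a $\log n$ factor from both the table and the label sizes and yields tables of size $O(n^{1/k} \log n)$ and labels of size $O(k \log n)$, matching the sequential bounds.

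The main obstacle I anticipate is bounding per-vertex memory and per-round congestion when $\tilde O(\sqrt n)$ skeleton-sourced explorations are simultaneously relayed through $G$. This is controlled by the half-bunch/pivot-truncation idea of \claimref{claim:mem}: the first-stage exploration rooted at $A_{i+1}$ equips every relay vertex $u \in V$ with a distance estimate to $A_{i+1}$, after which $u$ forwards messages on behalf of a second-stage exploration only while the running estimate remains below that threshold. Replayed through the $B$-round skeleton relays (not just inside $G'$), this limits the number of explorations ever visiting $u$ to $\tilde O(n^{1/k})$ whp, which combined with the arboricity guarantee on $H$ keeps both memory and congestion at $\tilde O(n^{1/k})$ per vertex throughout preprocessing. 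For the alternative bound, I would instead instantiate \theoremref{thm:dist-cong-path} with its internal parameter $\approx \sqrt{\log n/\log\log n}$, giving $\beta = 2^{\tilde O(\sqrt{\log n})}$ and the corresponding memory inflation to $2^{\tilde O(\sqrt{\log n})}$ per vertex, while retaining the $4k-5+o(1)$ stretch.
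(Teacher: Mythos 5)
Your proposal follows essentially the same route as the paper: the same $A_0 \supseteq \cdots \supseteq A_k$ hierarchy, the same low-level/high-level split at $i = k/2$, the same implicit virtual graph $G'$ on $V' = A_{k/2}$ with $B$-bounded edge weights, the path-reporting hopset of \theoremref{thm:dist-cong-path} with $\rho = \Theta(1/k)$, approximate pivots and approximate clusters via threshold-limited explorations whose congestion is bounded by the pivot-truncation argument of \claimref{claim:mem} and \claimref{claim:number-of-clusters}, and the improved tree routing of \sectionref{sec:tree-route} to recover the $O(n^{1/k}\log n)$ / $O(k\log n)$ sizes. One technical slip: the inclusion threshold for approximate clusters cannot be simply ``current estimate $< \hat d(u,A_{i+1})$''; since $\hat d(u,A_{i+1})$ can exceed $d_G(u,A_{i+1})$ by a $(1+\epsilon)$ factor, this would fail to guarantee $\tilde C(v) \subseteq C(v)$. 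The paper uses $\hat d(u,A_{i+1})/(1+\epsilon)^2$ for $u \in V'$ and $\hat d(u,A_{i+1})/(1+\epsilon)$ for $u \in V\setminus V'$, and also needs $\epsilon \lesssim 1/k^4$ (not merely $1/\log n$) so that the sandwich $C_{6\epsilon}(u) \subseteq \tilde C(u) \subseteq C(u)$ from \cite{EN16a} yields the $4k-5+o(1)$ stretch; both adjustments leave your asymptotics unchanged.
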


In particular, taking $k=\delta\log n/\log\log n$ for a small constant $\delta$ yields $(n^{1/2+1/k}+D)\cdot n^{O(\delta)}$ rounds with $\polylog(n)$ memory per vertex.

\paragraph{Construction of Routing Scheme.}
Let $G=(V,E)$ be a weighted graph, fix $k> 1$. Sample a collection of sets $V=A_0\supseteq A_1\dots\supseteq A_k=\emptyset$, where for each $0<i<k$, each vertex in $A_{i-1}$ is chosen independently to be in $A_i$ with probability $n^{-1/k}$. A point $z\in A_i$ is called an $i$-pivot of $v$ if $d_G(v,z)=d_G(v,A_i)$. The cluster of a vertex $u\in A_i\setminus A_{i+1}$ is defined as
\begin{equation}\label{eq:cluster}
C(u)=\{v\in V~:~ d_G(u,v)<d_G(v,A_{i+1})\}~.
\end{equation}
It was shown in \cite{TZ01} that
\begin{claim}\label{claim:number-of-clusters}
With high probability, each vertex is contained in at most $4n^{1/k}\log n$ clusters.
\end{claim}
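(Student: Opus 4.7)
The plan is to fix an arbitrary vertex $v\in V$ and decompose the number of clusters containing $v$ by level. For each $i\in\{0,1,\dots,k-1\}$, let $N_i$ denote the number of $u\in A_i\setminus A_{i+1}$ with $v\in C(u)$; then the total count of clusters containing $v$ is $\sum_{i=0}^{k-1} N_i$, and I need to show this is at most $4n^{1/k}\log n$ with high probability.

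The first step is to identify each $N_i$ (for $i<k-1$) as a geometric random variable. Enumerate $A_i=\{u^{(i)}_1,u^{(i)}_2,\dots\}$ in order of increasing distance from $v$, breaking ties by vertex identifier. By the definition \eqref{eq:cluster}, $v\in C(u^{(i)}_j)$ iff $u^{(i)}_j\in A_i\setminus A_{i+1}$ and $u^{(i)}_1,\dots,u^{(i)}_{j-1}\notin A_{i+1}$. Hence $N_i$ equals the number of elements of $A_i$ preceding the first $A_{i+1}$-element in this ordering. Since the ordering is determined by graph distances (independent of the $A_{i+1}$-sampling), conditional on $A_i$ the indicators of $A_{i+1}$-membership are i.i.d.\ Bernoulli$(p)$ with $p=n^{-1/k}$. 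Therefore $N_i$ is stochastically dominated by a geometric random variable with parameter $p$, and $N_0,\dots,N_{k-2}$ can be coupled with independent geometric variables with parameter $p$. The top level $i=k-1$ is handled separately: since $A_k=\emptyset$, every $u\in A_{k-1}$ satisfies $v\in C(u)$, so $N_{k-1}=|A_{k-1}|$ is a binomial with mean $n^{1/k}$, which is at most $2n^{1/k}\log n$ with probability at least $1-n^{-3}$ by a standard Chernoff upper-tail bound.

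Next, I would bound $\sum_{i=0}^{k-2}N_i$ by the equivalent Bernoulli formulation for a negative binomial: $\sum_{i<k-1}N_i\ge t$ if and only if a Binomial$(t,p)$ sample is at most $k-2$. Taking $t=2n^{1/k}\log n$ yields mean $tp=2\log n$, which is at least $2(k-1)$ in the only interesting regime $k\le \log n+1$ (outside this regime the claimed bound already exceeds $n$ and is trivial). The Chernoff lower-tail bound then gives $\Pr[\mathrm{Bin}(t,p)\le k-2]\le n^{-3}$. Combining this with the bound on $N_{k-1}$ above gives $\sum_iN_i\le 4n^{1/k}\log n$ with probability at least $1-2n^{-3}$. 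A union bound over the $n$ possible choices of $v$ then yields the claim with high probability.

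The main obstacle is ensuring that the dominating variables $N_i$ can be coupled with \emph{independent} geometric random variables so that a negative-binomial-style tail bound applies. This is enabled by the observation above that once the ordering of $A_i$ (a function only of the graph and of $v$) is fixed, the $A_{i+1}$-indicators along this ordering form an i.i.d.\ Bernoulli$(p)$ sequence, and the sequences across different levels are generated by independent samplings. The remaining details are routine tail bound calculations and a union bound.
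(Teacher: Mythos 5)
The paper itself does not prove this claim; it cites it to \cite{TZ01}. Your reconstruction --- ordering the vertices of $A_i$ by distance from $v$, observing that the level-$i$ count $N_i$ is (dominated by) a geometric with parameter $p=n^{-1/k}$ conditional on $A_i$, coupling the levels to independent geometrics via the level-by-level filtration, and controlling the top level $|A_{k-1}|$ separately --- is the standard route, and the structural part of the argument is sound.

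Two of the quantitative steps do not hold as written. First, the assertion that for $k>\log n+1$ the claimed bound is trivial because $4n^{1/k}\log n$ exceeds $n$ is simply false: once $k>\log n$ we have $n^{1/k}<e$, so $4n^{1/k}\log n=O(\log n)\ll n$. The claim is still true in that regime (the geometric means shrink to $o(1)$ as $p\to 1$), but your proof does not address it, so the restriction to $k\le\log n+1$ is unjustified. Second, and more importantly, the stated Chernoff estimate $\Pr[\mathrm{Bin}(t,p)\le k-2]\le n^{-3}$ with $tp=2\log n$ fails near the top of your claimed regime: when $k-2\approx\log n=\mu/2$, the multiplicative lower-tail bound gives only $\exp\!\left(-\frac{(\mu-(k-2))^2}{2\mu}\right)=\exp(-\mu/8)=n^{-1/4}$, which does not survive the union bound over $n$ vertices. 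The estimate is fine (exponent $n^{-2+o(1)}$) when $k=O(\log n/\log\log n)$, which is the regime the paper actually invokes, but as stated the probability is overclaimed. A clean repair is either to enlarge the constant in front of $n^{1/k}\log n$ (e.g.\ $24n^{1/k}\log n$ makes the estimate robust for all $k\le\log n$), or to use the sharper Poisson-type lower-tail form $\Pr[\mathrm{Bin}(t,p)\le a]\le e^{-\mu}(e\mu/a)^a$ and explicitly restrict to $k=O(\log n/\log\log n)$, adding a short separate argument for larger $k$.
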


We recall a few definitions from \cite{EN16a}. For each $v\in V$ and $0\le i\le k-1$, a point $\hat{z}\in A_i$ is called an {\em approximate $i$-pivot} of $v$ if
\begin{equation}\label{eq:pivot}
d_G(v,\hat{z})\le (1+\epsilon)d_G(v,A_i)~.
\end{equation}
Define
\begin{equation}\label{eq:eps-clust}
C_\epsilon(u)=\{v\in V~:~ d_G(u,v)<\frac{d_G(v,A_{i+1})}{1+\epsilon}\}.
\end{equation}
The {\em approximate cluster} $\tilde{C}(u)$ will be any set that satisfies the following:
\begin{equation}\label{eq:app-cluster}
C_{6\epsilon}(u)\subseteq\tilde{C}(u)\subseteq C(u)~.
\end{equation}
It was shown in \cite{EN16a} that once we obtain approximate clusters as trees of $G$, with $\epsilon\le 1/(48k^4)$, and provide a routing scheme for these trees, it implies a routing scheme for $G$ with stretch $4k-5+o(1)$. In fact, it suffices that the routing scheme for each tree always routes through the root of the tree, not necessarily via the shortest path in the tree.

Let $h(u,v)$ denote the number of vertices on the shortest path from $u$ to $v$ in $G$. The following were also shown in \cite{EN16a} to hold with high probability.\footnote{For the sake of simplicity we will assume $k$ is even. For odd $k$, we can  improve the running time by a factor of $n^{1/(2k)}$.}
\begin{claim}\label{claim:hit-path}
For any $u,v\in V$ with $h(u,v)\ge B$, there exists a vertex of $A_{k/2}$ on the shortest path between them.
\end{claim}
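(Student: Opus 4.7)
The plan is to prove this by a direct hitting-set / union bound argument, exactly as in the standard Thorup--Zwick analysis. The key observation is that the nested sampling that defines the hierarchy can be ``flattened'': since a vertex lies in $A_{k/2}$ if and only if it survives $k/2$ independent Bernoulli trials with parameter $n^{-1/k}$, every vertex of $V$ is in $A_{k/2}$ independently with probability $q = (n^{-1/k})^{k/2} = n^{-1/2}$.

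With that in hand, I would fix once and for all (deterministically) a canonical shortest path $\pi_{u,v}$ for every ordered pair $u,v\in V$. Consider a pair with $h(u,v)\ge B=c\sqrt{n}\log n$, so that $\pi_{u,v}$ contains at least $B$ distinct vertices. Since the events ``$w\in A_{k/2}$'' are mutually independent over $w\in V$, the probability that $\pi_{u,v}\cap A_{k/2}=\emptyset$ is at most
\[
(1-q)^{B}\le \exp(-qB)=\exp(-c\log n)=n^{-c}.
\]

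Taking a union bound over all at most $\binom{n}{2}\le n^2$ pairs, the probability that some shortest path with at least $B$ hops avoids $A_{k/2}$ is at most $n^{2-c}$, which is $o(1)$ for any $c\ge 3$ (one can choose $c$ large enough to absorb the other high-probability events used elsewhere in the paper, e.g.\ in \claimref{claim:number-of-clusters}).

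There is no real obstacle here; the only place that requires a moment of care is noting that, although the sets $A_i$ are defined recursively from $A_{i-1}$, the membership of a given vertex $w$ in $A_{k/2}$ depends only on its own $k/2$ independent coin flips, so memberships across different vertices are independent. Given that, the claim follows from the standard calculation above without further ingredients.
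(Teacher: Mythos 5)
Your argument is correct and is the standard Thorup--Zwick hitting-set calculation: the nested sampling flattens to i.i.d.\ Bernoulli$(n^{-1/2})$ membership in $A_{k/2}$, a fixed length-$B$ shortest path misses $A_{k/2}$ with probability at most $e^{-qB}=n^{-c}$, and a union bound over $O(n^2)$ pairs finishes the job. The paper does not prove this claim itself but cites \cite{EN16a}, where essentially this same argument appears; the only point worth keeping in mind (not needed for the claim as literally stated, but used when the claim is applied) is that the canonical shortest paths should be chosen consistently, so that a window of $B$ consecutive hops on a canonical shortest path is itself the canonical shortest path between its endpoints.
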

\begin{claim}\label{cor:hit-path}
For any $0\le i< k-1$, $v\in A_i\setminus A_{i+1}$ and $u\in C(v)$, it holds that $h(u,v)\le 4n^{(i+1)/k}\ln n$.
\end{claim}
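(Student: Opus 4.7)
The plan is to combine the defining property of the cluster $C(v)$ with the independence of the sampling scheme that produces $A_{i+1}$, then take a union bound over pairs. The key observation is that the hypothesis $u\in C(v)$ forces the shortest $u$-$v$ path to avoid $A_{i+1}$ entirely, and this becomes exponentially unlikely when the path is long.

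More concretely, I would first show the deterministic fact that, for $v\in A_i\setminus A_{i+1}$ and $u\in C(v)$, every vertex $w$ lying on a shortest $u$-$v$ path in $G$ satisfies $w\notin A_{i+1}$. This follows immediately from the cluster definition \eqref{eq:cluster}: for any such $w$, the triangle (in)equality along a shortest path gives $d_G(u,w)\le d_G(u,v)<d_G(u,A_{i+1})$, so $w$ cannot belong to $A_{i+1}$ (otherwise $d_G(u,A_{i+1})\le d_G(u,w)$, a contradiction). Fix for each pair $(x,y)\in V\times V$ a canonical shortest path $P_{x,y}$ in $G$ (chosen by any deterministic tie-breaking rule that does not depend on the random sampling).

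Next comes the probabilistic step. Each vertex of $V$ belongs to $A_{i+1}$ independently with probability $\prod_{j=0}^{i}n^{-1/k}=n^{-(i+1)/k}$. For a fixed pair $(u,v)$, suppose $|P_{u,v}|\ge T:=4n^{(i+1)/k}\ln n$. Then
\[
\Pr[\,P_{u,v}\cap A_{i+1}=\emptyset\,]\le \bigl(1-n^{-(i+1)/k}\bigr)^{T}\le e^{-4\ln n}=n^{-4}.
\]
By the deterministic fact, the event ``$v\in A_i\setminus A_{i+1}$ and $u\in C(v)$'' is contained in ``$P_{u,v}\cap A_{i+1}=\emptyset$''. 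Thus for any pair with $|P_{u,v}|\ge T$, the probability that this pair witnesses a violation of the claim is at most $n^{-4}$. A union bound over all $n^2$ pairs and all $k-1$ values of $i$ yields an overall failure probability at most $k\cdot n^{-2}$, which is polynomially small. Hence with high probability, $h(u,v)\le 4n^{(i+1)/k}\ln n$ for every valid pair and level, as claimed.

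The main (minor) obstacle is the bookkeeping around shortest-path non-uniqueness: $h(u,v)$ should be defined as the number of vertices on \emph{some} shortest $u$-$v$ path chosen in a manner independent of the sampling, so that the union bound is well-posed. Since the deterministic exclusion from $A_{i+1}$ applies to \emph{every} shortest $u$-$v$ path, any canonical choice works, and no additional union bound over paths is required. The rest of the argument is just plugging in $T$ and observing that $(1-p)^{c\ln n/p}\le n^{-c}$.
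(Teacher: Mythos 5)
The paper cites this claim from \cite{EN16a} without reproducing a proof, so there is no in-paper argument to compare against; your proof is correct and is the standard one. The deterministic observation that $u\in C(v)$ excludes every vertex on a shortest $u$--$v$ path from $A_{i+1}$, combined with the per-vertex inclusion probability $n^{-(i+1)/k}$, the independence of the sampling across vertices, the bound $(1-p)^{4\ln n/p}\le n^{-4}$, and a union bound over the $O(n^2k)$ (pair, level) choices, is exactly the intended argument, and your remark that the deterministic exclusion applies to \emph{every} shortest $u$--$v$ path correctly disposes of any need to union-bound over path choices.
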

In particular, for $i< k/2$ we can find  the "exact" cluster $C(v)$ for each $v\in A_i\setminus A_{i+1}$, by a simple limited Bellman-Ford exploration from all such vertices $v$ to hop-depth  $4n^{(i+1)/k}\ln n\le\tilde{O}(\sqrt{n})$. By \claimref{claim:number-of-clusters}, the congestion induced at each $u\in V$ by the merit of being a part of many clusters is only $4n^{1/k}\log n$. So the total number of rounds required is $\tilde{O}(n^{1/2+1/k})$, and each vertex needs to store at most $4n^{1/k}\log n$ words (the clusters containing it). Finally, note that these clusters indeed correspond to trees, since every vertex $u\in C(v)$ can store as a parent the vertex who last updated the distance estimate that $u$ has for $v$.

From now on we consider the high levels, where $i\ge k/2$. Define $G'=(V',E')$ as a virtual graph where $V'=A_{k/2}$, and $E'$ corresponds to $B$-bounded distances in $G$. Observe that \claimref{claim:hit-path} implies that $d_{G'}(v,v')=d_G(v,v')$ for any $v,v'\in V'$ (because any shortest path in $G$ has a vertex of $V'$ within any $B$ hops on that path). First, we compute a $(\beta,\epsilon)$-hopset $H$ for the virtual graph $G'$ as in \theoremref{thm:dist-cong-path}, with parameters $\log k$, $\epsilon$ and $\rho=1/k$. If one desires the second assertion of the \theoremref{thm:route}, pick $\rho=\sqrt{\log\log n/\log n}$. Note that the graph $G'$ is implicit, and every node has internal memory $\tilde{O}(m^\rho)$. Since $|A_{k/2}|\le O(\sqrt{n})$ whp, the number of rounds required to compute $H$ is at most $(n^{1/2+1/k}+D)\cdot (\log n)^{O(1/\rho)}$ (recall $\rho\ge 1/k$ and $\epsilon\ge \Omega(1/\log^4n)$).

\paragraph{Approximate Pivots}
To compute the approximate pivots, conduct a Bellman-Ford exploration to depth $\beta$ in $G''=G'\cup H$, as in \lemmaref{lem:BF-small}, rooted in $A_{i+1}$, to compute for each $v\in V'$ a value $\hat{d}(v,A_{i+1})$. We perform another $B$-bounded exploration in $G$, where initially every vertex $v\in V'$ sends its current estimate, and in every step every vertex forwards the smallest value it has heard so far. We claim that every $u\in V$ will learn of an approximate ($i+1$)-pivot $\hat{z}\in A_{i+1}$. To see this, let $z$ be the ($i+1$)-pivot of $u$. If $h(u,z)\le B$, then $u$ will hear $z$'s message in the last $B$-bounded exploration. Otherwise, by \claimref{claim:hit-path}, there exists a vertex $v'\in V'$ on the shortest path from $u$ to $z$ within $B$ hops from $u$, and since $H$ is a $(\beta,\epsilon)$-hopset, we have that the first $\beta$ rounds of Bellman-Ford exploration from $A_{i+1}$ caused $v'$ to update $\hat{d}(v',A_{i+1})\le (1+\epsilon)d_G(v',A_{i+1})$. In the final exploration to range $B$, the vertex $v'$ will communicate this value on the path towards $u$. Thus, $u$ will have a value at most
\begin{equation}\label{eq:piv}
\hat{d}(u,A_{i+1})\le d_G^{(B)}(u,v')+\hat{d}(v',A_{i+1})\le d_G(u,v')+(1+\epsilon)d_G(v',A_{i+1})\le(1+\epsilon)d_G(u,A_{i+1})~,
\end{equation}
where the last inequality used that $d_G(u,v')+d_G(v',A_{i+1})=d_G(u,A_{i+1})$. This follows since $v'$ lies on the shortest path from $u$ to the nearest vertex of $A_{i+1}$.
We conclude that no matter which $\hat{z}$ is the approximate pivot of $u$, the distance estimate that $u$ has for it cannot be larger than $(1+\epsilon)d_G(u,A_{i+1})$.
Computing the approximate pivots requires $\tilde{O}(m^{1+\rho}+D)\cdot\beta=(n^{1/2+1/k}+D)\cdot (\log n)^{O(1/\rho)}$ rounds.

\paragraph{Approximate Clusters}

Fix some $i\ge k/2$, and for each $v\in A_i\setminus A_{i+1}$ we conduct a {\em limited} Bellman-Ford exploration in $G''=G'\cup H$ for $\beta$ rounds rooted at $v$, as in \lemmaref{lem:BF-small}. By ``limited", we mean that any vertex $u\in V'$ receiving a message originated at $v$, will forward it to its neighbors iff the current distance estimate is strictly less than $\hat{d}(u,A_{i+1})/(1+\epsilon)^2$.
We will refer to this condition, the {\em inclusion condition} of the exploration of $v$.
We need to avoid congestion at intermediate vertices during the $B$-bounded exploration in $G$ described in \lemmaref{lem:BF-small}, so these vertices will also need to implement some sort of limitation. Concretely, vertices $u\in V\setminus V'$ will forward $v$'s message iff their current estimate is strictly less than $\hat{d}(u,A_{i+1})/(1+\epsilon)$. The exploration over edges of $H$ is done as before, where \claimref{claim:number-of-clusters} guarantees every vertex participates in $4n^{1/k}\log n$ clusters (we will soon show that the approximate clusters are indeed contained in the clusters), so this bounds the number of rounds required by $\tilde{O}(n^{1/2+1/k}+D)\cdot\beta$. Also the memory per vertex required from this computation is bounded by $\tilde{O}(n^{1/k})$ (the number of cluster containing the vertex).

This exploration constructs a virtual tree rooted at $v$. For every edge $(x,y)\in E'$ on this tree, we add to the cluster all the vertices in $G$ on the $B$-bounded path from $x$ to $y$. This can be done via an acknowledgement message from $y$ back to $x$ on this path, and every vertex updates its parent accordingly.
For every hopset edge $(x,y)$ of the tree (which was broadcast to the entire graph during the exploration), every vertex $u\in P_{x,y}$, where $P_{x,y}$ is the path in $G$ implementing the edge $(x,y)$, joins the tree ($u$ knows about being a part of this edge by the path-reporting property of our hopset), and sets its distance estimate as $b_v(x)+d_P(x,u)$ if this value is smaller than its current estimate. If this is the case, the vertex $u$ also sets its parent as the neighbor on $P_{x,y}$ which is closer to $x$.

Finally, we perform another limited Bellman-Ford exploration to depth $B$ in $G$, where every vertex in the tree of $v$ sends its current distance estimate, and every vertex $u\in V$ will forward the smallest estimate it heard so far, but iff it is strictly less than $\hat{d}(u,A_{i+1})/(1+\epsilon)$. In that case it will also join the approximate cluster of $v$, and will update its parent as its neighbor in $G$ whose message caused $u$ to  update its distance estimate to $v$ for the last time.

Observe that the same vertex may join a tree more than once, due to several edges in $E'\cup H$ whose paths contain it. In such a case the vertex will have as a parent the vertex which minimize the estimated distance to the root. Since every vertex has a single parent, we will have that the approximate cluster of $v$, $\tilde{C}(v)$, is indeed a tree. It  remains to prove \eqref{eq:app-cluster}. Let $b_v(u)$ be the distance estimate that $u$ has to $v$ in the exploration rooted at $v$. 

\begin{claim}
For any $v\in V'$, $\tilde{C}(v)\subseteq C(v)$.
\end{claim}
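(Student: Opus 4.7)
The plan is short: $\tilde{C}(v)\subseteq C(v)$ should follow by simply unwinding the inclusion condition used to decide membership in $\tilde{C}(v)$ together with the approximate pivot bound \eqref{eq:piv}, after observing that $b_v(u)$ is always an upper bound on the true $G$-distance $d_G(u,v)$.

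First I would argue that any estimate $b_v(u)$ maintained during the construction is the length of a concrete $v$-to-$u$ walk in $G$, and hence $b_v(u)\geq d_G(u,v)$. The reason is that every edge weight appearing in $G''=G'\cup H$ equals the length of an actual walk in $G$: an edge of $E'$ carries, by definition, a $B$-bounded $G$-distance between its endpoints, and a hopset edge $(x,y)\in H$ carries, by the construction in \sectionref{sec:CONGEST}, a weight of the form $d_{G_j}^{(O(\beta))}(x,y)$, which by induction on the hopset scale $\ell$ and on the level $i$ unfolds into a walk in $G$. Each relaxation performed during the limited Bellman--Ford in $G''$, during the $B$-bounded refinement in $G$, or along the $G$-paths implementing hopset/$E'$-edges, concatenates such walks, so the resulting $b_v(u)$ is the length of some genuine $v$-to-$u$ walk in $G$.

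Second, for any $u\in\tilde{C}(v)$ I will observe that the final inclusion rule imposes $b_v(u)<\hat{d}(u,A_{i+1})/(1+\epsilon)$. For $u\in V\setminus V'$ this is exactly the threshold used in the last $B$-bounded Bellman--Ford pass; for $u\in V'$ the condition enforced during the exploration in $G''$ is the strictly stronger $b_v(u)<\hat{d}(u,A_{i+1})/(1+\epsilon)^2$, so the weaker bound $b_v(u)<\hat{d}(u,A_{i+1})/(1+\epsilon)$ holds \emph{a fortiori}. Combining with \eqref{eq:piv} (which gives $\hat{d}(u,A_{i+1})\leq(1+\epsilon)\,d_G(u,A_{i+1})$), I obtain
\[
d_G(u,v)\;\leq\;b_v(u)\;<\;\frac{\hat{d}(u,A_{i+1})}{1+\epsilon}\;\leq\;d_G(u,A_{i+1}),
\]
which is exactly the defining inequality for $u\in C(v)$ since $v\in A_i\setminus A_{i+1}$.

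The only mild obstacle is a careful case analysis of how $u$ can end up in the tree rooted at $v$, namely (i) directly via the limited Bellman--Ford in $G''$ when $u\in V'$, (ii) as an internal vertex of a $B$-bounded $G$-path realizing an edge of $E'$, (iii) as an internal vertex of a $G$-path $P_{x,y}$ implementing a hopset edge, or (iv) during the final $B$-bounded exploration in $G$. In each case one checks that the estimate $b_v(u)$ stored by $u$ is the length of an actual $v$-to-$u$ walk in $G$ (so $b_v(u)\geq d_G(u,v)$), and that $u$'s admission into $\tilde{C}(v)$ forced $b_v(u)<\hat{d}(u,A_{i+1})/(1+\epsilon)$. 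With these two facts in hand the displayed chain of inequalities applies uniformly and gives $u\in C(v)$, completing the proof.
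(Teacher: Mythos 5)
The overall framework of your argument — establish $b_v(u)\geq d_G(u,v)$, combine it with a threshold $b_v(u)<\hat{d}(u,A_{i+1})/(1+\epsilon)$, and close with \eqref{eq:piv} — is exactly the skeleton of the paper's proof, and your first two points (estimates are walk lengths in $G$, and vertices admitted via the $G''$-exploration or via $E'$-paths satisfy the threshold) are fine. The gap is in your case (iii). When $u$ lies on a path $P_{x,y}$ implementing a \emph{hopset} edge $(x,y)$ that was added to the virtual tree, the construction does \emph{not} ask $u$ to pass any threshold test: $u$ is admitted automatically, with estimate set to $b_v(x)+d_P(x,u)$, simply because it is on $P_{x,y}$ and learns of its role through the path-reporting property. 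So your assertion that "$u$'s admission into $\tilde{C}(v)$ forced $b_v(u)<\hat{d}(u,A_{i+1})/(1+\epsilon)$" is not a consequence of the admission rule in this case; you would be assuming what needs to be proved.

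What the paper actually does for this case is derive the bound on $b_v(u)$ indirectly. It uses the fact that the endpoint $y$ of the hopset edge does satisfy the (stronger) $V'$-threshold $b_v(y)<\hat{d}(y,A_{i+1})/(1+\epsilon)^2$, together with the identity $d_P(x,u)+d_P(u,y)=w_H(x,y)$ (so $b_v(u)\le b_v(x)+d_P(x,u)=b_v(y)-d_P(u,y)$) and two applications of \eqref{eq:piv} plus a triangle inequality $d_G(u,A_{i+1})\ge d_G(y,A_{i+1})-d_G(u,y)$. Chaining these yields $d_G(u,A_{i+1})>b_v(u)\ge d_G(u,v)$. This is not a matter of "unwinding the inclusion condition"; it is a genuine extra argument that your proposal is missing, and without it case (iii) remains open.
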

\begin{proof}
Consider any $u\in\tilde{C}(v)$. 
If it is the case that $u\in V$ joined the approximate cluster by the exploration rooted at $v$, either by being in $V'$ or on a $B$-bounded path in $G$ that implements an edge of $E'$, then it must satisfy $b_v(u)<\hat{d}(u,A_{i+1})/(1+\epsilon)$. Now,
\[
d_G(u,v)\le b_v(u)<\hat{d}(u,A_{i+1})/(1+\epsilon)\stackrel{\eqref{eq:piv}}{\le} d_G(u,A_{i+1})~,
\]
so indeed $u \in C(v)$. The other case is that $u\ in P_{x,y}$ for a path $P_{x,y}$ implementing  a hopset edge $(x,y)$ that was added to the virtual tree. Since $y$ joins the approximate cluster, it must satisfy $b_v(y)<\hat{d}(y,A_{i+1})/(1+\epsilon)^2$.
Recall that the weight of the hopset edge $w_H(x,y)$ is the weight of the path $P = P_{x,y}$ from $x$ to $y$ in $G$ that $u$ lies on. Hence $d_P(x,u)+d_P(u,y)=w_H(x,y)$. It follows that
\begin{eqnarray*}
d_G(u,A_{i+1})&\stackrel{\eqref{eq:piv}}{\ge}&\frac{\hat{d}(u,A_{i+1})}{1+\epsilon}\ge \frac{d_G(u,A_{i+1})}{1+\epsilon}\ge \frac{d_G(y,A_{i+1})-d_G(u,y)}{1+\epsilon}\\
&\stackrel{\eqref{eq:piv}}{\ge}&\frac{\hat{d}(y,A_{i+1})}{(1+\epsilon)^2}-\frac{d_P(u,y)}{1+\epsilon}> b_v(y)-d_P(u,y)\\
&=& b_v(x)+w_H(x,y)-d_P(u,y) = b_v(x)+d_P(x,u)\\
&\ge& b_v(u)\ge d_G(u,v)~,
\end{eqnarray*}
where in the penultimate inequality we used the fact that the vertex $u$ knows $d_P(x,u)$, and thus it could have updated its distance estimate to $v$ as $b_v(x)+d_P(x,u)$ (note that it may have used a smaller estimate). Thus $u\in C(v)$ in this case, as required.
\end{proof}

The next claim proves the second inequality of (\ref{eq:app-cluster}).

\begin{claim}
For any $v\in V'$, $C_{6\epsilon}(v)\subseteq\tilde{C}(v)$.
\end{claim}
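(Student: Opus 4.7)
The plan is to exhibit a concrete $(1{+}\epsilon)$-approximate $v$-$u$ path in $G''\cup G$ along which the message originated at $v$ clears every inclusion check, arriving at $u$ with an estimate good enough to trigger joining. Let $\pi$ be a shortest $v$-$u$ path in $G$, let $v=x_0,\dots,x_m$ be its vertices lying in $V'$, listed in order, and let $P_{\mathrm{tail}}=\pi[x_m,u]$. Applying \claimref{claim:hit-path} to sub-paths of $\pi$ shows that consecutive $x_j,x_{j+1}$ and the pair $(x_m,u)$ are each within $B$ hops in $G$; in particular $d_{G'}(v,x_m)=d_G(v,x_m)$. Since $H$ is a $(\beta,\epsilon)$-hopset for $G'$, there is a $\beta$-hop path $Q$ in $G''$ from $v$ to $x_m$ of weight at most $(1+\epsilon)\,d_G(v,x_m)$. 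Let $P:=Q\circ P_{\mathrm{tail}}$ be the composite $v$-$u$ path; its total length is at most $(1+\epsilon)\,d_G(v,u)$.

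The key inequality is the following. Since $u\in C_{6\epsilon}(v)$ we have $d_G(u,A_{i+1})>(1+6\epsilon)\,d_G(v,u)$, so by \eqref{eq:piv} and the triangle inequality every vertex $w$ satisfies
\[
\hat d(w,A_{i+1})\;\ge\;d_G(w,A_{i+1})\;>\;(1+6\epsilon)\,d_G(v,u)-d_G(u,w).
\]
For any intermediate vertex $w$ of $P$ (whether a $V'$-hub on $Q$, an intermediate $V\setminus V'$-vertex on a $B$-bounded $G$-subexploration implementing an $E'$-edge of $Q$, a vertex on the $G$-path realizing a hopset edge of $Q$, or a vertex on $P_{\mathrm{tail}}$), the estimate $b_v(w)$ when the message first arrives is at most the prefix weight of $P$ up to $w$. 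This in turn equals the total length of $P$ minus the suffix length from $w$; since the latter is at least $d_G(w,u)$, we obtain $b_v(w)\le(1+\epsilon)\,d_G(v,u)-d_G(w,u)$. The tightest inclusion rule used along $P$ is the $(1+\epsilon)^2$-rule at the $V'$-vertices of $Q$, namely $(1+\epsilon)^2\,b_v(w)<\hat d(w,A_{i+1})$; substituting the two bounds and rearranging reduces it to
\[
\big[(1+\epsilon)^3-(1+6\epsilon)\big]\cdot d_G(v,u)\;<\;\big[(1+\epsilon)^2-1\big]\cdot d_G(u,w).
\]
For $\epsilon$ small enough that $(1+\epsilon)^3<1+6\epsilon$ (satisfied by the standing assumption $\epsilon\le 1/(48k^4)$), the LHS is negative while the RHS is non-negative, so the inequality holds unconditionally for every such $w$. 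The weaker $(1+\epsilon)$-rule used at non-$V'$ intermediate vertices follows a fortiori.

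Hence every intermediate vertex of $P$ forwards the message, so $u$ eventually receives an estimate $b_v(u)\le(1+\epsilon)\,d_G(v,u)$. Applying the key bound once more at $w=u$ and using $(1+\epsilon)^2<1+6\epsilon$ yields $(1+\epsilon)\,b_v(u)<\hat d(u,A_{i+1})$, i.e.\ the joining condition for $\tilde C(v)$; thus $u\in\tilde C(v)$, as required. The main obstacle the argument must navigate is that the hopset path $Q$ need not stay near $\pi$, so the naive approach of bounding $d_G(w,A_{i+1})$ via ``$w$ lies on $\pi$'' is unavailable for the intermediate $V'$-hubs of $Q$. The trick is to estimate the BF prefix weight as total-minus-suffix, which pairs cleanly with the generic triangle-inequality bound on $\hat d(w,A_{i+1})$ to produce the negative-LHS / non-negative-RHS inequality that closes the argument uniformly in~$w$.
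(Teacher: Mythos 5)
Your proof is correct and follows essentially the same approach as the paper's: pick the last $V'$-vertex on the shortest $v$--$u$ path (the paper's $u'$ plays the role of your $x_m$), take a $\beta$-hop hopset path from $v$ to it followed by the $B$-hop tail in $G$, and verify each inclusion condition by bounding the Bellman-Ford estimate at $w$ by the prefix weight and comparing against $\hat{d}(w,A_{i+1})$. The only difference is bookkeeping: where the paper first shows $u'\in C_{6\epsilon}(v)$ and then chains inequalities through $u'$ separately for the two phases, you bound the prefix uniformly as total-minus-suffix and compare against $d_G(u,A_{i+1})-d_G(u,w)$ in one shot, which is a modest streamlining of the same argument rather than a genuinely different route.
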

\begin{proof}
Let $u\in C_{6\epsilon}(v)$.  We would like to show that $u\in \tilde{C}(v)$. Consider the shortest path $P$ from $u$ to $v$ in $G$. Then by \claimref{claim:hit-path}, there is a vertex $u'\in V'$ on $P$ that is within $B$ hops from $u$. Notice that
\begin{equation}\label{eq:eue}
d_G(v,u')=d_G(v,u)-d_G(u,u')\le \frac{d_G(u,A_{i+1})-d_G(u,u')}{1+6\epsilon}\le\frac{d_G(u',A_{i+1})}{1+6\epsilon}~.
\end{equation}
Hence $u' \in C_{6\eps}(v)$ too.

We will show that the limited exploration originated at $v$ will reach $u'$, and in the final depth $B$ exploration it will reach $u$ and include it in $\tilde{C}(v)$.

Since $H$ is a $(\beta,\epsilon)$-hopset, there is a path $P'$ in $G''$ from $v$ to $u'$ that contains at most $\beta$ edges that satisfies
\begin{equation}\label{eq:uue}
d_{P'}(v,u')\le (1+\epsilon)d_{G'}(v,u')=(1+\epsilon)d_G(v,u')~.
\end{equation}
Let $z\in P'$ be any vertex on $P'$ that lies $t$ hops from $v$, $0 \le t \le \beta$.  Then after $t$ steps of Bellman-Ford exploration from $v$ we have that
\begin{eqnarray*}
b_z(v)&=&d_{P'}(v,z) = d_{P'}(v,u')-d_{P'}(z,u')\\
&\stackrel{\eqref{eq:uue}}{\le}&(1+\epsilon)d_G(v,u')-d_G(z,u')\stackrel{\eqref{eq:eue}}{\le}\frac{(1+\epsilon)d_G(u',A_{i+1})}{1+6\epsilon}-d_G(z,u')\\
&\le&\frac{d_G(u',A_{i+1})-d_G(z,u')}{1+4\epsilon}<\frac{d_G(z,A_{i+1})}{(1+\epsilon)^2}\le\frac{\hat{d}(z,A_{i+1})}{(1+\epsilon)^2}~.
\end{eqnarray*}
(We used that $\epsilon<1/5$.) We conclude that $z$ satisfies the inclusion condition for the exploration rooted at $v$, and forwards the message of $v$ onwards. In particular, by (\ref{eq:uue}), $b_v(u')\le d_{P'}(v,u')\le(1+\epsilon)d_G(v,u')$. In the final phase we make a Bellman-Ford exploration for $B$ rounds in $G$ from each vertex that received the  message of $v$. Thus, $u'$ will start such an exploration with distance estimate $b_v(u')$. Consider the subpath $Q\subseteq P$ from $u'$ to $u$. We have to show that every vertex on this path forwards the message of $v$, that is, that it satisfies the inclusion condition of the exploration of $v$. Let $y\in Q$ be such a vertex. Since this is a shortest path in $G$, we have
\begin{eqnarray*}
b_v(y)&\le& b_v(u')+d_Q(u',y)\le(1+\epsilon)d_G(v,u')+d_G(u',y)\\
&\le&(1+\epsilon)d_G(v,y) = (1+\epsilon)(d_G(v,u)-d_G(y,u))\\
&\stackrel{\eqref{eq:eps-clust}}{\le}&\frac{(1+\epsilon)d_G(u,A_{i+1})}{1+6\epsilon}-d_G(y,u)\le\frac{d_G(u,A_{i+1})-d_G(y,u)}{1+4\epsilon}\\
&\le&\frac{d_G(y,A_{i+1})}{1+4\epsilon}<\frac{\hat{d}(y,A_{i+1})}{1+\epsilon}~,
\end{eqnarray*}
as required.

\end{proof}

\subsection{Distributed Tree Routing with Small Memory}\label{sec:tree-route}

In this section we present our compact routing scheme for trees that can be computed in a distributed manner using small internal memory. In previous constructions of distributed routing schemes for trees \cite{EN16a,LPP16}, the internal memory was as high as $\sqrt{n}$, and it was also somewhat inefficient:  the label size is $O(\log^2n)$ and the routing tables are of size $O(\log n)$. Compare this to the classical \cite{TZ01-spaa} tree routing, which has label size $O(\log n)$ and routing tables of size $O(1)$. 

We follow the basic framework of previous works, by selecting a set $U\subseteq V$, such that each vertex is sampled to $U$ independently with probability $q$ ($q$ is a parameter, which we shall optimize later). Fix a tree $T$ on vertices $V(T)\subseteq V$ with root $z$. The vertices $U(T)=(U\cap V(T))\cup\{z\}$ partition the tree into subtrees, by removing the edges from each vertex in $U(T)$ to its parent. Each of the $|U(T)|$ subtrees is rooted in a vertex of $U(T)$. Denote by $T_w$ the subtree rooted at $w$. We also consider $T'$, the virtual tree on the vertices of $U(T)$, which is rooted at $z$, and contains an edge $(x,y)$ if the parent of $y$ lies in $T_x$. It is not hard to see (e.g., \cite{EN16a}) that whp the depth of each $T_w$ is $\tilde{O}(1/q)$,  and that $|U|\le O(qn)$.

In both \cite{EN16a,LPP16}, routing schemes were created for each $T_w$, and also a routing scheme for the virtual tree $T'$. This computation required large internal memory, since $z$ had to locally compute the scheme for $T'$. The inefficiency in the size was due to the fact that when routing in $T'$, traveling over a virtual edge $(x,y)$, one has to route in $T_x$ from $x$ to the parent of $y$. This seems to require storing additional routing information for this subtree, increasing both label and table size by a logarithmic factor. We overcome this issue by storing routing information only with respect to the actual tree, while applying pointer jumping techniques to quickly compute the full labels. However, we do not know how to construct exact tree routing with small memory. Fortunately, to implement our routing scheme for general graphs, it suffices to provide a {\em root-tree routing} scheme, where the routing is always  done via the root of the tree $T$, and not necessarily via the shortest path. (We stress that using larger memory, we can compute exact tree routing tables and labels within $\tilde{O}(\sqrt{n}+D)$ rounds, with label size $O(\log n)$ and routing tables of size $O(1)$, substantially improving previous results.)

Before describing our approach, let us briefly recall the Thorup-Zwick construction of tree routing. The idea is to assign to every (non-leaf) vertex $x\in T$ its {\em heavy child}, which is the child whose subtree has maximal size. Note that the subtree of any non-heavy child of $x$ contains at most half  of the vertices of  the  subtree  $T_x$  of $T$ rooted at $x$. For this reason, any path from the root $z$ to some $y\in T$ contains at most $\log n$ non-heavy edges. For an exact routing scheme they also conducts a DFS search in $T$ that assigns to each $y$ the DFS entry and exit times for its subtree.
The label of $y$ is these entry and exit times, and also the names of the non-heavy edges on the $z$ to $y$ path. The routing table $y$  consists of the DFS times, the name of the heavy child, and the name of the parent of $y$ in the tree. The routing towards a target $v$ in the tree is done as follows. At any intermediate vertex $y\in T$, if $v$ is not in the subtree rooted at $y$ (this can be checked via the DFS times), then $y$ forwards to its parent. If $v$ is in the subtree, $y$ inspects $v$'s label to see if an edge $(y,x)$ appears there.  If this is the case, it forwards to $x$, otherwise to its heavy child. Note that if one desires root-tree routing then there is no need to implement a DFS -- initially route to the parent until the root is reached, and then follow the path using heavy edges unless the label indicates otherwise.

Now we show how to implement our scheme in a distributed manner, and with $O(\log n)$ internal memory. 
First, every $w\in U(T)$ sends a message about itself to the vertices of $T_w$, informing them they are in $T_w$. Note that this message will arrive to all vertices in $T(U)$ who are children of $w$ in the virtual tree $T'$, so they will know their parent.
Next, for each $w\in U(T)$, every vertex in $T_w$ sends to its parent the size of the subtree rooted at it, beginning with the leaves. Every vertex that received messages from all its children, sums up the values and sends to its own parent. This can be done in parallel for all trees $T_w$ for $w\in U(T)$, and will take $\tilde{O}(1/q)$ (the bound on the height of each $T_w$) rounds.

For a vertex $v$ in a tree $T$, rooted at a vertex $z$, and a positive integer $h$, we say that a vertex $u$ is an {\em $h$-ancestor} of $v$, if $u$ lies on the unique $v-z$ path in $T$ at distance $h$ from $v$.

We would like that every $y\in T$ will know the entire size of the subtree of $T$ rooted at $y$. Initially, we compute this value only for the virtual vertices of $U(T)$. For  a vertex  $x \in U(T)$, its subtree size is exactly the sum of sizes of subtrees $T_w$ for $w$ that are in the subtree of $T'$ rooted at $x$. Note that computing these values from the leaves of $T'$ up will not be efficient, since every message on a virtual edge may require $O(D)$ rounds, and the depth of $T'$ may be as large as $qn$ (which will be approximately $\sqrt{n}$). Thus, this  results in $O(D\sqrt{n})$ rounds. To alleviate this issue, we use the following "pointer jumping" technique. Initially, set for $x\in U(T)$ the current size $s_x=|T_x|$, and its first ancestor $a_1(x)$ as its parent in $T'$ (and for the root $z$, set $a_1(z)=\bot$). For $i=0,1,\dots,\log n$ rounds, every vertex $x\in U(T)$ will broadcast in the $i$th round (using the BFS tree of $G$), the current size $s_x$ and the name of its $2^i$-ancestor $a_i(x)$ in $T'$. Then whenever $x$ hears a message that some $w\in U(T)$ broadcasts with $x=a_i(w)$, then $x$ adds $s_w$ to its current size $s_x$. In addition,
the vertex $x$ hears the message of $a_i(x)$, and it
 updates $a_{i+1}(x)$ as $a_i(a_i(x))$. (It could be the case that $a_i(a_i(x))=\bot$.  In this case, indeed, $a_{i+1}(x)=\bot$.) We claim that this process correctly computes for any $x\in U(T)$ the size of the subtree of $T$ rooted at $x$. It can be shown by induction on $i$, that before the $i$th round, $s_x$ is the size of the subtree rooted at $x$ that contains at most $2^i$ vertices of $U(T)$ on any root-leaf path.
There are $O(|U(T)|)\le\tilde{O}(qn)$ messages sent on each round for $\log n$ rounds. Hence, it will take $\tilde{O}(qn+D)$ rounds to implement this step.

In order to compute $s_y$, the size of the subtree of $T$ rooted at $y$, for all $y\in T$, every $x\in U(T)$ informs its parent in $T$ with the value $s_x$. Then once again, for every $w\in U(T)$ in parallel, the leaves of $T_w$ start to send to their parent their current size. This time, some of these leaves and internal vertices could be parents of vertices in $U(T)$, so these sizes are the actual subtree size in $T$. In $\tilde{O}(1/q)$ rounds, every vertex $y\in T$ will know $s_y$. After sending these values to the parents, every vertex can infer who is its heavy child.

The label $L(y)$ needed for root-tree routing is just the collection of edges $\{(u,v)\}$ that are on the $z-y$ path in $T$, such that $v$ is not the heavy child of $u$. Clearly, there can be at most $\log n$ such edges on this path, because the size of the subtree decreases by a factor of 2 for every non-heavy edge. If $y\in T_x$, we start by computing a partial label that contains non-heavy edges on the path from $x$ to $y$. This can be done by initializing $L(x)=\emptyset$, and starting at $x$, any vertex $u\in T_x$ which received a label $L(u)$, sends $L(u)$ to its heavy child, and $L(u)\cup\{(u,v)\}$ for any non-heavy child $v$. These labels are also sent to the children of $x$ in $T'$ (recall that these are the vertices $T(U)$ whose $T$-parents belong to $T_x$). Once this computation is completed, every vertex $w\in T(U)$ knows the non-heavy edges on the path from $x$, its parent in $T'$, to $w$. We again apply pointer jumping to compute the full labels. For $i=0,1,\dots,\log n$, every vertex of $U(T)$ will broadcast in the $i$th round its current label. In each round, when $x$ hears the message from its $2^j$-ancestor $a_j(x)$ (recall that $x$ computed previously its $2^j$-ancestors, for all $j = 0,1,\ldots,\log n$, and it stored them in its internal memory), it will update $L(x)\leftarrow L(a_j(x))\cup L(x)$. Once again, it can be proved by induction on $i$ that before the $i$th round, every $x\in U(T)$ knows all the non-heavy edges on the path in $T$ from $a_i(x)$ to $x$ (or from the root $z$ to $x$ if $a_i(x)=\bot$). Since every label has size $O(\log n)$, this will require $\tilde{O}(qn+D)$ rounds.
Finally, in another $\tilde{O}(1/q)$ rounds, each $x\in U(T)$ sends its updated label $L(x)$ to every vertex $y\in T_x$, and they update their label by appending $L(x)$.

If one desires a routing scheme for a single tree, just take $q=1/\sqrt{n}$, so the running time will be $\tilde{O}(\sqrt{n}+D)$. If we desire to compute a routing scheme in parallel for multiple trees, but have the guarantee that every $v\in V$ belongs to at most $s$ trees, then we can use the argument as in \cite{EN16} to obtain running time $\tilde{O}(\sqrt{s\cdot n}+D)$ (rather than the naive $\tilde{O}(s\cdot\sqrt{n}+D)$). We conclude by formally summarizing our result.

\begin{theorem}\label{thm:tree-route}
For any tree $T$ on $n$ vertices, lying in a network with hop-diameter $D$, there exists a distributed algorithm in the CONGEST model running in $\tilde{O}(\sqrt{n}+D)$ rounds, that computes a root-tree routing scheme with label size $O(\log n)$ and routing tables of size $O(1)$, such that every vertex uses only $O(\log n)$ words of memory throughout the computation.

Moreover, if there are no restriction on the memory used throughout the computation, then exact tree routing tables of size $O(1)$ and labels of size $O(\log n)$ can be computed in $\tO(\sqrt{n} + D)$ time.

In addition, given a network with $n$ vertices and a set of trees so that each vertex is contained in at most $s$ trees, one can compute a root-tree routing scheme as above for all trees in parallel, within $\tilde{O}(\sqrt{s\cdot n}+D)$ rounds, while using memory $O(s\cdot\log n)$ at each vertex.

\end{theorem}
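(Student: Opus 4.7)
The plan is to follow the algorithmic pipeline laid out in Section~\ref{sec:tree-route} and verify that each stage runs in $\tilde{O}(\sqrt{n}+D)$ rounds with only $O(\log n)$ words of local memory. First I would set $q=1/\sqrt{n}$ and sample $U\subseteq V$ by independent Bernoulli trials (no communication). A standard Chernoff/union-bound argument gives, with high probability, $|U|=\tilde{O}(\sqrt{n})$ and every subtree $T_w$ (for $w\in U(T)$) has depth $\tilde{O}(\sqrt{n})$; these are the only probabilistic facts needed. The virtual tree $T'$ on $U(T)$ together with the $T_w$'s then has the property that every $T$-path decomposes into a prefix/suffix of length $\tilde{O}(\sqrt{n})$ inside the respective $T_w$'s and a middle piece that lives in $T'$.

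Next I would establish the three ``local-subtree'' passes, each taking $\tilde{O}(1/q)=\tilde{O}(\sqrt{n})$ rounds: (i) a downward broadcast in which each $w\in U(T)$ informs the vertices of $T_w$ of their host subtree (and, implicitly, tells each $T'$-child $w'\in U(T)$ who its $T'$-parent is); (ii) an upward aggregation within each $T_w$ of the local subtree sizes $|T_w|$; (iii) later, a downward broadcast of both global subtree sizes and labels back into the $T_w$'s. All three passes proceed leaves-up or root-down and the children-synchronization is standard; correctness is immediate from the depth bound on $T_w$.

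The non-trivial piece is the global computation over $T'$. I would argue that both the subtree-size computation and the label computation can be performed by a single pointer-jumping template: each $x\in U(T)$ maintains a current ``accumulator'' ($s_x$, or the partial label $L(x)$) and $O(\log n)$ ancestor pointers $a_0(x),a_1(x),\dots,a_{\log n}(x)$ in $T'$. In round $i\in\{0,1,\dots,\log n\}$ every $x\in U(T)$ broadcasts the pair $(a_i(x),\text{accumulator}_x)$ over a BFS tree of $G$; vertices update their accumulators and compute $a_{i+1}(x)=a_i(a_i(x))$ from the broadcasts. By induction on $i$, after round $i$ the accumulator at $x$ reflects the contribution of every $T'$-ancestor within $2^i$ hops in $T'$, so after $\log n$ rounds every $x\in U(T)$ holds the correct global quantity. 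Each round involves $|U(T)|=\tilde{O}(\sqrt{n})$ broadcasts of $O(\log n)$-word messages on a BFS tree of depth $D$, pipelined in $\tilde{O}(\sqrt{n}+D)$ rounds; the memory footprint at each node is $O(\log n)$ since only the ancestor list, the current accumulator, and (when forwarding) a constant-size routing table are stored. The heavy-child decision at every $y\in T$ then reduces to comparing the children's subtree sizes once pushed down, and the Thorup--Zwick label consisting of at most $\log n$ non-heavy edges is exactly what the label-aggregation above computes; routing always via the root avoids the need for DFS timestamps, which is what buys us the $O(1)$-word tables. The exact-routing variant (second sentence of the theorem) augments this with a DFS-interval computation done with the same pointer-jumping template (propagating interval endpoints instead of sizes), which requires storing the $O(\sqrt{n})$ intervals at the root and therefore violates the memory bound, but not the round bound.

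Finally, for the multi-tree version, I would observe that each $v\in V$ participates in at most $s$ trees, so when the above algorithm is run in parallel for all trees the congestion on any BFS edge scales by a factor of $s$ while the total number of broadcasts scales by $s$ as well. Applying the pipelining argument of \cite{EN16} (the $\sqrt{sn}$ balancing between work and depth) yields the claimed $\tilde{O}(\sqrt{sn}+D)$ round bound with memory $O(s\log n)$ per vertex. The main obstacle I anticipate is the careful bookkeeping in the pointer-jumping induction: one must verify that the accumulator update rule (summing $s_w$ when $x=a_i(w)$ for size, and unioning $L(a_j(x))$ with $L(x)$ for labels) is well-defined even when some $a_j(x)=\bot$, and that no double-counting occurs across the $\log n$ rounds — this is where the classical ``prefix-doubling'' invariant must be stated precisely and proved by induction.
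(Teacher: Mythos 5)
Your proposal follows essentially the same approach as the paper: sample $U$ with probability $q=1/\sqrt{n}$, decompose $T$ into low-depth subtrees $T_w$ plus a virtual tree $T'$ on $U(T)$, perform local leaves-up/root-down passes within each $T_w$, and use prefix-doubling (pointer jumping) over $T'$ broadcast via a BFS tree of $G$ to compute global subtree sizes and labels, with the $\sqrt{sn}$ pipelining of \cite{EN16} for the multi-tree case. One small slip worth fixing: in the subtree-size pass the inductive invariant should be stated in terms of $T'$-\emph{descendants} (the vertex $x$ accumulates $s_w$ for descendants $w$ with $a_i(w)=x$), not ancestors; your update rule in the final paragraph already has the correct direction, so this is a statement error rather than a gap.
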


\section*{Acknowledgements}

We wish to thank Christoph Lenzen for raising to us the problem of distributed routing with small individual memory requirements, and for permitting us to use a quotation from \cite{L16}.

\bibliographystyle{alpha}
\bibliography{hopset}

\end{document}